\documentclass{lmcs} 
\pdfoutput=1

\usepackage{lastpage}
\lmcsdoi{16}{1}{9}
\lmcsheading{}{\pageref{LastPage}}{}{}%
{May~09,~2018}{Feb.~10,~2020}{}

\keywords{Function complexity classes, descriptive complexity}
\ACMCCS{
[{\bf Theory of computation}]:~Logic---Finite Model Theory
[{\bf Theory of computation}]:~Computational complexity and cryptography---Complexity classes}

\usepackage{hyperref}
\usepackage{tikz}
\usepackage{cite}
\usepackage{amsmath,amsfonts,amssymb,mathabx}
\usepackage{stmaryrd}
\usepackage[textwidth=2cm,textsize=small]{todonotes}
\usepackage{enumitem}

\usepackage{framed}
\renewenvironment{leftbar}[1][\hsize]
{%
	\MakeFramed{\hsize#1\advance\hsize-\width\FrameRestore}%
}
{\endMakeFramed}

\usepackage{mathrsfs}

\usetikzlibrary{chains,fit,shapes}
\usetikzlibrary{arrows,positioning} 

\newcommand{\op}[1]{\operatorname{#1}}


\newcommand{\bbN}{\mathbb{N}}

\newcommand{\bbZ}{\mathbb{Z}}


\newcommand{\cB}{\mathcal{B}}

\newcommand{\cG}{\mathcal{G}}



\newcommand{\fG}{\mathfrak{G}}

\newcommand{\fM}{\mathfrak{M}}


\newcommand{\bG}{\mathbf{G}}

\newcommand{\bM}{\mathbf{M}}


\newcommand\loge[1]{\Sigma_{#1}} 
\newcommand\logu[1]{\Pi_{#1}} 
\newcommand\logex[1]{\Sigma_{#1}\textsc{[FO]}} 
\newcommand\ehorn{\Sigma_2\textsc{-Horn}} 
\newcommand\uhorn{\Pi_1\textsc{-Horn}} 
\newcommand\E[1]{\#\Sigma_{#1}} 
\newcommand\U[1]{\#\Pi_{#1}} 
\newcommand\sfo{\#\fo} 
\newcommand\seso{\text{\sc \#($\eso$)}} 
\newcommand\sh[1]{\##1} 
\newcommand\QE[1]{\eqso(\Sigma_{#1})} 
\newcommand\QU[1]{\eqso(\Pi_{#1})} 

\def\dhsat{\textsc{DisjHornSAT}}
\def\shdhsat{\textsc{\#DisjHornSAT}}
\def\cpm{\textsc{\#PerfectMatching}}
\def\chsat{\textsc{\#HornSAT}}
\def\cdnf{\textsc{\#DNF}}
\def\ctdnf{\textsc{\#3-DNF}}

\def\ctcnf{\textsc{\#3-CNF}}

\def\csat{\textsc{\#SAT}}

\def\dotminus{\mathbin{\ooalign{\hss\raise1ex\hbox{.}\hss\cr
			\mathsurround=0pt$-$}}}

\def\F{\mathcal{F}}
\def\L{\mathcal{L}}
\def\cG{\mathcal{G}}
\def\N{\mathbb{N}}


\def\a{\bar{a}}




\def\t{\bar{t}}
\def\u{\bar{u}} 
\def\v{\bar{v}} 
\def\x{\bar{x}} 
\def\y{\bar{y}} 
\def\z{\bar{z}} 


\renewcommand{\iff}{\;\leftrightarrow\;}

\newcommand{\fa}[1]{\forall{#1}.\:}
\newcommand{\ex}[1]{\exists{#1}.\:}

\newcommand{\fo}{{\rm FO}}
\newcommand{\so}{{\rm SO}}
\newcommand{\lfp}{{\rm LFP}}

\newcommand{\clfp}[1]{[{\bf lsfp} \, #1]}
\newcommand{\fqfo}{{\rm FQFO}}
\newcommand{\fqso}{{\rm FQSO}}
\newcommand{\pth}{{\bf path} \,\, }
\newcommand{\tc}{{\rm TC}}

\newcommand{\pfp}{{\rm PFP}}
\newcommand{\eso}{\exists\so}
\newcommand{\first}{\operatorname{first}}
\newcommand{\last}{\operatorname{last}}
\newcommand{\succesor}{\operatorname{succ}}

\newcommand{\R}{\mathbf{R}}

\newcommand{\A}{\mathfrak{A}}

\newcommand{\qso}{{\rm QSO}}
\newcommand{\qsoz}{\qso_{\bbZ}}
\newcommand{\optqso}{{\rm OptQSO}}
\newcommand{\maxqso}{{\rm MaxQSO}}
\newcommand{\minqso}{{\rm MinQSO}}
\newcommand{\rqfo}{{\rm RQFO}}
\newcommand{\tqfo}{{\rm TQFO}}

\newcommand{\qfo}{{\rm QFO}}

\newcommand{\eqso}{\Sigma\qso}
\newcommand{\eqsoz}{\eqso_{\bbZ}}

\newcommand{\fv}{\mathbf{FV}}
\newcommand{\sv}{\mathbf{SV}}
\newcommand{\fs}{\mathbf{FS}}
\newcommand{\arity}{{\rm arity}}
\newcommand{\length}[1]{\vert #1 \vert}
\newcommand{\size}[1]{\vert #1 \vert}
\newcommand{\shp}{\text{\sc \#P}}
\newcommand{\ptime}{\text{\sc P}}
\newcommand{\np}{\text{\sc NP}}
\newcommand{\bpp}{\text{\sc BPP}}
\newcommand{\cspp}{\text{\sc SPP}}

\newcommand{\rp}{\text{\sc RP}}
\newcommand{\pspace}{\text{\sc PSPACE}}
\newcommand{\nlog}{\text{\sc NL}}
\newcommand{\ulog}{\text{\sc UL}}

\newcommand{\shl}{\text{\sc \#L}}
\newcommand{\spp}{\text{\sc SpanP}}
\newcommand{\spanl}{\text{\sc SpanL}}
\newcommand{\gp}{\text{\sc GapP}}
\newcommand{\optp}{\text{\sc OptP}}
\newcommand{\maxp}{\text{\sc MaxP}}
\newcommand{\minp}{\text{\sc MinP}}
\newcommand{\maxpb}{\text{\sc MaxPB}}
\newcommand{\minpb}{\text{\sc MinPB}}
\newcommand{\fp}{\text{\sc FP}}
\newcommand{\totp}{\text{\sc TotP}}

\newcommand{\fpspace}{\text{\sc FPSPACE}}
\newcommand{\nfpspace}{\text{\sc FPSPACE(poly)}}

\newcommand{\CC}{\mathscr{C}}

\newcommand{\FF}{\mathscr{F}}

\newcommand{\LL}{\mathscr{L}}

\newcommand{\enc}{{\rm enc}}
\newcommand{\ostr}{\text{\sc OrdStruct}}

\newcommand{\nat}{\mathbb{N}}

\newcommand{\add}{+}

\newcommand{\mult}{\cdot}


\newcommand{\sem}[1]{{\llbracket{}{#1}\rrbracket}}
\newcommand{\pa}[1]{\Pi{#1}.\,}

\renewcommand{\sa}[1]{\Sigma{#1}.\,}

\newcommand{\maxa}[1]{\operatorname{Max}{#1}.\,}
\newcommand{\mina}[1]{\operatorname{Min}{#1}.\,}
\newcommand{\clique}{\operatorname{clique}}

\tikzset{
	defaultstyle/.style={>=stealth,semithick, auto,font=\small,
		initial text= {},
		initial distance= {3.5mm},
		accepting distance= {3.5mm}},
	accepting/.style=accepting by arrow,
	nstate/.style={circle, semithick,inner sep=1pt, minimum size=4mm}}

\tikzset{
	rect/.style={
		rectangle,
		rounded corners,
		draw=black, 
		thick,
		text centered},
	rectw/.style={
		rectangle,
		rounded corners,
		draw=white, 
		thick,
		text centered},
	sq/.style={
		rectangle,
		draw=black, 
		thick,
		text centered},
	sqw/.style={
		rectangle,
		draw=white, 
		thick,
		text centered},
	arrout/.style={
		->,
		-latex,
		thick,
	},
	arrin/.style={
		<-,
		latex-,
		thick,
		El         },
	arrd/.style={
		<->,
		>=latex,
		thick,
	},
	arrw/.style={
		thick,
	}
}

\newcommand{\tmo}{\text{\rm \#output}}
\newcommand{\tma}{\text{\rm \#accept}}
\newcommand{\tmr}{\text{\rm \#reject}}

\newcommand{\tmt}{\text{\rm \#total}}

\newcommand{\support}{\text{\rm supp}}
\newcommand{\supp}{\text{\rm supp}}
\newcommand{\val}{\text{\rm val}}



\theoremstyle{plain} 



\begin{document}

\title[Descriptive Complexity for Counting Complexity Classes]{Descriptive Complexity for Counting Complexity Classes}

\author[M.~Arenas]{Marcelo Arenas}	

\author[M.~Munoz]{Martin Mu\~noz}	
\address{Pontificia Universidad Cat\'olica de Chile \& IMFD Chile}	
\email{marenas@ing.puc.cl}  
\email{mmunos@uc.cl}  
\email{cristian.riveros@uc.cl}  

\author[C.~Riveros]{Cristian Riveros}	





\begin{abstract}
  \noindent Descriptive Complexity has been very successful in characterizing complexity classes of decision problems in terms of the properties definable in some logics. However, descriptive complexity for counting complexity classes, such as $\fp$ and $\shp$, has not been systematically studied, and it is not as developed as its decision counterpart. In this paper, we propose a framework based on Weighted Logics to address this issue. Specifically, by focusing on the natural numbers we obtain a logic called Quantitative Second Order Logics (QSO), and show how some of its fragments can be used to capture fundamental counting complexity classes such as $\fp$, $\shp$ and $\fpspace$, among others. We also use QSO to define a hierarchy inside $\shp$, identifying counting complexity classes with good closure and approximation properties, and which admit natural complete problems. Finally, we add recursion to QSO, and show how this extension naturally captures lower counting complexity classes such as $\shl$.
\end{abstract}

\maketitle

\section{Introduction}

%
%
%
%
%
%
%
%

The goal of descriptive complexity is to measure the complexity of a problem in terms of the logical constructors needed to express it~\cite{immerman1999descriptive}. 
The starting point of this branch of complexity theory is Fagin's theorem~\cite{F75}, which states that $\np$ is equal to existential second-order logic. Since then, many more complexity classes have been characterized in terms of logics (see \cite{G07} for a survey) and descriptive complexity has found a variety of applications in different areas~\cite{immerman1999descriptive, L04}.
For instance, Fagin's theorem was the key ingredient to define the class {\sc MaxSNP}~\cite{PY91}, which was later shown to be a fundamental class in the study of hardness of approximation \cite{ALMSS98}. 
It is important to mention here that the definition of {\sc MaxSNP} would not have been possible without the machine-independent point of view of descriptive complexity, as pointed out in~\cite{PY91}.

Counting problems differ from decision problems in that the value of a function has to be computed.
More generally, a counting problem corresponds to computing a function $f$ from a set of instances (e.g.\ graphs, formulae, etc) to natural numbers.\footnote{This value is usually associated to counting the number of solutions
	in a search problem, but here we consider a more general definition.} 
The study of counting problems has given rise to a rich theory of counting complexity classes \cite{HV95,F97,arora2009computational}. Some of these classes are natural counterparts of some classes of decision problems; for example, $\fp$ 
is the class of all functions that can be computed in polynomial time, 
the natural counterpart of $\ptime$.
However, other function complexity classes have emerged from the need to prove that some functions are difficult to compute, even though their decision counterparts can be solved efficiently. This is the case of the class $\shp$, a counting complexity class introduced in \cite{Valiant79} to prove that natural problems like counting the number of satisfying assignments of a propositional formula in DNF or the number of perfect matchings of a bipartite graph~\cite{Valiant79} are difficult, namely, $\shp$-complete, even though their decision counterparts can be solved in polynomial time.
Starting from $\shp$,
many more natural 
counting complexity classes have been defined, such as 
$\shl$, $\spp$ and $\gp$~\cite{HV95,F97}.

Although counting problems play a prominent role in computational complexity, descriptive complexity for this type of problems has not been systematically studied and it is not as developed as for the case of decision problems. Insightful characterizations of $\shp$ and some of its extensions have been provided \cite{SalujaST95,ComptonG96}. However, these characterizations do not define function problems in terms of a logic, but instead in terms of some counting problems associated to a logic like $\fo$. Thus, it is not clear how these characterizations can be used to provide a general descriptive complexity framework for counting complexity classes like $\fp$ and $\fpspace$ (the class of functions computable in polynomial space). 

In this paper, we propose to study the descriptive complexity of counting complexity classes in terms of Weighted Logics (WL)~\cite{DrosteG07}, a general logical framework that combines Boolean formulae (e.g.\ in $\fo$ or $\so$) with operations over a fixed semiring (e.g.\ $\bbN$). 
Specifically, we propose to restrict WL to the natural numbers as a fixed semiring, calling this restriction Quantitative Second Order Logics (QSO), and study its expressive power for defining counting complexity classes over ordered structures. 
As a proof of concept, we show that natural syntactical fragments of $\qso$ capture counting complexity classes like $\shp$, $\spp$, $\fp$ and $\fpspace$.
Furthermore, by slightly extending the framework we can prove that $\qso$ can also capture classes like $\gp$ and $\optp$, showing the robustness of our approach.

The next step is to use the machine-independent point of view of $\qso$ to search for subclasses of $\shp$ with some fundamental properties.
The question here is, what properties are desirable for a subclass of $\shp$?
First, it is desirable to have a class of counting problems whose associated decision versions are tractable, in the sense that one can decide in 
polynomial time whether the value of the function is greater than $0$. 
In fact, this requirement is crucial in order to find
efficient approximation algorithms for a given function (see Section~\ref{sec:syntactic}).
Second, we expect that the class is closed under basic arithmetical operations like sum, multiplication and subtraction by one. 
This is a common topic for counting complexity classes; for example, it is known that $\shp$ is not closed under subtraction by one (under some complexity-theoretical assumption). 
Finally, we want a class with natural complete problems, which characterize all problems in it.

In this paper, we give the first results towards defining subclasses of $\shp$ that are robust in terms of existence of efficient approximations, having good closure properties, and existence of natural complete problems. 
Specifically, we introduce a syntactic hierarchy inside $\shp$, called $\eqso(\fo)$-hierarchy, and we show that it is closely related to the $\fo$-hierarchy introduced in~\cite{SalujaST95}. 
Looking inside the $\eqso(\fo)$-hierarchy, we propose the class $\eqso(\logex{1})$ and show that every function in it has a tractable associated decision version, and it is closed under sum, multiplication, and subtraction by one.
Unfortunately, it is not clear whether this class admits 
a natural complete problem.
Thus, 
we also introduce a Horn-style syntactic class, inspired by~\cite{G92},
that has tractable associated decision versions and a natural complete problem.

After studying the 
structure of $\shp$, we move beyond $\qso$ by introducing new quantifiers. 
By adding variables for functions on top of $\qso$, we introduce a quantitative least fixed point operator to the logic. 
Adding finite recursion to a numerical setting is subtle since functions over natural numbers can easily diverge without finding any fixed point. 
By using the support of the functions, we give a natural halting condition that generalizes the least fixed point operator of Boolean logics.
Then, with a quantitative recursion at hand we show how to capture $\fp$ from a different perspective and, moreover, how to restrict recursion to capture lower complexity classes 
such as~$\shl$, the counting version of $\nlog$.

It is important to mention that this paper is an extension of the conference article \cite{AMR17}. In this version, we have included the complete proofs of all the results in the paper, paying special attention in showing the main techniques used to establish them. Besides, we have simplified some of the terminology used in \cite{AMR17}, with the goal of presenting the main notions studied in the paper in a simple way. 

\smallskip
\smallskip

\noindent{\bf Organization.} The main terminology used in the paper is given in Section~\ref{sec:preliminaries}. Then the logical framework is introduced in Section~\ref{sec:logic}, and it is used to capture standard counting complexity classes in Section~\ref{sec:complexity}. The structure of $\shp$ is studied in Section~\ref{sec:syntactic}. Section~\ref{sec:beyond} is devoted to define recursion in $\qso$, and to show how to capture classes below $\fp$. 
Finally, we give some concluding remarks in Section~\ref{sec:conclusions}.


\section{Preliminaries} \label{sec:preliminaries}


\subsection{Second-order logic, LFP and PFP}
A relational signature $\R$ (or just signature) is a finite set $\{R_1, \ldots, R_k\}$, where each $R_i$ ($1 \leq i \leq k$) is a relation name with an associated arity greater than 0, which is denoted by $\arity(R_i)$. A finite structure over $\R$ (or just finite $\R$-structure) is a tuple $\A = \langle A, R_1^\A, \ldots, R_k^\A \rangle$ such that $A$ is a finite set and $R_i^\A \subseteq A^{\arity(R_i)}$ for every $i \in \{1, \ldots, k\}$. Further, an $\R$-structure $\A$ is said to be ordered if $<$ is a binary predicate name in $\R$ and $<^\A$ is a linear order on $A$.
We denote by $\ostr[\R]$ the class of all finite ordered $\R$-structures. 
In this paper we only consider finite ordered structures, so we will usually omit the words finite and ordered when referring to them.

From now on, assume given disjoint infinite sets $\fv$ and $\sv$ of first-order variables and second-order variables, respectively. Notice that every variable in $\sv$ has an associated arity, which is denoted by $\arity(X)$. Then given a  signature $\R$, the set of second-order logic formulae ($\so$-formulae) over $\R$ is given by the following grammar:
\begin{align*}\ 
	\varphi \ &:= \ x = y \ \mid \ R(\bar u) \ \mid \ \top  \ \mid\  
	X(\bar v)  \ \mid
	\neg \varphi \ \mid\ 
	(\varphi \vee \varphi) \ \mid\ 
	\ex{x} \varphi \ \mid\ 
	\ex{X} \varphi
 \end{align*}
where $x,y \in \fv$, $R \in \R$, $\bar u$ is a tuple of (not necessarily distinct) variables from $\fv$ whose length is $\arity(R)$, $\top$ is a reserved symbol to represent a tautology, $X \in \sv$, $\bar v$ is a tuple of (not necessarily distinct) variables from $\fv$ whose length is $\arity(X)$, and $x \in \fv$. 



We assume that the reader is familiar with the semantics of $\so$, so we only introduce here some notation that will be used in this paper. 
Given a signature $\R$ and an $\R$-structure $\A$ with domain $A$, a first-order assignment $v$ for $\A$ is a total function from $\fv$ to $A$, while a second-order assignment $V$ for $\A$ is a total function with domain $\sv$ that maps each $X \in \sv$ to a subset of $A^{\arity(X)}$. Moreover, given a first-order assignment $v$ for $\A$, $x \in \fv$ and $a \in A$, we denote by $v[a/x]$ a first-order assignment such that $v[a/x](x) = a$ and $v[a/x](y) = v(y)$ for every $y \in \fv$ distinct from $x$. Similarly, given a second-order assignment $V$ for $\A$, $X \in \sv$ and $B  \subseteq A^{\arity(X)}$, we denote by $V[B/X]$ a second-order assignment such that $V[B/X](X) = B$ and $V[B/X](Y) = V(Y)$ for every $Y \in \sv$ distinct from $X$. We use notation $(\A, v, V) \models \varphi$ to indicate that structure $\A$ satisfies $\varphi$ under $v$ and $V$.

In this paper, we consider several fragments or extensions of $\so$ like first-order logic~($\fo$), least fixed point logic (LFP) and partial fixed point logic (PFP) \cite{L04}. Moreover, for every $i \in \N$, we consider the fragment $\Sigma_i$ (resp., $\Pi_i$) of $\fo$, which is the set of $\fo$-formulae of the form 
$\exists \bar x_1 \forall \bar x_2 \cdots \exists \bar x_{i-1} \forall \bar x_{i} \, \psi$ (resp., 
$\forall \bar x_1 \exists \bar x_2 \cdots \forall \bar x_{i-1} \exists \bar x_{i} \, \psi$) if $i$ is even, and of the form
$\exists \bar x_1 \forall \bar x_2 \cdots \forall \bar x_{i-1} \exists \bar x_{i} \, \psi$ (resp., 
$\forall \bar x_1 \exists \bar x_2 \cdots \exists \bar x_{i-1} \forall \bar x_{i} \, \psi$) if $i$ is odd, where $\psi$ is a quantifier-free formula. Finally, we say that a fragment $\L_1$ is contained in a fragment $\L_2$, denoted by $\L_1 \subseteq \L_2$, if for every formula $\varphi$ in $\L_1$, there exists a formula $\psi$ in $\L_2$ such that $\varphi$ is logically equivalent to $\psi$.  Besides, we say that $\L_1$ is properly contained in $\L_2$, denoted by $\L_1 \subsetneq \L_2$, if $\L_1 \subseteq \L_2$ and $\L_2 \not\subseteq \L_1$.

\subsection{Counting complexity classes}

We consider several counting complexity classes in this paper, some of them are recalled here~(see \cite{F97,hemaspaandra2013complexity}).
$\fp$ is the class of functions $f \colon \Sigma^* \to \N$ computable in polynomial time, while $\fpspace$ is the class of functions $f \colon \Sigma^* \to \N$ computable in polynomial space. 
Given a nondeterministic Turing Machine (NTM) $M$, let $\tma_M(x)$ be the number of accepting runs of $M$ with input $x$. Then $\shp$ is the class of functions $f$ for which there exists a polynomial-time NTM $M$ such that $f(x) = \tma_M(x)$ for every input $x$, while $\shl$ is the class of functions $f$ for which there exists a logarithmic-space NTM $M$ such that $f(x) = \tma_M(x)$ for every input $x$.  Given an NTM $M$ with output tape, let $\tmo_M(x)$ be the number of distinct outputs of $M$ with input $x$ (notice that  $M$ produces an output if it halts in an accepting state). Then $\spp$ is the  class of functions $f$ for which there exists a polynomial-time NTM $M$ such that $f(x) = \tmo_M(x)$ for every input $x$. Notice that $\shp \subseteq \spp$, and this inclusion is believed to be strict. 





\section{A logic for quantitative functions} \label{sec:logic}

We introduce here the logical framework that we use for studying counting complexity classes. 
This framework is based on the framework of Weighted Logics (WL)~\cite{DrosteG07}  that has been used in the context of weighted automata for studying functions from words (or trees) to semirings. 
We propose here to use the framework of WL over any relational structure and to restrict the semiring to natural numbers. 
The extension to any relational structure will allow us to study general counting complexity classes and the restriction to the natural numbers will simplify the notation in this context (see Section~\ref{sec:previous} for a more detailed discussion).

Given a relational signature $\R$, the set of Quantitative Second-Order logic formulae (or just $\qso$-formulae) over $\R$ is given by the following grammar:
\begin{align}
\alpha \ &:= \ \varphi \ \mid \ s \ \mid \ (\alpha \add \alpha) \ \mid\ (\alpha \mult \alpha) \ \mid \sa{x} \alpha \ \mid \ \pa{x} \alpha \ \mid \ \sa{X} \alpha \ \mid \ \pa{X} \alpha \label{syntax} 
\end{align}
where $\varphi$ is an $\so$-formula over $\R$, $s \in \bbN$, $x \in \fv$ and $X \in \sv$. Moreover, if $\R$ is not mentioned, then $\qso$ refers to the set of $\qso$ formulae over all possible relational signatures.

The syntax of QSO formulae is divided in two levels. 
The first level is composed by $\so$-formulae over $\R$ (called Boolean formulae) and the second level is made by counting operators of addition and multiplication. 
For this reason, the quantifiers in $\so$ (e.g.\ $\exists x$ or $\exists X$) are called Boolean quantifiers and the quantifiers that make use of addition and multiplication (e.g.\ $\Sigma x$ or $\Pi X$) are called {\em quantitative quantifiers}.
Furthermore, $\Sigma x$ and $\Sigma X$ are called first- and second-order sum, whereas $\Pi x$ and $\Pi X$ are called first- and second-order product, respectively.
This separation between the Boolean and quantitative levels is essential for understanding the difference between the logic and the quantitative parts of the framework.
Furthermore, this will later allow us to parametrize both levels of the logic in order to capture different counting complexity classes.

\begin{table}
	\addtolength{\jot}{0.5em}
	\begin{align*}
	\sem{\varphi}(\A, v, V) & = 
	\begin{cases}
	1 & \mbox{if } (\A, v, V) \models \varphi \\
	0 & \mbox{otherwise}
	\end{cases}\\
	\sem{s}(\A, v, V) & = s \\
	\sem{\alpha_1 \add \alpha_2}(\A, v, V) & = \sem{\alpha_1}(\A, v, V) + \sem{\alpha_2}(\A, v, V)\\
	\sem{\alpha_1 \mult \alpha_2}(\A, v, V) & = \sem{\alpha_1}(\A, v, V) \cdot \sem{\alpha_2}(\A, v, V)\\ 
	\sem{\sa{x} \alpha}(\A, v, V) & = \displaystyle \sum_{a \in A} \sem{\alpha}(\A,v[a/x],V)\\
	\sem{\pa{x} \alpha}(\A, v, V) & = \displaystyle \prod_{a \in A} \sem{\alpha}(\A,v[a/x],V)\\
	\sem{\sa{X} \alpha}(\A, v, V) & = \displaystyle \sum_{B \subseteq A^{\arity(X)}} \sem{\alpha}(\A, v, V[B/X])\\
	\sem{\pa{X} \alpha}(\A, v, V) & = \displaystyle \prod_{B \subseteq A^{\arity(X)}} \sem{\alpha}(\A, v, V[B/X])
	\end{align*}
	\caption{The semantics of QSO formulae.}
	\label{tab-semantics}
\end{table}
Let $\R$ be a signature, $\A$ an $\R$-structure with domain $A$, $v$ a first-order assignment for $\A$ and $V$ a second-order assignment for $\A$. Then the \emph{evaluation} of a $\qso$-formula $\alpha$ over $(\A, v, V)$ is defined as a function $\sem{\alpha}$ that on input $(\A, v, V)$ returns a number in $\bbN$. Formally, the function $\sem{\alpha}$ is recursively defined in Table~\ref{tab-semantics}.
A $\qso$-formula $\alpha$ is said to be a \emph{sentence} if it does not have any free variable, that is, every variable in $\alpha$ is under the scope of a usual quantifier or a quantitative quantifier. It is important to notice that if $\alpha$ is a $\qso$-sentence over a signature $\R$, then for every $\R$-structure $\A$, first-order assignments $v_1$, $v_2$ for $\A$ and second-order assignments $V_1$, $V_2$ for $\A$, it holds that $\sem{\alpha}(\A, v_1, V_1) = \sem{\alpha}(\A, v_2, V_2)$.
Thus, in such a case we use the term $\sem{\alpha}(\A)$ to denote $\sem{\alpha}(\A, v, V)$, for some arbitrary first-order assignment $v$ for $\A$ and some arbitrary second-order assignment $V$ for $\A$. 
\begin{exa}\label{ex:cliques}
Let $\bG = \{E(\cdot,\cdot),<\}$ be the vocabulary for graphs and $\fG$ be an ordered $\bG$-structure encoding an undirected graph. 
Suppose that we want to count the number of triangles in $\fG$. Then this can be defined as follows:
\begin{align*}
\alpha_1 \ &:= \ \sa{x} \sa{y} \sa{z} ( E(x,y) \, \wedge \, E(y,z) \, \wedge \, E(z,x) \, \wedge x < y \, \wedge \, y < z )
\end{align*}
We encode a triangle in $\alpha_1$ as an increasing sequence of nodes $\{x, y, z\}$, in order to count each triangle once. Then the Boolean subformula  $E(x,y) \wedge E(y,z) \wedge E(z,x) \wedge
x < y \wedge y < z$ is checking the triangle property, by returning $1$ if $\{x, y, z\}$ forms a triangle in $\fG$ and $0$ otherwise.
Finally, the sum quantifiers in $\alpha_1$ aggregate all the values, counting the number of triangles in $\fG$.

Suppose now that we want to count the number of cliques in~$\fG$. We can define this function with the following formula:
\[
\alpha_2 \ := \ \sa{X} \clique(X),
\] 
where $\clique(X) := \fa{x} \fa{y} ((X(x) \wedge X(y) \wedge x \neq y)  \rightarrow E(x,y))$.
In the Boolean sub-formula of $\alpha_2$ we check whether $X$ is a clique, and with the sum quantifier we add one for each clique in $\fG$. 
But in contrast to $\alpha_1$, 
in $\alpha_2$ we need a second-order quantifier in the quantitative level.
This is according to the
complexity of evaluating each formula:
$\alpha_1$ defines an $\fp$-function while $\alpha_2$ defines a $\shp$-complete function. \qed
\end{exa}
\begin{exa}\label{exa-perm}
For an example that includes multiplication, let $\bM = \{M(\cdot,\cdot),<\}$ be a vocabulary for storing 0-1 matrices; in particular, a structure $\fM$ over $\bM$ encodes a 0-1 matrix $A$ as follows: if $A[i,j] = 1$, then $M(i,j)$ is true, otherwise $M(i,j)$ is false.
Suppose now that we want to compute the permanent of an $n$-by-$n$ 0-1 matrix $A$, that is:
\begin{align*}
\op{perm}(A) &= \sum_{\sigma \in S_n} \prod_{i=1}^n A[i, \sigma(i)],  
\end{align*}
where $S_n$ is the set of all permutations over $\{1, \ldots, n\}$.
The permanent is a fundamental function on matrices that has found many applications;
in fact, showing that this function is hard to compute was one of the main motivations behind the definition of the class $\shp$~\cite{Valiant79}.

To define the permanent of a 0-1 matrix in $\qso$, assume that for a binary relation symbol $S$, $\op{permut}(S)$ is an $\fo$-formula that is true if, and only if, $S$ is a permutation, namely, a total bijective function (the definition of $\op{permut}(S)$ is straightforward).
Then the following is a $\qso$-formula defining the permanent of a matrix:
\[
\alpha_3 := \sa{S} \op{permut}(S) \cdot \pa{x} (\ex{y} S(x,y) \wedge M(x,y)).
\]
Intuitively, the subformula $\beta(S) := \pa{x} (\ex{y} S(x,y) \wedge M(x,y))$ calculates the value \linebreak $\prod_{i=1}^n A[i, \sigma(i)]$ whenever $S$ encodes a permutation $\sigma$.
Moreover, the subformula $\op{permut}(S) \cdot \beta(S)$ returns $\beta(S)$ when $S$ is a permutation, and returns $0$ otherwise (i.e.\ $\op{permut}(S)$ behaves like a filter). 
Finally, the second order sum aggregates these values iterating over all binary relations and calculating the permanent of the matrix.
We would like to finish with this example by highlighting the similarity of $\alpha_3$ to the permanent formula. 
Indeed, an advantage of $\qso$-formulae is that the first- and second-order quantifiers in the quantitative level naturally reflect the operations used to define mathematical formulae. \qed
\end{exa}


\bigskip

\noindent
{\bf On restricting $\qso$.} 
We consider several fragments of $\qso$, which are obtained by restricting the use of the quantitative quantifiers or the syntax of the Boolean formulae, and we consider some extensions that are obtained by adding recursive operators to $\qso$.
\begin{itemize}
\item If we need to restrict the use of the quantitative quantifiers, then we replace $\qso$ by a term denoting the quantitative quantifiers that are allowed.
In particular, we denote by $\qfo$ the fragment of $\qso$ where only first-order sum and product are allowed. 
For instance, for the $\qso$-formulae defined in Example \ref{ex:cliques}, we have that $\alpha_1$ is in $\qfo$ and $\alpha_2$ is not.
Moreover, we denote by $\eqso$ the fragment of $\qso$ where only first- and second-order sums are allowed (that is, $\pa{x}$ and $\pa{X}$ are not allowed).
For example, $\alpha_1$ and $\alpha_2$ in Example \ref{ex:cliques} are formulae of $\eqso$, while $\alpha_3$ in Example \ref{exa-perm} is not. 

\item If we need to restrict the syntax of the Boolean formulae to a fragment $\LL$ of $\so$, then we use notation $\qso(\LL)$ to indicate that $\varphi$ in \eqref{syntax} has to be a formula in $\LL$. Moreover, every fragment of $\qso$ can be restricted by using the same idea. 
For example, $\qfo(\fo)$ is the fragment of $\qfo$ obtained by restricting $\varphi$ in \eqref{syntax} to be an $\fo$-formula, and likewise for $\eqso(\fo)$.
\end{itemize}

\bigskip

\noindent
In the following section, we use different fragments or extensions of $\qso$ to capture counting complexity classes. But before doing this, we show the connection of $\qso$ to previous frameworks for defining functions over relational structures.

\subsection{Previous frameworks for quantitative functions} \label{sec:previous}

In this section, we discuss some previous frameworks proposed in the literature and how they differ from our approach.
We start by discussing the connection between $\qso$ and weighted logics (WL)~\cite{DrosteG07}. 
As it was previously discussed, $\qso$ is a fragment of WL.
The main difference is that we restrict the semiring used in WL to natural numbers in order to study counting complexity classes.
Another difference between WL and our approach is that, to the best of our knowledge, this is the first paper to study weighted logics over general relational signatures, in order  to do descriptive complexity for counting complexity classes. 
Previous works on WL usually restrict the signature of the logic to strings, trees, and other specific structures (see \cite{droste2009handbook} for more examples), and they did not study the logic over general structures. 
Furthermore, in this paper we propose further extensions for $\qso$ (see Section~\ref{sec:beyond}) which differ from previous approaches in WL.

Another approach that resembles $\qso$ are logics with counting~\cite{Immerman82,IL90,E97,GG98,L04}, which include operators that extend $\fo$ with quantifiers that allow to count in how many ways a formula  is satisfied (the result of this counting is a value of a second sort, in this case the  natural numbers). 
A particularly influential logic in the area was proposed in \cite{Immerman82,IL90}, which is usually referred as FPC, and it is defined as an extension of first-order logic with a least fixed-point operator and counting quantifiers.
However, and in contrast with our approach, counting operators are usually used in these logics for checking Boolean properties over structures and not for producing values (i.e.\ they do not define a function).
In particular, we are not aware of any paper that uses this approach for capturing counting complexity classes.

Finally, the work in~\cite{SalujaST95,ComptonG96,0001HKV16} is of particular interest for our research. 
In~\cite{SalujaST95}, it was proposed to define a function over a structure by using free variables in an SO-formula; in particular, the function is defined by the number of instantiations of the free variables that are satisfied by the structure.
Formally, Saluja et. al \cite{SalujaST95} define a family of counting classes $\#\LL$ for a fragment $\LL$ of $\fo$. For a formula $\varphi(\bar{x},\bar{X})$ over $\R$, the function $f_{\varphi(\bar x, \bar X)}$ is defined as
$
f_{\varphi(\bar x, \bar X)}(\A) = \vert \{(\bar{a},\bar{A}) \mid \A\models\varphi(\bar{a},\bar{A})\}\vert
$
for every $\A\in\ostr[\R]$. Then a function $g\colon \ostr[\R]\to\nat$ is in $\#\LL$ if there exists a formula $\varphi(\bar{x},\bar{X})$ in $\LL$ such that $g = f_{\varphi(\bar x, \bar X)}$.
In~\cite{SalujaST95}, they proved several results about capturing counting complexity classes which are relevant for our work. We discuss and use these results in Sections~\ref{sec:complexity} and~\ref{sec:syntactic}.
Notice that for every formula $\varphi(\bar{x},\bar{X})$, it holds that $f_{\varphi(\bar{x},\bar{X})}$ is the same function as $\sem{\sa{\bar{X}} \sa{\bar{x}} \varphi(\bar{x},\bar{X})}$, that is, the approach in \cite{SalujaST95} can be seen as a syntactical restriction of our approach based on $\qso$. 
Thus, the advantage of our approach relies on the flexibility to define functions by alternating sum with product operators and, moreover, by introducing new quantitative operators (see Section~\ref{sec:beyond}).
Furthermore, we show in the next section how to capture some classes that cannot be captured by following the approach in~\cite{SalujaST95}.


\section{Counting under $\qso$} \label{sec:complexity}

In this section, we show that by syntactically restricting $\qso$ one can capture different counting complexity classes. 
In other words, by using $\qso$ we can extend the theory of descriptive complexity~\cite{immerman1999descriptive} from decision problems to counting problems. 
For this, we first formalize the notion of \emph{capturing} a complexity class of functions.

Fix a signature $\R = \{R_1, \ldots, R_k\}$ and assume that $\A$ is an ordered $\R$-structure with a domain $A = \{a_1, \ldots, a_n\}$, $R_k =\, <$, and $a_1 <^{\A} a_2 <^{\A} \ldots <^{\A} a_n$. For every $i \in \{1, \ldots, k-1\}$, define the encoding of $R_i^\A$, denoted by $\enc(R_i^\A)$, as the following binary string. Assume that $\ell = \arity(R_i)$ and consider an enumeration of the $\ell$-tuples over $A$ in the lexicographic order induced by $<$. 
Then let $\enc(R_i^\A)$ be a binary string of length $n^\ell$ such that the $i$-th bit of $\enc(R_i^\A)$ is 1 if the $i$-th tuple in the previous enumeration belongs to $R_i^\A$, and 0 otherwise. Moreover, define the encoding of $\A$, denoted by $\enc(\A)$, as the string~\cite{L04}:
$$
0^n \, 1 \, \enc(R_1^\A) \, \cdots \, \enc(R_{k-1}^\A).
$$

We can now formalize the notion of capturing a counting complexity class.
\begin{defi} \label{def:cap}
	Let $\FF$ be a fragment of $\qso$ and $\CC$ a counting complexity class. Then {\em  $\FF$ captures $\CC$ over ordered $\R$-structures} if the  following conditions hold:
	\begin{enumerate}
		\item for every $\alpha \in \FF$, there exists $f \in \CC$ such that $\sem{\alpha}(\A) = f(\enc(\A))$ for every $\A \in \ostr[\R]$. 
		
		\item for every $f \in \CC$, there exists $\alpha \in \FF$ such that   $f(\enc(\A)) = \sem{\alpha}(\A)$ for every $\A \in \ostr[\R]$.
	\end{enumerate} 
	Moreover, {\em $\FF$ captures $\CC$ over ordered structures} if $\FF$ captures~$\CC$ over ordered $\R$-structures for every signature~$\R$. \qed
\end{defi}
In Definition~\ref{def:cap}, function $f \in \CC$ and formula $\alpha \in \FF$ must coincide in all the strings that encode ordered $\R$-structures. Notice that this restriction is natural as we want to capture 
$\CC$ over a fixed set of structures (e.g.\ graphs, matrices).
Moreover, this restriction is fairly standard in descriptive complexity \cite{immerman1999descriptive,L04}, and it has also been used in the previous work on capturing complexity classes of functions \cite{SalujaST95,ComptonG96}.

What counting complexity classes can be captured with fragments of $\qso$?
For answering this question, it is reasonable to start with $\shp$, a well-known and widely-studied counting complexity class~\cite{arora2009computational}. 
Since $\shp$ has a strong similarity with $\np$, one could expect a ``Fagin-like'' Theorem~\cite{F75} for this class. 
Actually, in~\cite{SalujaST95} it was shown that the class $\sfo$ captures $\shp$.
In our setting, the class $\sfo$ is contained in $\eqso(\fo)$, which also captures $\shp$ as expected.
 
\begin{prop} \label{prop:capture-shP}
	$\eqso(\fo)$ captures $\shp$ over ordered structures.
\end{prop}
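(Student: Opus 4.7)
The plan is to establish the two containments in Definition~\ref{def:cap} separately. For the direction that every function in $\shp$ is definable by some $\eqso(\fo)$-sentence, I would invoke the theorem of Saluja, Subrahmanyam, and Thakur~\cite{SalujaST95} that $\sfo$ captures $\shp$: for every $f \in \shp$ there is an $\fo$-formula $\varphi(\bar x, \bar X)$ with $f(\enc(\A)) = |\{(\bar a, \bar B) : \A \models \varphi(\bar a, \bar B)\}|$ for all $\A \in \ostr[\R]$. As already observed in Section~\ref{sec:previous}, this count is exactly $\sem{\sa{\bar X}\sa{\bar x}\varphi(\bar x, \bar X)}(\A)$, and the right-hand side is an $\eqso(\fo)$-sentence. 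So this direction reduces to citing the known $\sfo$-characterization together with the embedding of $\sfo$ into $\eqso(\fo)$.

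For the reverse direction, I would show by structural induction on $\alpha$ that every $\eqso(\fo)$-formula computes a function in $\shp$. To make the induction go through cleanly I would phrase it for formulas with free variables: for any $\alpha(\bar x, \bar X) \in \eqso(\fo)$, the function that on input $(\enc(\A), \bar a, \bar B)$ returns $\sem{\alpha}(\A, v[\bar a/\bar x], V[\bar B/\bar X])$ is in $\shp$, where each $B_i \subseteq A^{\arity(X_i)}$ is presented by its characteristic bit string of length $|A|^{\arity(X_i)}$ (polynomial in $|A|$ since arities are fixed constants). The base cases are immediate: for $\varphi \in \fo$, the indicator function is in $\fp \subseteq \shp$ because $\fo$-model-checking on ordered structures with given assignments is in $\ptime$; constants $s \in \bbN$ are trivially in $\fp$.

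For the inductive cases, I would rely on the standard closure properties of $\shp$. Closure under addition handles $\alpha_1 \add \alpha_2$ via an NTM that nondeterministically branches between simulating the machine for $\alpha_1$ and the one for $\alpha_2$. Closure under multiplication handles $\alpha_1 \mult \alpha_2$ via an NTM that runs the two machines sequentially on independent nondeterministic branches, so that the total number of accepting runs is the product. The first-order sum $\sa{x}\alpha$ is realized by an NTM that guesses an element $a \in A$ using $\lceil\log|A|\rceil$ nondeterministic bits and then invokes the machine for $\alpha$; the second-order sum $\sa{X}\alpha$ is realized analogously by guessing a bit string of length $|A|^{\arity(X)}$ encoding some $B \subseteq A^{\arity(X)}$ and then invoking the machine for $\alpha$ under this extended assignment. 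Specializing the inductive statement to sentences yields the claim.

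I do not foresee a genuine obstacle: the argument is a clean structural induction built on well-known closure properties of $\shp$. The only point that demands a little care is the second-order sum step, where one must verify that the characteristic-bit-string encoding of $B \subseteq A^{\arity(X)}$ keeps the nondeterministic guess polynomially bounded, so that the inductive hypothesis yields an $\shp$-machine running in polynomial time on the extended input.
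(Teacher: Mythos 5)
Your proposal is correct and follows essentially the same route as the paper's proof: one direction cites the Saluja--Subrahmanyam--Thakur result that $\sfo$ captures $\shp$ together with the trivial embedding of $\sfo$ into $\eqso(\fo)$, and the other direction rests on polynomial-time $\fo$ model checking plus the closure of $\shp$ under (exponential) sums and products, which you merely spell out via explicit NTM constructions in a structural induction over formulas with free variables. The extra care you take with the characteristic-string encoding of second-order assignments is a sound way of making precise what the paper leaves implicit when it invokes these closure properties.
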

\begin{proof}
We briefly explain how the two conditions of Definition~\ref{def:cap} are satisfied. First, for condition (2) Saluja et al. proved that $\shp = \sfo$\cite{SalujaST95}. Hence, given that every function in $\sfo$ can be trivially defined as a formula in $\eqso(\fo)$ (see Section~\ref{sec:previous}), condition~(2) holds.
For condition (1), let $\alpha\in\eqso(\fo)$ over some signature $\R$. Given an $\fo$ formula $\varphi$, checking whether $\A\models\varphi$ can be done in deterministic polynomial time on the size of $\A$ and any constant function $s$ can be trivially simulated in $\shp$. These facts, together with the closures under exponential sum and polynomial product of $\shp$\cite{F97}, suffice to show that the function represented by $\alpha$ is in $\shp$.
%
 
\end{proof}

By following the same approach as~\cite{SalujaST95}, Compton and Gr\"adel~\cite{ComptonG96} show that $\seso$ captures $\spp$, where $\eso$ is the existential fragment of $\so$. As one could expect, if we parametrize $\eqso$ with $\eso$, we can also capture~$\spp$.
\begin{prop} \label{prop:capture-spanP}
	$\eqso(\eso)$ captures $\spp$ over ordered structures.
\end{prop}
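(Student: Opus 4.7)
The plan is to mirror the proof of Proposition~\ref{prop:capture-shP}, substituting $\shp$ by $\spp$ and $\fo$ by $\eso$, and using the characterization $\spp = \seso$ of~\cite{ComptonG96} in place of Saluja et al.'s $\shp = \sfo$. Conceptually, $\spp$ plays for $\eso$ exactly the role $\shp$ plays for $\fo$: counting witnesses of free (first- and second-order) variables in an $\eso$-formula. The two conditions of Definition~\ref{def:cap} will follow once we verify (a) that every $\seso$-function is syntactically expressible in $\eqso(\eso)$, and (b) that $\spp$ enjoys the closure properties needed to evaluate every construct of $\eqso(\eso)$.

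For condition~(2), I would invoke the result of Compton and Gr\"adel that $\spp = \seso$. Every function in $\seso$ arises from an $\eso$-formula $\varphi(\bar x, \bar X)$ as $f_{\varphi(\bar x,\bar X)}(\A) = |\{(\bar a,\bar A)\mid \A\models\varphi(\bar a,\bar A)\}|$; as noted after the discussion of \cite{SalujaST95} in Section~\ref{sec:previous}, this coincides with $\sem{\sa{\bar X}\sa{\bar x}\varphi(\bar x,\bar X)}$, and the resulting formula is plainly in $\eqso(\eso)$. Hence every $\spp$-function is captured by some $\eqso(\eso)$-sentence.

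For condition~(1), I would proceed by induction on the structure of an $\eqso(\eso)$-sentence $\alpha$ over $\R$, showing that $\sem{\alpha}(\cdot)$, viewed as a function of $\enc(\A)$, lies in $\spp$. The base cases are: constants $s\in\bbN$, realized by an NTM that branches into $s$ computation paths producing $s$ distinct outputs; and $\eso$-formulae $\varphi$, whose $0$/$1$ characteristic function lies in $\spp$ because, by Fagin's theorem, checking $\A\models\varphi$ is in $\np$, and an NTM that guesses the witnesses for $\varphi$ and outputs a fixed string upon verification has $\tmo = 1$ if $\A\models\varphi$ and $\tmo = 0$ otherwise. The inductive cases amount to standard closure arguments for $\spp$: given NTMs $M_1,M_2$ witnessing $\sem{\alpha_1},\sem{\alpha_2}$, build machines that (i)~nondeterministically choose between $M_1$ and $M_2$ and prepend a distinguishing bit to their outputs, yielding closure under $\add$; (ii)~run $M_1$ followed by $M_2$ and concatenate their outputs, yielding closure under $\mult$; (iii)~for $\sa{x}\alpha$, guess $a\in A$, simulate $M_\alpha$ under $v[a/x]$, and prepend $a$ to the output, so that distinct outputs across all branches total $\sum_a\sem{\alpha}(\A,v[a/x],V)$; and analogously (iv)~for $\sa{X}\alpha$, guess $B\subseteq A^{\arity(X)}$ in polynomial time (as $|B|$ is polynomial in $|\enc(\A)|$) and prepend $B$ to the output of $M_\alpha$ under $V[B/X]$. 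In each case the prefix ensures that outputs originating from different choices remain distinct, so $\tmo$ of the composite machine equals the intended value.

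The main obstacle, and the only place where $\spp$ behaves subtler than $\shp$, is the treatment of the Boolean base case and of the second-order sum: one must ensure outputs are genuinely distinct rather than merely accepting paths being counted. This is handled by the ``prepend the guess'' trick above; the fact that guesses are written explicitly onto the output tape guarantees injectivity of the labeling of accepting runs, which is exactly what $\tmo$ requires. Everything else reduces to bookkeeping analogous to the proof of Proposition~\ref{prop:capture-shP}.
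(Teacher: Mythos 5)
Your proposal is correct and follows essentially the same route as the paper: condition~(2) via the Compton--Gr\"adel characterization $\spp = \seso$ together with the syntactic inclusion into $\eqso(\eso)$, and condition~(1) via the fact that $\eso$ model checking is in $\np$ (hence its characteristic function is in $\spp$) combined with closure of $\spp$ under binary sum/product and first- and second-order sums. The only difference is presentational: the paper cites the closure properties of $\spp$ (exponential sum, polynomial product) from the literature, whereas you prove them inline with the ``prepend the guess'' constructions (where, in the product case, one should insert a delimiter when concatenating outputs to keep distinct pairs distinct), which is fine.
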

\begin{proof}
To prove condition (2), we use the fact that $\spp = \#(\eso)$. The condition holds using the same argument as in Proposition~\ref{prop:capture-shP}.
For condition (1), notice that given an $\eso$ formula $\varphi$, checking whether $\A\models\varphi$ can be done in non-deterministic polynomial time on the size of $\A$\cite{fagin1974generalized}. 
Therefore, a $\spp$ machine for $\varphi$ will simulate the non-deterministic polynomial time machine and produce the same string as output in each accepting non-deterministic run. Furthermore, any constant function $s$ can be trivially simulated in $\spp$ and, thus, condition (1) holds analogously to Proposition \ref{prop:capture-shP} since $\spp$ is also closed under exponential sum and polynomial product~\cite{OH93}.

\end{proof}

Can we capture $\fp$ by using $\# \LL$ for some fragment $\LL$ of $\so$? A first attempt could be based on the use of a fragment $\LL$ of $\so$ that captures either $\ptime$ or $\nlog$~\cite{G92}. Such an approach fails as $\# \LL$ can encode $\shp$-complete problems in both cases; in the first case, one can encode the problem of counting the number of satisfying assignments of a Horn  propositional formula, while in the second case one can encode the problem of counting the number of satisfying assignments of a 2-CNF propositional formula. A second attempt could then be based on considering a fragment $\LL$ of $\fo$. 
But even if we consider the existential fragment $\Sigma_1$ of $\fo$ the approach fails, as $\# \Sigma_1$ can encode $\shp$-complete problems like counting the number of satisfying assignments of a 3-DNF propositional formula\cite{SalujaST95}. One last attempt could be based on disallowing the use of second-order free variables in $\sfo$. But in this case one 
cannot capture exponential functions definable in $\fp$ such as~$2^n$.
Thus, it is not clear how to capture $\fp$ 
by following the approach proposed in~\cite{SalujaST95}. 
On the other hand, if we consider our framework and move out from $\eqso$, we have other alternatives for counting like first- and second-order products. In fact, the combination of $\qfo$ with $\lfp$ is exactly what we need to capture $\fp$.
\begin{thm} \label{theo:capture-fp}
	$\qfo(\lfp)$ captures $\fp$ over ordered structures.
\end{thm}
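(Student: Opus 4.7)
The plan is to prove two inclusions. For the direction $\qfo(\lfp)\subseteq\fp$, I would argue by structural induction on the $\qfo(\lfp)$-formula $\alpha$. Since $\lfp$-formulas over ordered structures are decidable in polynomial time by the Immerman--Vardi theorem, every Boolean sub-formula evaluates to $0$ or $1$ in polynomial time. A constant $s\in\bbN$ trivially has constant bit-length. The quantitative quantifiers then amplify values only mildly: $\sa{x}\beta$ is at most $n$ times the maximum of $\sem{\beta}$, and $\pa{x}\beta$ is at most $(\max\sem{\beta})^n$, where $n=|A|$. A routine induction on the nesting depth $d$ of quantitative quantifiers shows that $\sem{\alpha}$ is bounded by $c^{n^{O(d)}}$ for some constant $c$ depending only on $\alpha$, hence has polynomial bit-length. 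Polynomial-time evaluation follows by iterating over the $n$ substitutions for each quantified variable and combining the results using standard arithmetic on polynomially long integers.

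For $\fp\subseteq\qfo(\lfp)$, let $f\in\fp$ be computed by a deterministic polynomial-time Turing machine $M$. Fix a signature $\R$ and assume without loss of generality that on every input $\enc(\A)$ with $\A\in\ostr[\R]$ and $|A|=n$, the machine $M$ halts after exactly $n^c$ steps with the binary representation of $f(\enc(\A))$ occupying positions $0,\dots,n^c-1$ of the output tape (padding with zeros if necessary). I represent these positions by $c$-tuples of domain elements ordered lexicographically under~$<$; the relation $\bar q<_{\textrm{lex}}\bar p$ on such tuples is $\fo$-definable. Applying the Immerman--Vardi theorem again, the predicate ``the bit at output position $\bar p$ in the halting configuration of $M$ on $\enc(\A)$ equals $1$'' is captured by an $\lfp$-formula $\varphi_{O}(\bar p)$. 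I then define
\[
\alpha \ := \ \sa{\bar p}\ \varphi_{O}(\bar p)\,\mult\,\pa{\bar q}\bigl(2\,\mult\,(\bar q<_{\textrm{lex}}\bar p)\,\add\,\neg(\bar q<_{\textrm{lex}}\bar p)\bigr),
\]
where $\sa{\bar p}$ and $\pa{\bar q}$ are shorthand for iterated first-order sums and products over the components of $\bar p$ and $\bar q$. Each factor in the inner product equals $2$ when $\bar q<_{\textrm{lex}}\bar p$ and $1$ otherwise, so the product evaluates to $2^{\mathrm{rank}(\bar p)}$; summing these powers of two weighted by the output bits yields exactly the value encoded on the output tape, namely $f(\enc(\A))$.

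The main obstacle is constructing the $\lfp$-formula $\varphi_{O}$: one must encode $M$'s computation table (head position, state, tape contents) as inductively defined predicates over tuples of domain elements, and then project onto the output bit at time $n^c$. This is a standard Immerman--Vardi style simulation, but requires care in aligning the $c$-tuple encoding of tape positions with the lexicographic order used in the quantitative summation. Beyond this step, the argument is routine: the resulting $\alpha$ is a $\qfo(\lfp)$-sentence (no second-order quantitative quantifiers are needed), and combined with the upper bound it yields both conditions of Definition~\ref{def:cap}, establishing that $\qfo(\lfp)$ captures $\fp$ over ordered structures.
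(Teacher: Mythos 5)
Your proposal is correct and follows essentially the same route as the paper: the upper bound via the Immerman--Vardi theorem plus closure of $\fp$ under polynomially many sums and products, and the lower bound via an $\lfp$-definable bit predicate for $f(\enc(\A))$ combined with the formula $\sa{\bar p}\,\varphi_O(\bar p)\cdot\pa{\bar q}(\ldots)$, whose inner factor $2\cdot(\bar q<\bar p)+\neg(\bar q<\bar p)$ is exactly the paper's conditional count $((\bar q<\bar p)\mapsto 2)$ reconstructing the output bit by bit.
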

\begin{proof}
In this and the following proofs, we will reuse the symbol $<$ to denote the lexicographic order over same-sized tuples. 
Formally, for $\bar{x} = (x_1,\ldots,x_m)$ and $\bar{y} = (y_1,\ldots,y_m)$ we denote by $\bar{x} < \bar{y}$ the formula:
$$
\bigvee_{i = 1}^m \, \bigwedge_{j = 1}^{i-1} \, (x_j = y_j \wedge x_i < y_i).
$$
Similarly, we use $\bar{x} = \bar{y}$ to denote equality between tuples and $\bar{x} \leq \bar{y}$  to denote $\bar{x} < \bar{y} \vee \bar{x} = \bar{y}$.
We will also use some syntactic sugar in $\qso$ to simplify formulae. 
Specifically, we will use the {\em conditional count} symbol $(\varphi \mapsto \alpha)$ defined as $ (\varphi\cdot\alpha) + \neg\varphi$ for any Boolean formula $\varphi$ and any quantitative formula $\alpha$. 
Note that for each $\A \in \ostr[\R]$, and each first-order (second-order) assignment $v$ ($V$) over $\A$:
\[
\sem{(\varphi \mapsto \alpha)}(\A,v,V) = 
\begin{cases}
\sem{\alpha}(\A,v,V) &\text{if } (\A,v,V)\models\varphi,\\
1 &\text{otherwise}.
\end{cases}
\]
Furthermore, we use $\size{\A}$ to denote the size of an $\R$-structure $\A$.
Now we prove Theorem~\ref{theo:capture-fp}. 
For condition (1), recall that checking whether $\A\models\varphi$ for any $\lfp$ formula $\varphi$ can be done in deterministic polynomial time on the size of $\A$\cite{I83}. 
Furthermore, it is easy to check that $\fp$ is closed under polynomial sum and multiplication. 
We conclude then that any formula in $\qfo(\lfp)$ can be computed in $\fp$.	
For condition~(2), let $\R$ be a signature, $f\in \fp$
and $\ell\in\nat$ such that $\log_2\left(f(\enc(\A)) \right) \leq \size{\A}^\ell$ for every $\A\in\ostr[\R]$ (i.e.\ $\size{\A}^\ell$ is an upper bound for the output size of $f$ over $\A$).
Consider the language:
\[
L = \{(\A,\bar{a})\mid \bar{a} \in A^l \text{ and the } \bar{a}\text{-th bit of }f(\enc(\A))\text{ is 1}\}.
\]
where $\bar{a}$ encodes a number by following the lexicographic order over $A^l$.
Clearly, the language $L$ is in $\ptime$ and by \cite{I83} there exists a formula $\Phi(\bar{x})$ in $\lfp$ such that $\A\models\Phi(\bar{a})$ if, and only if, $(\A,\bar{a})\in L$. 
We use then the following formula to encode $f$:
$$
\alpha = \sa{\bar{x}} \Phi(\bar{x})\cdot \pa{\bar{y}}(\bar{y} < \bar{x}) \mapsto 2)$$
Note that the subformula $\pa{\bar{y}}(\bar{y} < \bar{x}) \mapsto 2$ takes the value $2^m$ if there exist $m$ tuples in $A^{\ell}$ that are smaller than $\bar{x}$. Adding these values for each $\bar{a}\in A^{\ell}$ gives exactly $f(\enc(\A))$. 
In other words, $\Phi(\bar{x})$ simulates the behavior of the $\fp$-machine and the formula $\alpha$ reconstructs the binary output bit by bit.
Then $\alpha$ is in $\qfo(\lfp)$ and $\sem{\alpha}(\A) = f(\enc(\A))$.

\end{proof}

At this point it is natural to ask whether one can extend the previous idea to capture $\fpspace$~\cite{Ladner89}, the class of functions computable in polynomial space. 
Of course, for capturing this class one needs a logical core powerful enough, like $\pfp$, for simulating the run of a polynomial-space TM.
Moreover, 
one also needs more powerful quantitative quantifiers as functions like $2^{2^n}$ can be computed in polynomial space,
so $\eqso$ is not enough for the quantitative layer of a logic for $\fpspace$.
In fact, by considering second-order product we obtain the fragment $\qso(\pfp)$ that captures $\fpspace$. 
\begin{thm} \label{theo:capture-fpspace}
	$\qso(\pfp)$ captures $\fpspace$ over ordered structures.
\end{thm}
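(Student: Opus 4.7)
The proof mirrors the two-part structure of Theorem~\ref{theo:capture-fp}: replace $\lfp$ by $\pfp$, $\fp$ by $\fpspace$, and supplement the quantitative quantifiers with second-order sum and product in order to accommodate outputs whose binary representation may be exponentially long.

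For condition~(1), I would proceed by induction on the structure of a $\qso(\pfp)$-sentence $\alpha$, showing $\sem{\alpha}\in\fpspace$. The base cases rely on the classical result of Abiteboul--Vianu (and Vardi) that $\pfp$-model-checking on a finite ordered structure is in $\pspace$, together with the triviality of constants. The cases $+$, $\cdot$, $\sa{x}$ and $\pa{x}$ invoke the standard closure of $\fpspace$ under addition, multiplication, and polynomial iteration thereof. The new cases $\sa{X}$ and $\pa{X}$ iterate over the $2^{|A|^{\arity(X)}}$ subsets of $A^{\arity(X)}$: each such subset has a polynomial-size description and can therefore be enumerated in polynomial work space, while the running accumulator (of at most doubly-exponential magnitude) stays within $\fpspace$ because its output bits can be recomputed on demand.

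For condition~(2), let $f\in\fpspace$ be a function on ordered $\R$-structures and fix $\ell\in\bbN$ such that $f(\enc(\A))$ has at most $2^{|A|^\ell}$ bits for every $\A\in\ostr[\R]$. Identify each subset $X\subseteq A^\ell$ with a number in $\{0,\dots,2^{|A|^\ell}-1\}$ by reading $X(\bar{t})$ as the bit at the position determined by the lex-rank of $\bar{t}$ in $A^\ell$. Then the language
\[
L \;=\; \{(\A,X)\mid X\subseteq A^\ell\text{ and the $X$-th bit of }f(\enc(\A))\text{ is }1\}
\]
lies in $\pspace$, so by the Abiteboul--Vianu theorem there is a $\pfp$-formula $\Phi(X)$ with $\A\models\Phi(X)$ iff $(\A,X)\in L$. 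A first-order formula $Y\prec X$ can now express that $Y$, viewed as a binary numeral, is strictly less than $X$: existentially guess the most significant position $\bar{t}$ such that $X(\bar{t})\wedge\neg Y(\bar{t})$, and universally require $X(\bar{t}')\leftrightarrow Y(\bar{t}')$ for every more significant $\bar{t}'$. Setting
\[
\alpha \;=\; \sa{X}\bigl(\Phi(X)\cdot \pa{Y}(Y\prec X\mapsto 2)\bigr),
\]
the inner product equals $2^k$ where $k$ is the index of $X$, since exactly $k$ subsets $Y$ satisfy $Y\prec X$. Summing $\Phi(X)\cdot 2^{k}$ over every $X$ therefore reconstructs $f(\enc(\A))$ in binary, yielding $\sem{\alpha}(\A)=f(\enc(\A))$.

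The main obstacle is condition~(1): making precise that second-order quantitative iteration stays within $\fpspace$ despite producing values of magnitude up to $2^{2^{\mathrm{poly}(n)}}$. This requires a careful bit-graph argument showing that the output bits of iterated $\pspace$-computable sums and products are themselves extractable in $\pspace$. Condition~(2) is conceptually clean once $\Phi$ and $\prec$ are in hand; the only delicate points there are verifying that $\prec$ is genuinely first-order, and that the bit-ordering convention chosen for the definition of $L$ matches the one implicitly used when reconstructing the output in~$\alpha$.
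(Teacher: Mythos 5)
Your proposal is correct and follows essentially the same route as the paper: for condition (2) your bit-graph language $L$, the Abiteboul--Vianu $\pfp$-formula $\Phi(X)$, the first-order comparison $Y\prec X$, and the formula $\sa{X}\,\Phi(X)\cdot\pa{Y}(Y\prec X\mapsto 2)$ coincide (up to reading $X$ as a numeral rather than as a rank $\tau(X)$ in the induced order on predicates) with the paper's construction, and for condition (1) the paper likewise disposes of the second-order sum and product by invoking closure of $\fpspace$ under exponential sum and multiplication, the very point you flag as needing the bit-level argument.
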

\begin{proof}

For the first condition of Definition~\ref{def:cap}, notice that each $\pfp$ formula can be evaluated in deterministic polynomial space, the constant function $s$ can be trivially simulated in $\fpspace$, and $\fpspace$ is closed under exponential sum and multiplication. This suffices to show that the condition holds.
For the second condition, the proof is similar to the proof of Theorem~\ref{theo:capture-fp}. Let $f\in \fpspace$ defined over some $\R$ and $\ell\in\nat$ such that $\log_2\left( f(\enc(\A)) \right) \leq 2^{{|\A|}^\ell}$ for every $\A\in\ostr[\R]$  (i.e.\ $2^{{|\A|}^\ell}$ is an upper bound for the size of the output). Let $X$ be a second-order variable of arity $\ell$. Consider the linear order induced by $<$ over predicates of arity $\ell$ which can be defined by the following formula:
$$
\varphi_{<}(X,Y) = \ex{\bar{u}}\big[\neg X(\bar{u})\wedge Y(\bar{u})\wedge \fa{\bar{v}}\big(
\bar{u}<\bar{v}\to(X(\bar{u})\iff Y(\bar{v}))\big)\big].
$$
Namely, we use predicates to encode a number that will have most $2^{{|\A|}^\ell}$ bits. We define this encoding through the function $\tau\colon 2^{A^\ell}\to\nat$, such that $\tau(B)$ is equal to the number of predicates in $2^{A^\ell}$ that are smaller than $B$ with respect to the induced order. For example, we have that $\tau(\emptyset) = 0$ and $\tau(A^{\ell}) = 2^{{|\A|}^\ell}-1$. Furthermore, we can use a relation~$X$ to index a position in the binary output of $f(\enc(\A))$ as follows.
Define the language:
\[
L = \{(\A,B)\mid B \subseteq A^{\ell}\text{ and the $\tau(B)$-th bit of $f(\enc(\A))$ is 1}\}.
\]
Since $L$ is in $\pspace$, it can be specified in $\pfp$ \cite{AbiteboulV89} by a formula $\Phi(X)$ such that $\A\models\Phi(B)$ if and only if $(\A,B)\in L$. Then, similarly as for the previous proof we define:
$$
\alpha := \sa{X} \Phi(X)\mult  \pa{Y}(\varphi_{<}(Y,X)\mapsto 2).
$$ 
where $\pa{Y}(\varphi_{<}(Y,X)\mapsto 2)$ takes the value $2^{\tau(X)}$ and $\alpha$ reconstructs the output of $f(\enc(\A))$. Using an argument analogous to the previous proof, we conclude that $\alpha\in\qso(\pfp)$ and $\sem{\alpha}(\A) = f(\enc(\A))$.
%

\end{proof}

From the proof of the previous theorem a natural question follows: what happens if we use first-order quantitative quantifiers and $\pfp$?
In~\cite{Ladner89}, Ladner also introduced the class $\nfpspace$ of all functions computed by polynomial-space TMs 
with output length bounded by a polynomial.
Interestingly, if we restrict to FO-quantitative quantifiers we can also capture this class.
\begin{cor} \label{cor:capture-fpspace-poly}
	$\qfo(\pfp)$ captures $\nfpspace$ over ordered structures.
\end{cor}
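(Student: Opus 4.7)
The plan is to adapt the proof of Theorem~\ref{theo:capture-fp}, replacing $\lfp$ with $\pfp$ and $\fp$ with $\nfpspace$. The logical core changes from polynomial time to polynomial space with polynomially-bounded output, and the first-order quantitative quantifiers match this output bound since only finitely many first-order tuples are available to index bits.

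For condition~(1) of Definition~\ref{def:cap}, the plan is to show inductively that every $\qfo(\pfp)$-sentence defines a function in $\nfpspace$. Each $\pfp$-formula is evaluable in polynomial space~\cite{AbiteboulV89}, constant functions are trivially in $\nfpspace$, and the class is closed under the $\qfo$ operators with polynomial output bound preserved: if $\sem{\alpha_1}$, $\sem{\alpha_2}$ have bit-length at most $|\A|^k$, then $\sem{\alpha_1\add\alpha_2}$ and $\sem{\alpha_1\mult\alpha_2}$ still have polynomial bit-length, and iterating $\sa{x}$ or $\pa{x}$ over a domain of size $|\A|$ multiplies the bit-length by at most $|\A|$, keeping it polynomial. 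These operations are implementable in polynomial space by standard running-value techniques.

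For condition~(2), let $f\in\nfpspace$ over signature $\R$ and pick $\ell\in\nat$ with $\log_2(f(\enc(\A)))\le |\A|^\ell$ for every $\A\in\ostr[\R]$. Consider the language
\[
L \;=\; \{(\A,\bar{a})\mid \bar{a}\in A^\ell\text{ and the }\bar{a}\text{-th bit of }f(\enc(\A))\text{ is }1\},
\]
where $\bar{a}$ indexes a bit position via the lexicographic order on $A^\ell$. Since $L\in\pspace$, by the Abiteboul--Vianu theorem~\cite{AbiteboulV89} there is a $\pfp$-formula $\Phi(\bar{x})$ such that $\A\models\Phi(\bar{a})$ iff $(\A,\bar{a})\in L$. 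Then define
\[
\alpha \;:=\; \sa{\bar{x}}\,\Phi(\bar{x})\mult\big(\pa{\bar{y}}(\bar{y}<\bar{x})\mapsto 2\big),
\]
mirroring the formula used in the proof of Theorem~\ref{theo:capture-fp}. The inner product takes the value $2^m$ when $\bar{x}$ has $m$ strictly smaller tuples in $A^\ell$, so summing over $\bar{x}$ reconstructs the binary representation of $f(\enc(\A))$ bit by bit, yielding $\sem{\alpha}(\A)=f(\enc(\A))$.

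The main obstacle is the bookkeeping in condition~(1), namely confirming that no operator in $\qfo(\pfp)$ breaks the polynomial output bound characterising $\nfpspace$ (versus $\fpspace$). This is exactly where the corollary diverges from Theorem~\ref{theo:capture-fpspace}: the prohibition on $\sa{X}$ and $\pa{X}$ forbids exponential-length accumulation, so the doubly-exponential outputs available in $\qso(\pfp)$ are excluded and only polynomial-length outputs remain. Once this is verified, the rest is a direct transcription of the arguments already used in Theorems~\ref{theo:capture-fp} and~\ref{theo:capture-fpspace}.
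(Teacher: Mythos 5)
Your proposal is correct and follows essentially the same route as the paper: condition~(1) via polynomial-space evaluability of $\pfp$ together with closure of $\nfpspace$ under first-order sum and product, and condition~(2) by reusing the bit-indexing language $L$ and the formula $\alpha$ from Theorem~\ref{theo:capture-fp}, now with a $\pfp$ formula for the $\pspace$ language. Your extra bookkeeping on output bit-lengths in condition~(1) just spells out what the paper asserts tersely, so no gap remains.
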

\begin{proof}
In this proof, both conditions are analogous to Theorem~\ref{theo:capture-fp} and~\ref{theo:capture-fpspace}. For the first condition, each $\pfp$ formula $\varphi$ can be evaluated in $\pspace$ and the class is closed under first sum and product. For the second condition, we use the same language $L$ defined in the proof of Theorem~\ref{theo:capture-fp}, which in this case is in $\pspace$. The same construction of $\alpha$, which in this case is in $\qfo(\pfp)$, is used to show that the condition holds.

\end{proof}

The results of this section validate $\qso$ as an appropriate logical framework for extending the theory of descriptive complexity to counting complexity classes. In the following sections, we provide more arguments for this claim, by considering some fragments of $\eqso$ and, moreover, by showing how to go beyond $\eqso$ to capture other classes.


\subsection{Extending $\qso$ to capture classes beyond counting} \label{sec:extentions}

There exist complexity classes that do not fit in our framework because either the output of a function is not a natural number (e.g.\ a negative number) or the class is not defined purely in terms of arithmetical operations (e.g.\ min and max).
To remedy this problem, we show here how $\qso$ can be easily extended  to capture such classes that go beyond sum and product over natural numbers. 

It is well-known that, under some reasonable complexity-theoretical assumptions, $\shp$ is not closed under subtraction, not even under subtraction by one~\cite{OH93}.
To overcome this limitation, $\gp$ was introduced in~\cite{FFK94} as the class of functions $f$ for which there exists a polynomial-time NTM $M$ such that $f(x) = \tma_M(x) - \tmr_M(x)$, where  $\tmr_M(x)$ is the number of rejecting runs of $M$ with input $x$.
That is, $\gp$ is the closure of $\shp$ functions under subtraction, and its functions can obviously take negative values.
Given that our logical framework was built on top of the natural numbers, we need to extend $\qso$ in order to capture $\gp$. 
The most elegant way to do this is by allowing constants coming from $\bbZ$ instead of just $\bbN$. 
Formally, we define the logic $\qsoz$ whose syntax is the same as in \eqref{syntax} and whose semantics is the same as in Table~\ref{tab-semantics} except that the atomic formula $s$ (i.e.\ a constant) comes from $\bbZ$.  
Similarly as for $\qso$, we define the fragment $\eqsoz$ as the extension of $\eqso$ with constants in $\bbZ$.
\begin{exa}
	Recall the setting of Example~\ref{ex:cliques} and suppose now that we want to compute the number of cliques in a graph that are not triangles. This can be easily done in $\qsoz$ with the formula:
	$
	\alpha_5 :=	\alpha_2 + (-1) \cdot \alpha_1. 
	$ \qed
\end{exa}
Adding negative constants is a mild extension to allow subtraction in the logic. 
It follows from our characterization of $\shp$ that this is exactly what we need to capture  $\gp$.
\begin{cor} \label{prop:capture-gapp}
	$\eqsoz(\fo)$ captures $\gp$ over ordered structures.
\end{cor}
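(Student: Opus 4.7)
The plan is to mimic the structure of the proof of Proposition~\ref{prop:capture-shP}, leveraging the well-known fact that $\gp = \shp - \shp$~\cite{FFK94}, together with the closure properties of $\gp$.

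First, I would verify condition (1) of Definition~\ref{def:cap}, namely that every $\alpha\in\eqsoz(\fo)$ defines a $\gp$-function. The argument proceeds by structural induction on $\alpha$. The base cases are: (i) a Boolean $\fo$-formula $\varphi$, whose truth value on $(\A,v,V)$ is decidable in deterministic polynomial time and thus defines an $\fp$-function, which is trivially in $\gp$; and (ii) an integer constant $s\in\bbZ$, which is also trivially in $\gp$ since $\gp$ contains all integer constants. For the inductive step, I would invoke the closure properties of $\gp$: $\gp$ is closed under addition, multiplication, and under exponential sums indexed by polynomial-time computable predicates (see~\cite{FFK94,OH93}). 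The latter covers both first-order sum quantifiers $\sa{x}$ (ranging over $|A|$ values) and second-order sum quantifiers $\sa{X}$ (ranging over $2^{|A|^{\arity(X)}}$ subsets), because the inductive-hypothesis values are computable in $\gp$ uniformly in the assignment.

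Second, for condition (2), I would start from any $f\in\gp$ and use the characterization $f = g - h$ with $g,h\in\shp$. By Proposition~\ref{prop:capture-shP}, there exist $\eqso(\fo)$-sentences $\alpha_g$ and $\alpha_h$ such that $\sem{\alpha_g}(\A) = g(\enc(\A))$ and $\sem{\alpha_h}(\A) = h(\enc(\A))$ for every $\A\in\ostr[\R]$. Since $\eqso(\fo)\subseteq\eqsoz(\fo)$, I can form the sentence
\[
\alpha \ := \ \alpha_g \,\add\, (-1)\mult\alpha_h,
\]
which is in $\eqsoz(\fo)$ and satisfies $\sem{\alpha}(\A) = g(\enc(\A)) - h(\enc(\A)) = f(\enc(\A))$ for all $\A\in\ostr[\R]$, as required.

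The main obstacle, and the only nontrivial point in the argument, is justifying the closure of $\gp$ under the exponential sums arising from $\sa{X}$ and the polynomial products arising from multiplication of subformulas of possibly negative value. These closures are standard in the literature on $\gp$~\cite{FFK94,OH93}, but they should be invoked explicitly; the rest of the argument is a direct adaptation of Proposition~\ref{prop:capture-shP} using the existing $\shp$-capture in a black-box manner.
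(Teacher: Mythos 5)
Your proposal is correct and follows exactly the route the paper intends: the paper states this as a corollary of Proposition~\ref{prop:capture-shP}, with condition (1) handled by the closure of $\gp$ under the arithmetic operations of the logic (sums, products, exponential sums) and condition (2) handled by writing $f = g - h$ with $g,h \in \shp$ and forming $\alpha_g \add (-1)\mult\alpha_h$, precisely as in the paper's example $\alpha_5 = \alpha_2 + (-1)\cdot\alpha_1$. Nothing essential differs, so no further comparison is needed.
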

This is an interesting result that shows how robust and versatile $\qso$ is for capturing different counting complexity classes whose functions are not restricted to $\bbN$.

A different class of functions comes from considering the optimization version of a decision problem. For example, one can define MAX-SAT as the problem of determining the maximum number of clauses, of a given CNF propositional formula that can be made true by an assignment. Here, MAX-SAT is defined in terms of a maximization problem which in its essence differs from the functions in $\shp$. 
To formalize this class of optimization problems, Krentel defined $\optp$~\cite{krentel1988complexity} as the class of functions computable by taking the maximum or minimum of the output values over all runs of a polynomial-time NTM machine with output tape (i.e.\ each run produces a binary string which is interpreted as a number). 
For instance, MAX-SAT is in $\optp$ as many other optimization versions of $\np$-problems.
Given that in~\cite{krentel1988complexity} Krentel did not make the distinction between $\max$ and $\min$, Vollmer and Wagner~\cite{vollmer1995complexity} defined the classes $\maxp$ and $\minp$ as the max and min version of the problems in $\optp$ (i.e.\ $\optp = \maxp \cup \minp$).

In order to capture classes of optimization functions, we extend $\qso$ with $\max$ and $\min$ quantifiers as follows (called $\optqso$). 
Given a signature $\R$, the set of $\optqso$-formulae over $\R$ is given by extending the syntax in (\ref{syntax}) with the following operators:
\begin{align*}
\max\{\alpha,\alpha\} \ \mid\ \min\{\alpha,\alpha\} \ \mid \maxa{x} \alpha \ \mid \ \mina{x} \alpha \ \mid \ \maxa{X} \alpha \ \mid \ \mina{X} \alpha 
\end{align*}
where $x \in \fv$ and $X \in \sv$. The semantics of the $\qso$-operators in $\optqso$ are defined as usual. Furthermore, the semantics of the max and min quantifiers are defined as the natural extension of the sum quantifiers in $\qso$ (see Table~\ref{tab-semantics}) by maximizing or minimizing, respectively, instead of computing a sum or a product. 
\begin{exa}\label{ex:optqso}
	Recall again the setting of Example~\ref{ex:cliques} and suppose now that we want to compute the size of the largest clique in a graph. This can be done in $\optqso$ as follows:
	\[
\alpha_6 := \maxa{X} \left( \, \clique(X) \cdot \sa{z} X(z)  \, \right)
	\]
	Notice that formula $\sa{z} X(z)$ is used to compute the number of nodes in a set $X$.  \qed
\end{exa}
Similarly as for $\maxp$ and $\minp$, we have to distinguish between the $\max$ and $\min$ fragments of $\optqso$. For this, we define the fragment $\maxqso$ of all $\optqso$ formulae constructed from $\qfo$ operators and $\max$-formulae $\max\{\alpha,\alpha\}$, $\maxa{x} \alpha$ and  $\maxa{X} \alpha$.
The class $\minqso$ is defined analogously replacing $\max$ by $\min$. Notice that in $\maxqso$ and $\minqso$, second-order sum and product are not allowed. For instance, formula $\alpha_6$ in Example \ref{ex:optqso} is in $\maxqso$.
As one could expect, $\maxqso$ and $\minqso$ are the needed logics to capture $\maxp$ and $\minp$.
\begin{thm} \label{theo:capture-optp}
	$\maxqso(\fo)$ and $\minqso(\fo)$ capture $\maxp$ and $\minp$, respectively, over ordered structures.
\end{thm}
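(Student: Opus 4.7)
The plan is to verify both conditions of Definition~\ref{def:cap} for $\maxqso(\fo)$ against $\maxp$; the case of $\minqso(\fo)$ against $\minp$ is completely symmetric except in the treatment of rejecting computations, which I discuss at the end. Condition~(1) proceeds by structural induction on the formula. The base cases are immediate: constants are trivially computable in $\maxp$, and an $\fo$-sentence defines a $\{0,1\}$-valued function evaluable in polynomial time. For the inductive step we use closure properties of $\maxp$. Over non-negative values, $\max$ distributes over both sum and product, so the identities $\max_{i,j}(f_i+g_j)=\max_i f_i+\max_j g_j$ and $\max_{i,j}(f_i\cdot g_j)=\max_i f_i\cdot\max_j g_j$ give closure under $+$, $\cdot$, $\sas$ (polynomially many terms) and $\pas$ (polynomially many terms), while nondeterministic selection of an element of $A$ or of a subset of $A^k$ yields closure under $\maxa{x}$ and $\maxa{X}$ respectively.

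For condition~(2), let $f\in\maxp$ be witnessed by a polynomial-time NTM $M$. Adapting the classical Fagin-style encoding to ordered structures, we obtain a tuple of second-order variables $\bar S$ that represents an entire computation of $M$ on $\A$ (the sequence of configurations is indexed via the order of $\A$), together with an $\fo$-formula $\Phi(\bar S)$ that holds precisely when $\bar S$ encodes an accepting computation, and an $\fo$-formula $b(\bar y,\bar S)$ stating that the $\bar y$-th bit of the output tape in the final configuration of $\bar S$ equals $1$. Reusing the bit-reconstruction trick from the proof of Theorem~\ref{theo:capture-fp}, set
\[
\alpha \ :=\ \maxa{\bar S} \Phi(\bar S)\cdot \sa{\bar y}\big(b(\bar y,\bar S)\cdot \pa{\bar z}((\bar z<\bar y)\mapsto 2)\big).
\]
The inner product $\pa{\bar z}((\bar z<\bar y)\mapsto 2)$ equals $2^{m}$, where $m$ is the number of tuples strictly below $\bar y$ in the lexicographic order, so the $\sas$-aggregation reconstructs the integer written on the output tape of the computation $\bar S$. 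Multiplying by the $\{0,1\}$-valued $\Phi(\bar S)$ zeroes out rejecting computations, and $\maxa{\bar S}$ finally selects the largest accepting output, which equals $f(\enc(\A))$.

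The main obstacle is ensuring that rejecting computations do not distort the aggregated value. For $\max$ the indicator trick is essentially free: a rejecting branch contributes $0$, harmless to a maximum over non-negative outputs, provided (as one may always assume) that $M$ has at least one accepting branch outputting $0$. For $\min$ the symmetric move requires producing a value \emph{above} every legal output, which cannot be obtained by subtraction within $\minqso$. Instead, one exploits the quantitative product: choosing $k$ so that $2^{|A|^k}$ strictly exceeds the polynomial bound on the output length of $M$, and using a fresh tuple $\bar w$ of arity~$k$, the expression $\pa{\bar w}\,2$ evaluates to $2^{|A|^k}$ and supplies the needed upper bound. The formula
\[
\alpha_{\min}\ :=\ \mina{\bar S}\Big(\Phi(\bar S)\cdot \sa{\bar y}\big(b(\bar y,\bar S)\cdot\pa{\bar z}((\bar z<\bar y)\mapsto 2)\big)\ +\ \neg\Phi(\bar S)\cdot \pa{\bar w}\,2\Big)
\]
then yields $f(\enc(\A))$ whenever $f\in\minp$ admits at least one accepting witness, completing the plan.
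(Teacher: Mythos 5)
Your proposal is correct, but condition (2) is argued by a genuinely different route than the paper's. The paper never encodes the machine: it observes that, for $f\in\maxp$, deciding whether $f(\enc(\A)) \geq \val(\enc(U))$ (and, for $\minp$, whether $f(\enc(\A)) \leq \val(\enc(U))$) is an $\np$ property of the pair $(\A,U)$, applies Fagin's theorem to obtain an $\fo$-formula $\Phi(\bar X,Y)$, and lets $\maxa{\bar X}\maxa{Y}$ range over candidate output values $Y$ together with the Fagin witnesses $\bar X$, reconstructing $\val(\enc(Y))$ with the same bit trick you use; in the $\min$ case, non-witnessing assignments are neutralized by the implication $(\Phi(\bar X,Y)\to Y(\bar x))$, which pushes their value up to (essentially) $2^{{|\A|}^\ell}$. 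You instead quantify over encodings $\bar S$ of entire computations of the metric machine, verify them with a Fagin-style tableau formula, read the output off the final configuration, and neutralize invalid encodings by a $0$ contribution (harmless for $\max$) or by the explicit penalty $\pa{\bar w}\,2 = 2^{|A|^k}$ (for $\min$) --- the same obstacle the paper faces, handled by an additive penalty instead of the implication trick. Both are sound. Your version is more self-contained (no detour through the $\np$ decision problems), while the paper's construction, adapted from Kolaitis and Thakur, sidesteps low-level machine details: since $\val(\enc(U))$ is always an ${|\A|}^\ell$-bit number, it never has to reconcile runs whose outputs have different lengths. In your construction the positional weights $2^{\mathrm{rank}(\bar y)}$ are faithful only if every run writes its output in a fixed-significance format (e.g.\ padded to exactly $|A|^k$ bits, or least-significant bit first); this is a routine normalization of $M$, but it should be stated explicitly, because under the usual most-significant-bit-first, variable-length convention the phrase ``the $\bar y$-th bit of the output tape'' does not match those weights, and the position arithmetic needed to realign them is not $\fo$-definable from $<$ alone. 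Condition (1) in your proposal coincides with the paper's argument via closure of $\maxp$ and $\minp$ under the allowed operations.
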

\begin{proof}
It is straightforward to prove that $\maxp$ can compute any $\fo$-formula, is closed under first-order sum and product, and second-order maximization. 
Therefore, condition (1) in Definition~\ref{def:cap} follows similarly as in the previous characterizations. Furthermore, one can easily see that the same holds for $\minqso(\fo)$.
The proof for the other direction is similar to the one described in \cite{kolaitis1994logical} extended with the ideas of Theorem~\ref{theo:capture-fp}. Let $f\in \maxp$ be a function defined over some signature $\R$ and
$\ell\in\nat$ such that $\lceil\log_2 f(\enc(\A)) \rceil \leq |\A|^\ell$ for each $\A\in\ostr[\R]$.
For $U \subseteq A^{\ell}$, we can interpret the encoding of $U$ ($\enc(U)$) as the binary encoding of a number with $|\A|^\ell$-bits. We denote this value by $\val(\enc(U))$.
Then, given $\A\in\ostr[\R]$ and $U \subseteq A^{\ell}$, consider the problem of checking whether $f(\enc(\A)) \geq \val(\enc(U))$. 
Clearly, this is an $\np$-problem and, by Fagin's theorem, there exists a formula of the form $\ex{\bar{X}} \Phi(\bar{X}, Y)$ with $\Phi(\bar{X}, Y)$ in $\fo$ and $\arity(Y) = \ell$ such that $f(\enc(\A)) \geq \val(\enc(U))$ if, and only if, $(\A,v,V) \models \ex{\bar{X}} \Phi(\bar{X}, Y)$ with $V(Y) = U$. 
Then we can describe $f$ by the following $\maxqso$ formula:
$$
\alpha = \maxa{\bar{X}} \maxa{Y} \ \Phi(\bar{X}, Y) \cdot \big( \sa{\bar{x}} Y(\bar{x}) \cdot \pa{\bar{y}}(\bar{x} < \bar{y} \mapsto 2) \big).
$$
Note that, in contrast to previous proofs, we use $\bar{x} < \bar{y}$ instead of $\bar{y} < \bar{x}$ because the most significant bit in $\enc(U)$ correspond to the smallest tuple in $U$.  
It is easy to check that $\Phi(\bar{X}, Y)$ simulates the NP-machine and, if $\Phi(\bar{X}, Y)$ holds, the formula to the right  reconstructs the binary output from the relation in $Y$.
Then, $\alpha$ is in $\maxqso(\fo)$ over $\R$ and $\sem{\alpha}(\A) = f(\enc(\A))$. 

For the case of $\minqso(\fo)$ and a function $f \in \minp$, one has to follow the same approach but consider the $\np$-problem of checking whether $f(\enc(\A)) \leq \val(\enc(U))$. Then, the formula for describing $f$ is the following:
$$
\alpha = \mina{\bar{X}} \mina{Y} \ \sa{\bar{x}} \big( (\Phi(\bar{X}, Y) \rightarrow Y(\bar{x})) \cdot \pa{\bar{y}}(\bar{x} < \bar{y} \mapsto 2)  \big).
$$
In this case, if the formula $\Phi(\bar{X}, Y)$ is false, then the output produced by the subformula inside the $\min$-quantifiers will be the biggest possible value (i.e.\ $2^{{|\A|}^\ell}$).
On the other hand, if $\Phi(\bar{X}, Y)$ holds, the subformula will produce $\val(\enc(U))$. 
In a similar way as in $\max$, we conclude that $\alpha$ is in $\minqso(\fo)$ and $\sem{\alpha}(\A) = f(\enc(\A))$.
%

\end{proof}

It is important to mention that a similar result, following the framework of~\cite{SalujaST95}, was proved in~\cite{kolaitis1994logical} for the class $\maxpb$ (resp., $\minpb$) of problems in $\maxp$ (resp., $\minp$) whose output value is polynomially bounded.
Interestingly, Theorem \ref{theo:capture-optp} is stronger since our logic has the freedom to use sum and product quantifiers, instead of using a max-and-count problem over Boolean formulae. 
Finally, it is easy to prove that our framework can also capture $\maxpb$ and $\minpb$ by disallowing the product $\Pi x$ in $\maxqso(\fo)$ and $\minqso(\fo)$, respectively.
%

\section{Exploring the structure of $\shp$ through $\qso$} \label{sec:syntactic}

The class $\shp$ was introduced in \cite{Valiant79} to prove that computing the permanent of a matrix, as defined in Example \ref{exa-perm}, is a $\shp$-complete problem.
As a consequence of this result many counting problems have been proved to be $\shp$-complete~\cite{V79b,arora2009computational}.
Among them, problems having easy decision counterparts play a fundamental role, as a counting problem with a hard decision version is expected to be hard. Formally, the decision problem associated to a function $f\colon\Sigma^* \to \mathbb{N}$ is defined as $L_f = \{ x \in \Sigma^* \mid f(x) > 0 \}$, and $f$ is said to have an easy decision version if $L_f \in \ptime$. 
Many prominent examples satisfy this property, like computing the number of: perfect matchings of a bipartite graph ($\cpm$)~\cite{Valiant79}, satisfying assignments of a DNF ($\cdnf$)~\cite{DHK05,KL83} or Horn ($\chsat$)~\cite{V79b} propositional formula, among others.

Counting problems with easy decision versions play a fundamental role in the search for efficient approximation algorithms for functions in $\shp$. 
A fully-polynomial randomized approximation scheme (FPRAS) for a function $f\colon \Sigma^* \to \bbN$ is a randomized algorithm $\mathcal{A} \colon \Sigma^* \times (0,1) \to \bbN$ such that: (1) for every string $x \in \Sigma^*$ and real value $\varepsilon \in (0,1)$, the probability that $|f(x) - \mathcal{A}(x,\varepsilon)| \leq \varepsilon \cdot f(x)$ is at least $\frac{3}{4}$, and (2) the running time of $\mathcal{A}$ is polynomial in the size of $x$ and $1/\varepsilon$ \cite{KL83}. 
Notably, there exist $\shp$-complete functions that can be efficiently approximated as they admit FPRAS; for instance, there exist FPRAS for $\cdnf$~\cite{KL83} and $\cpm$~\cite{JSV04}. 
A key observation here is that if a function $f$ admits an FPRAS, then $L_f$ is in the randomized complexity class $\bpp$~\cite{G77}.
Hence, under the widely believed assumption that  $\np \not\subseteq \bpp$, we cannot hope for an FPRAS for a function in $\shp$ whose decision counterpart is $\np$-complete, and we have to concentrate on the class of counting problems with easy decision versions. That is, with decision versions in P.

The importance of the class of counting problems with easy decision counterparts has motivated the search for robust classes of functions in $\shp$ with this property \cite{PagourtzisZ06}. But the key question here is what should be considered a {\em robust} class. 
A first desirable condition is related to its closure properties, which is a common theme when studying function complexity classes \cite{OH93,FH08}. Analogously to the cases of $\ptime$ and $\np$, which are closed under intersection and union, we expect our class to be closed under multiplication and sum. For a more elaborated closure property, assume that $\textit{sat\_one}$ is a function that returns one plus the number of satisfying assignments of a propositional formula. Clearly $\textit{sat\_one}$ is a $\shp$-complete function whose decision counterpart $L_{\textit{sat\_one}}$ is trivial. But should $\textit{sat\_one}$ be part of a robust class of counting functions with easy decision versions? The key insight here is that if a function in $\shp$ has an easy decision counterpart $L$, then as $L \in \np$ we expect  to have a polynomial-time algorithm that verifies whether $x \in L$ by constructing witnesses for~$x$. 
Moreover, if such an algorithm for constructing witnesses exists, then we also expect to be able to manipulate such witnesses and in some cases to remove them. In other words, we expect a robust class $\CC$ of counting functions with easy decision versions to be closed under subtraction by one, that is, if $g \in \CC$, then the function $g \dotdiv 1$ should also be in $\CC$, where $(g \dotdiv 1)(x)$ is defined as $g(x) - 1$ if $g(x) \geq 1$, and as~$0$ otherwise. Notice that, unless $\ptime = \np$, no such class can contain the function $\textit{sat\_one}$ because $\textit{sat\_one} \dotdiv 1$ counts the number of satisfying assignments of a propositional formula.

A second desirable condition of robustness is the existence of natural complete problems~\cite{P94}. Special attention has to be paid here to the notion of reduction used for completeness. Notice that under the notion of Cook reduction, originally used in \cite{Valiant79}, the problems $\cdnf$ and $\csat$ are $\shp$-complete. However, $\cdnf$ has an easy decision counterpart and admits an FPRAS, while $\csat$ does not satisfy these conditions unless $\ptime = \np$. Hence a more strict notion of reduction has to be considered; in particular, the notion of parsimonious reduction (to be defined later) satisfies that if a function $f$ is parsimoniously reducible to a function $g$, then $L_g \in \ptime$ implies that $L_f \in \ptime$ and the existence of an FPRAS for $g$ implies the existence of a FPRAS for $f$. 

In this section, we use the framework developed in this paper to address the problem of defining a robust class of functions with easy decision versions. More specifically, we use the framework to introduce in Section \ref{sec-hier-shp} a syntactic hierarchy of counting complexity classes contained in $\shp$. Then this hierarchy is used in Section \ref{sec-clo} to define a class of functions with easy decision versions and good closure properties, and in Section \ref{sec-horn} to define a class of functions with easy decision versions and 
natural complete problems.

\subsection{The $\eqso(\fo)$-hierarchy inside $\shp$}
\label{sec-hier-shp}

Inspired by the connection between $\shp$ and $\sfo$, a hierarchy of subclasses of $\sfo$ was introduced in~\cite{SalujaST95} 
by restricting the alternation of quantifiers in Boolean formulae.
Specifically, the \emph{$\sfo$-hierarchy} consists of the 
the classes $\E{i}$ and $\U{i}$ for every $i \geq 0$, where $\E{i}$ (resp., $\U{i}$) is defined as $\sfo$ but restricting the formulae used to be in $\loge{i}$ (resp., $\logu{i}$).
By definition, we have that $\U{0} = \E{0}$. Moreover, it is shown in~\cite{SalujaST95} that:
\[
\E{0} \; \subsetneq \; \E{1} \; \subsetneq \; \U{1} \; \subsetneq \; \E{2} \; \subsetneq \; \U{2} \; = \; \sfo 
\]
In light of the framework introduced in this paper, natural extensions of these classes are obtained by considering 
$\eqso(\loge{i})$ and $\eqso(\logu{i})$ for every $i \geq 0$, which form the \emph{$\eqso(\fo)$-hierarchy}.
Clearly, we have that $\E{i} \subseteq \QE{i}$ and $\U{i} \subseteq \QU{i}$. Indeed, each formula $\varphi(\bar{X}, \bar{x})$ in $\E{i}$ is equivalent to the formula $\sa{\bar X} \sa{\bar x} \varphi(\bar{X}, \bar{x})$ in $\QE{i}$, and likewise for $\U{i}$ and $\QU{i}$.
But what is the exact relationship between these two hierarchies?
To answer this question, we first introduce two normal forms for $\eqso(\LL)$ that helps us to characterize the expressive power of this quantitative logic.
A formula $\alpha$ in $\eqso(\LL)$ is in \emph{$\LL$-prenex normal form} ($\LL$-PNF) 
if $\alpha$ is of the form
$\sa{\bar{X}} \sa{\bar{x}} \varphi(\bar{X}, \bar{x})$,
where $\bar{X}$ and $\bar{x}$ are sequences of zero or more second-order and first-order variables, respectively, (as expected, $\sa{\bar{X}}\!$ and $\sa{\bar{x}}\!$ are the respective nestings of $\sa{X}$'s and $\sa{x}$'s) and $\varphi(\bar{X}, \bar{x})$ is a formula in $\LL$. Notice that 
a formula $\varphi(\bar{X}, \bar{x})$ in $\sh{\LL}$ is equivalent to the formula $\sa{\bar X} \sa{\bar x} \varphi(\bar{X}, \bar{x})$ in $\LL$-PNF. 
Moreover, a formula $\alpha$ in $\eqso(\LL)$ is in \emph{$\LL$-sum normal form} ($\LL$-SNF) if $\alpha$ is of the form $\sum_{i=1}^n \alpha_i$, where this is a shorthand notation for $\alpha_1+\cdots+\alpha_n$, and each $\alpha_i$ is in $\LL$-PNF. 
\begin{prop}\label{theo-pnf-snf}
Every formula in $\eqso(\LL)$ can be rewritten in $\LL$-SNF.
\end{prop}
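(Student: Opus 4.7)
The plan is to proceed by structural induction on the formula $\alpha \in \eqso(\LL)$, using a small toolkit of quantitative identities that follow directly from the semantics in Table~\ref{tab-semantics}. The three identities I need are: (i) distributivity of a quantitative sum over $+$, i.e.\ $\sa{Z}(\alpha_1 + \alpha_2) \equiv \sa{Z}\alpha_1 + \sa{Z}\alpha_2$ for $Z \in \fv \cup \sv$; (ii) commutativity of nested quantitative sums, $\sa{Z_1}\sa{Z_2}\alpha \equiv \sa{Z_2}\sa{Z_1}\alpha$; and (iii) extraction of a quantitative sum out of a product when the bound variable does not occur free in the other factor, $(\sa{Z}\alpha)\mult \beta \equiv \sa{Z}(\alpha\mult\beta)$ (and symmetrically on the right). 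Each follows from rearranging finite sums and products over the domain, possibly after $\alpha$-renaming.

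The base cases are immediate. If $\alpha = \varphi$ with $\varphi \in \LL$, then $\alpha$ is already in $\LL$-PNF with empty quantifier blocks, and so trivially in $\LL$-SNF. If $\alpha = s$ for $s \in \bbN$, I rewrite $\alpha$ as $\sum_{i=1}^{s} \top$ (a sum of $s$ copies of the PNF formula $\top$) when $s > 0$, and as $\neg \top$ (a single PNF term evaluating to $0$) when $s = 0$.

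For the inductive step I handle the three constructors. For $\alpha = \alpha_1 \add \alpha_2$, the induction hypothesis gives $\alpha_k \equiv \sum_i \beta^k_i$ in $\LL$-SNF for $k \in \{1,2\}$, and the concatenation of these sums is the desired SNF for $\alpha$. For $\alpha = \sa{Z}\alpha_1$ with $Z \in \fv \cup \sv$, I apply identity (i) to push $\sa{Z}$ inside the sum obtained for $\alpha_1$, so each summand becomes $\sa{Z}\,(\sa{\bar X_i}\sa{\bar x_i}\varphi_i)$; using identity (ii), I shuffle $Z$ into the appropriate block (joining $\bar X_i$ if $Z \in \sv$, joining $\bar x_i$ if $Z \in \fv$), obtaining a formula in $\LL$-PNF. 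For $\alpha = \alpha_1 \mult \alpha_2$, I again invoke the hypothesis and use distributivity of $\mult$ over $\add$ to rewrite $\alpha$ as $\sum_{i,j} \beta^1_i \mult \beta^2_j$, reducing the task to showing that a product of two PNFs is equivalent to a single PNF.

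This last reduction is the key step and the main obstacle. Writing $\beta^1_i = \sa{\bar X_i}\sa{\bar x_i}\varphi_i$ and $\beta^2_j = \sa{\bar Y_j}\sa{\bar y_j}\psi_j$, I first $\alpha$-rename bound variables so that the quantifier prefixes of $\beta^1_i$ and $\beta^2_j$ are pairwise disjoint and disjoint from any free variables. Repeated applications of identity (iii) then yield
\[
\beta^1_i \mult \beta^2_j \; \equiv \; \sa{\bar X_i}\sa{\bar x_i}\sa{\bar Y_j}\sa{\bar y_j}\,(\varphi_i \mult \psi_j),
\]
and identity (ii) lets me reorder the prefix so that all second-order sums precede all first-order sums. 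Finally, since $\sem{\varphi_i},\sem{\psi_j}\in\{0,1\}$, the quantitative product $\varphi_i \mult \psi_j$ is semantically equal to $\varphi_i \wedge \psi_j$, and for the standard fragments $\LL \in \{\fo, \loge{i}, \logu{i}, \so\}$ used throughout the paper the conjunction $\varphi_i \wedge \psi_j$ belongs to $\LL$; hence the resulting formula is in $\LL$-PNF. The delicate points to check carefully are therefore the $\alpha$-renaming (to justify the applications of (iii)) and the closure of $\LL$ under conjunction (to keep the kernel inside $\LL$); once these are in place the induction closes.
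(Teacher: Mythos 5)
Your proof is correct, and on the fragment the proposition actually concerns it is essentially the paper's own argument: expand constants as sums of $\top$, concatenate SNFs for $\add$, and push $\sa{x}$/$\sa{X}$ through the outer sum and into the quantifier block of each PNF summand. The one genuine divergence is your case for binary multiplication. In this paper $\eqso(\LL)$ is generated by the grammar $\varphi \mid s \mid (\alpha\add\alpha) \mid \sa{x}\alpha \mid \sa{X}\alpha$; binary $\mult$ is not part of the syntax, which is precisely why closure under multiplication is proved separately as Lemma~\ref{conj-mult}. So that case is not needed for Proposition~\ref{theo-pnf-snf}, and what you prove there is in effect Lemma~\ref{conj-mult} itself; it is also where your extra hypothesis enters, since you need $\LL$ closed under conjunction (up to equivalence) to keep the kernel $\varphi_i\wedge\psi_j$ inside $\LL$. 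The paper deliberately keeps that assumption out of the normal-form proposition, which is stated for an arbitrary fragment $\LL$ and later invoked for fragments such as $\logex{1}$ and $\ehorn$; under your reading one must additionally check conjunction-closure each time (it does hold for every fragment used in the paper, so nothing breaks, but the statement you establish is conditional where the paper's is unconditional). On two small points you are in fact more careful than the printed proof: you note explicitly that after distributing $\sa{x}$ one must commute it past the second-order sums to literally reach the form $\sa{\bar X}\sa{\bar x}\varphi$, and you handle the constant $0$, for which the paper's rewriting $\top\add\cdots\add\top$ only makes sense when $s\geq 1$.
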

\begin{proof}
Recall that a formula in $\eqso(\LL)$ is defined by the following grammar:
\[
\alpha = \varphi \ \mid \ s \ \mid \ (\alpha + \alpha) \ \mid \ \sa{x} \alpha \ \mid \ \sa{X} \alpha
\]
where $\varphi$ is a formula in $\LL$ and $s\in\nat$. 
To find an equivalent formula in $\LL$-SNF for every $\alpha \in \eqso(\LL)$, we give a recursive function $\tau$ such that $\tau(\alpha)$ is in $\LL$-SNF and $\tau(\alpha) \equiv \alpha$. 
Specifically, if $\alpha = \varphi$, define $\tau(\alpha) = \alpha$; 
if $\alpha = s$, define $\tau(\alpha) = (\top \add \overset{\text{$s $ times}}{\ldots} \add \top)$;
if $\alpha = (\alpha_1 + \alpha_2)$, define $\tau(\alpha) = (\tau(\alpha_1) + \tau(\alpha_2))$;
if $\alpha = \sa{x}\beta$, assume $\tau(\beta) = \sum_{i = 1}^{k}\beta_i$ such that each $\beta_i$ is in $\LL$-PNF, and define $\tau(\alpha) = \sum_{i = 1}^{k}\sa{x}\beta_i$;
and if $\alpha = \sa{X}\beta$, then we proceed analogously as in the previous case.
This covers all possible cases for $\alpha$ and we conclude the proof by taking $\tau(\alpha)$ as the desired rewrite of $\alpha$.

\end{proof}
If a formula is in $\LL$-PNF then clearly the formula is in $\LL$-SNF.
Unfortunately, for some $\LL$ there exist formulae in $\eqso(\LL)$  that cannot be rewritten in $\LL$-PNF.
Therefore, to unveil the relationship between the $\sfo$-hierarchy and the $\eqso(\fo)$-hierarchy, we need to understand the boundary between PNF and SNF. We do this in the following theorem. 
\begin{thm}\label{theo-pi1-pnf}
For $i = 0,1$, there exists a formula $\alpha_i$ in $\QE{i}$ that is not equivalent to any formula in $\Sigma_i$-PNF. 
On the other hand, if $\logu{1} \subseteq \LL$ and $\LL$ is closed under conjunction and disjunction, then every formula in $\eqso(\LL)$ can be rewritten in $\LL$-PNF. 
\end{thm}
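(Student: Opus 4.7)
The plan is to treat the two directions separately.

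Negative direction: I would take $\alpha_0 = \alpha_1 := 1+1$, a formula of $\QE{0} \subseteq \QE{1}$ whose value is the constant $2$. Since every $\Sigma_0$-PNF formula is also a $\Sigma_1$-PNF formula, it suffices to show that no PNF formula $\sa{\bar X}\sa{\bar x}\, \exists \bar y.\,\varphi$ with $\varphi$ quantifier-free evaluates to $2$ on every ordered structure. The argument rests on a classical preservation lemma: if $\A$ is an ordered substructure of $\B$, $\bar x \in A^{|\bar x|}$, and $\bar X^{\B}$ restricts to $\bar X^{\A}$ on $A$-tuples, then $\A \models \exists \bar y.\,\varphi(\bar X^{\A}, \bar x, \bar y)$ implies $\B \models \exists \bar y.\,\varphi(\bar X^{\B}, \bar x, \bar y)$, since the original witness $\bar y$ lies in $A$ and quantifier-free atoms over $A$-elements are unchanged. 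I would then instantiate this with the one- and two-element ordered structures $\A_1, \A_2$. If $|\bar X| = 0$ then the count on $\A_1$ is at most $1$ (the first-order tuple on a singleton is unique), so we need $|\bar X| \ge 1$; but then every satisfying assignment on $\A_1$ lifts to at least $2^{2^{\arity(X_i)}-1} \ge 2$ extensions on $\A_2$, each still satisfying by the lemma, forcing the count on $\A_2$ to be at least $4$: contradiction.

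Positive direction: given $\alpha \in \eqso(\L)$, I would first use Proposition~\ref{theo-pnf-snf} to rewrite $\alpha$ in $\L$-SNF as $\sum_{i=1}^n \alpha_i$, with $\alpha_i = \sa{\bar Y_i}\sa{\bar y_i}\,\varphi_i$ and $\varphi_i \in \L$, renaming variables so that different summands share none. The key construction then collapses the sum into a single PNF by introducing fresh unary second-order variables $Z_1, \ldots, Z_n$ that act as a \emph{tag} selecting the active summand, while pinning the variables of inactive summands to a unique canonical value. Concretely, define
\[
\eta_i \ :=\ (\fa{u} Z_i(u)) \,\wedge\, \bigwedge_{j \neq i} (\fa{u}\, \neg Z_j(u)),
\]
\[
\kappa_i \ :=\ \bigwedge_{j \neq i} \Big( \bigwedge_{k} \fa{\bar u}\,\neg Y_j^k(\bar u) \ \wedge\ \bigwedge_{k} \fa{w}\, y_j^k \leq w \Big),
\]
both conjunctions of $\Pi_1$-formulas and thus in $\L$ by $\Pi_1 \subseteq \L$ and closure under conjunction. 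Setting $\Phi := \bigvee_{i=1}^n (\varphi_i \wedge \eta_i \wedge \kappa_i) \in \L$, the candidate PNF is
\[
\sa{\bar Z}\, \sa{\bar Y_1}\cdots \sa{\bar Y_n}\, \sa{\bar y_1}\cdots \sa{\bar y_n}\, \Phi.
\]
Because the tags $\eta_i$ are pairwise disjoint, $\sem{\Phi}$ splits as $\sum_i \sem{\varphi_i \wedge \eta_i \wedge \kappa_i}$; for each $i$, $\eta_i$ and $\kappa_i$ force $\bar Z$, the foreign second-order variables $\bar Y_j$ (to empty) and the foreign first-order variables $\bar y_j$ (to the minimum element) to a unique canonical value, reducing the $i$-th term exactly to $\sem{\alpha_i}$; summing recovers $\sem{\alpha}$.

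The main obstacle is the canonicity step: combining several SNF summands into one PNF requires $\L$-definable canonical witnesses for the dummy variables of the inactive summands, and these witnesses (an empty relation, the minimum element) are naturally $\Pi_1$ rather than $\Sigma_1$ or quantifier-free. This is precisely where $\Pi_1 \subseteq \L$ is needed, and conversely explains why the constant $2$ already eludes both $\Sigma_0$-PNF and $\Sigma_1$-PNF.
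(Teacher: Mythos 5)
Your proposal is correct, but it takes a genuinely different route from the paper in both halves. For the negative half, the paper uses two different witnesses: for $i=0$ it takes $(\sa{X} 1) + 1$ and invokes the growth/parity argument of \cite{SalujaST95}, and for $i=1$ it takes the constant $2$ and duplicates a structure (disjoint union) to force a third satisfying assignment. You instead use the constant $2$ for both $i=0,1$, via a preservation-under-extension lemma comparing the one-element ordered structure $\A_1$ with a two-element extension $\A_2$: any of the (exactly two) satisfying assignments over $\A_1$, which necessarily differ in their second-order part, lifts to at least two satisfying assignments over $\A_2$, giving value at least $4$. This is valid and even a shortcut, since the claim that $2$ is not $\Sigma_1$-PNF-expressible subsumes the $\Sigma_0$ case, and it still supports the later uses of the theorem in the paper ($\E{i}\subsetneq\QE{i}$ and $\QE{0}\not\subseteq\E{1}$); just state explicitly that over an arbitrary signature you take $\A_2$ to be an extension of $\A_1$ so that $\A_1$ is an induced substructure. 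For the positive half, the paper merges summands two at a time through a four-way case split built from ``first''/``last'' assignments, and then has to eliminate the existential quantifiers in the ``not-first''/``not-last'' subformulae by a $\Pi_1$ first-witness trick; you merge all $n$ summands at once using fresh unary tag relations $Z_1,\dots,Z_n$ and pin the variables of inactive summands to canonical $\Pi_1$-definable values (empty relations, the minimum element). Since the tags make the disjuncts of $\Phi$ mutually exclusive (on nonempty domains) and the canonical values are unique, the count of the single PNF formula splits exactly into $\sum_i \sem{\alpha_i}$, and only $\logu{1}\subseteq\LL$ plus closure under conjunction and disjunction is used---precisely the hypotheses. Your construction is simpler and avoids the paper's delicate partition bookkeeping, at the mild cost of introducing auxiliary tag variables; two points worth making explicit are the nonempty-domain assumption behind the mutual exclusivity of the $\eta_i$, and that in the SNF step only bound variables are renamed apart so that any free variables are left untouched.
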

\begin{proof}

From now on, for every first-order tuple $\bar{x}$ or second-order tuple $\bar{X}$ we write $\length{\bar{x}}$ or $\length{\bar{X}}$ as the number of variables in $\bar{x}$ or $\bar{X}$ respectively. 
We divide the proof in three parts.

First, we prove that the formula $\alpha_{0} = \left( \sa{X} 1 \right) + 1$ with $\arity(X) = 1$ (i.e.\ the function $2^{\vert\A\vert}+1$) is not equivalent to any formula in $\loge{0}$-PNF. Suppose that there exists some formula $\alpha = \sa{\bar{X}}\sa{\bar{x}}\varphi(\bar{X},\bar{x})$ in $\loge{0}$-PNF that is equivalent to $\alpha_0$.
In \cite{SalujaST95}, it was proved that if $\length{\bar{X}} > 0$, the function defined by $\alpha$ is always even for big enough structures, which is not possible in our case.
On the other hand, if $\alpha$ is of the form $\sa{\bar{x}}\varphi(\bar{x})$, then $\alpha$ defines a polynomially bounded function which leads to a contradiction.

Second, we prove that the formula $\alpha_{1} = 2$ (i.e.\ $\sem{\alpha_{1}}$ is the constant function $2$) is not equivalent to any formula in $\loge{1}$-PNF. 
Suppose that there exists some formula $\alpha = \sa{\bar{X}}\sa{\bar{x}}\exists\bar{y}\, \varphi(\bar{X},\bar{x},\bar{y})$ in $\loge{1}$-PNF that is equivalent to $\alpha_1$. 
First, if $\length{\bar{X}} = \length{\bar{x}} = 0$, then the function defined by $\alpha$ is never greater than 1. 
Therefore, suppose that $\length{\bar{X}} > 0$ or $\length{\bar{x}} > 0$, and consider some ordered structure $\A$. 
Since $\sem{\alpha}(\A) = 2$, there exist at least two assignments $(\bar{B}_1,\bar{b}_1,\bar{a}_1)$, $(\bar{B}_2,\bar{b}_2,\bar{a}_2)$ to $(\bar{X},\bar{x},\bar{y})$ such that for $i\in\{1,2\}$: $\A\models\varphi(\bar{B}_i,\bar{b}_i,\bar{a}_i)$. 
Now consider the ordered structure $\A'$ that is obtained by taking the disjoint union of $\A$ twice. 
Indeed, each half of $\A'$ is isomorphic to $\A$. 
Note that $\A'\models\varphi(\bar{B}_i,\bar{b}_i,\bar{a}_i)$ for $i = 1,2$ and there exists a third assignment $(\bar{B}_1',\bar{b}_1',\bar{a}_1')$ that is isomorphic to $(\bar{B}_1,\bar{b}_1,\bar{a}_1)$, in the other half of the structure, such that $\A'\models\varphi(\bar{B}_1',\bar{b}_1',\bar{a}_1')$. 
As a result, we have that $\sem{\alpha}(\A') \geq 3$ which leads to a contradiction.

For the last part of the proof, we show that if $\LL$ contains $\logu{1}$ and is closed under conjunction and disjunction, then for every formula $\alpha$ in $\eqso(\LL)$ there exists an equivalent formula in $\LL$-PNF. 
Similarly as in the proof of Theorem~\ref{theo-pnf-snf}, we show a recursive function $\tau$ that produces such a formula. 
Assume that $\alpha = \sum_{i = 1}^n \alpha_i$ is in $\LL$-SNF where each $\alpha_i$ is in $\LL$-PNF. 
Without loss of generality, we assume that each $\alpha_i = \sa{\bar{X}}\sa{\bar{x}}\varphi_i(\bar{X},\bar{x})$ with $\length{\bar{X}} > 0$ and $\length{\bar{x}} > 0$. 
If that is not the case, we can replace each $\alpha_i$ by the equivalent formula
$$
\sa{\bar{X}} \sa{Y}\sa{\bar{x}}\sa{y}(\varphi_i(\bar{X},\bar{x})\wedge \fa{z} Y(z) \wedge \fa{z} z \leq y).
$$
Now we begin describing the function $\tau$. 
If $\alpha = \sa{\bar{X}}\sa{\bar{x}}\varphi(\bar{X},\bar{x})$, then the formula is already in $\LL$-PNF so we define $\tau(\alpha) = \alpha$. 
If $\alpha = \alpha_1 + \alpha_2$, then we assume that $\tau(\alpha_1) = \sa{\bar{X}}\sa{\bar{x}}\varphi(\bar{X},\bar{x})$ and $\tau(\alpha_2) = \sa{\bar{Y}}\sa{\bar{y}}\psi(\bar{Y},\bar{y})$. 
Our construction works by identifying a ``first'' assignment for both $(\bar{X},\bar{x})$ and $(\bar{Y},\bar{y})$ and a ``last'' assignment for both $(\bar{X},\bar{x})$ and $(\bar{Y},\bar{y})$ using the following formulae:
\begin{align*}
\gamma_{\text{first}}(\bar{X},\bar{x}) & \; = \;  \bigwedge_{i = 1}^{\length{\bar{X}}} \fa{\bar{z}}\neg X_i(\bar{z}) \wedge \fa{\bar{z}}(\bar{x}\leq\bar{z}), \\
\gamma_{\text{last}}(\bar{X},\bar{x}) & \; = \;  \bigwedge_{i = 1}^{\length{\bar{X}}} \fa{\bar{z}} X_i(\bar{z}) \wedge \fa{\bar{z}}(\bar{z}\leq\bar{x}).
\end{align*}
Similarly, we can define the formulae $\gamma_{\text{first}}(\bar{Y},\bar{y})$ and $\gamma_{\text{last}}(\bar{Y},\bar{y})$.
In other words, the ``first'' assignment is the one where every second-order predicate is empty and the first-order assignment is the lexicographically smallest, and the ``last'' assignment is the one where every second-order predicate is full and the first-order assignment is the lexicographically greatest. 
We also need to identify the assignments that are not first and the ones that are not last. 
We do this by negating the two formulae above and grouping together the first-order variables:
\begin{align*}
\gamma_{\text{not-first}}(\bar{X},\bar{x}) & \; = \; \ex{\bar{z}}(\bar{z}_0 < \bar{x} \vee \bigvee_{i = 1}^{\length{\bar{X}}}X(\bar{z}_i)), \\
\gamma_{\text{not-last}}(\bar{X},\bar{x}) & \; = \; \ex{\bar{z}}(\bar{x} < \bar{z}_0 \vee \bigvee_{i = 1}^{\length{\bar{X}}}\neg X(\bar{z}_i)),
\end{align*}
where $\bar{z} = (\bar{z}_0,\bar{z}_1,\ldots,\bar{z}_{\length{\bar{X}}})$. Then the following formula is equivalent to $\alpha$:
\begin{align}
\sa{\bar{X}}\sa{\bar{x}}\sa{\bar{Y}}\sa{\bar{y}}[&(\varphi(\bar{X},\bar{x})\wedge\gamma_{\text{not-first}}(\bar{X},\bar{x})\wedge\gamma_{\text{first}}(\bar{Y},\bar{y}))\vee \label{eq:partition1} \\
&(\varphi(\bar{X},\bar{x})\wedge\gamma_{\text{first}}(\bar{X},\bar{x})\wedge\gamma_{\text{last}}(\bar{Y},\bar{y}))\vee \label{eq:partition2}\\
&(\psi(\bar{Y},\bar{y})\wedge\gamma_{\text{first}}(\bar{X},\bar{x})\wedge\gamma_{\text{not-last}}(\bar{Y},\bar{y}))\vee \label{eq:partition3}\\
&(\psi(\bar{Y},\bar{y})\wedge\gamma_{\text{last}}(\bar{X},\bar{x})\wedge\gamma_{\text{last}}(\bar{Y},\bar{y}))]. \label{eq:partition4}
\end{align}

To show that the formula is indeed equivalent to $\alpha$, note that the formulae in lines (\ref{eq:partition1}) and (\ref{eq:partition2}) form a partition over the assignments of $(\bar{X},\bar{x})$, while fixing an assignment for $(\bar{Y},\bar{y})$, and the formulae in lines (\ref{eq:partition3}) and (\ref{eq:partition4}) form a partition over the assignments of $(\bar{Y},\bar{y})$, while fixing an assignment for $(\bar{X},\bar{x})$. 
Altogether the four lines define pairwise disjoint assignments for $(\bar{X},\bar{x}),(\bar{Y},\bar{y})$. 
With this, it is straightforward to show that the above formula is equivalent to $\alpha$. 
However, the formula is not yet in the correct form since it has existential quantifiers in the sub-formulae $\gamma_{\text{not-first}}$ and $\gamma_{\text{not-last}}$. 
To solve this, we can replace each existential quantifier by a first order sum that counts just the first assignment that satisfies the inner formula and this can be defined in $\logu{1}$. 
A similar construction was used in \cite{SalujaST95}. 

Finally, consider a $\eqso(\LL)$ formula $\alpha$ in $\LL$-SNF. 
If $\alpha = \sum_{i = 1}^n\alpha_i$, then by induction we consider $\alpha = \alpha_1 + (\sum_{i = 2}^n\alpha_i)$ and use $\tau(\alpha_1 + \tau(\sum_{i = 2}^n\alpha_i))$ as the rewrite of $\alpha$, which satisfies the hypothesis.

\end{proof}

\begin{figure*}
\begin{center}
	\begin{tikzpicture}
	\node[rectw] (n1) {$\E{0}$};
	\node[rectw, right=0.5cm of n1] (n2) {};
	\node[rectw, above=0.3cm of n2] (n3) {$\E{1}$}
		edge[draw=white] node {\rotatebox{45}{$\subsetneq$}} (n1);
	\node[rectw, below=0.5cm of n2] (n4) {$\QE{0}$}
		edge[draw=white] node {\rotatebox{315}{$\subsetneq$}} (n1);
	\node[rectw, right=0.5cm of n2] (n5) {$\QE{1}$}
		edge[draw=white] node {\rotatebox{315}{$\subsetneq$}} (n3)
		edge[draw=white] node {\rotatebox{45}{$\subsetneq$}} (n4);
	\node[rectw, right=0.8cm of n5] (n6) {$\QU{1}$}       
		edge[draw=white] node {$\subsetneq$} (n5);
	\node[rectw, below=0.3cm of n6] (n7) {$\U{1}$}       
		edge[draw=white] node {\rotatebox{90}{$=$}} (n6);
	\node[rectw, right=1.0cm of n6] (n8) {$\QE{2}$}       
		edge[draw=white] node {$\subsetneq$} (n6);
	\node[rectw, below=0.3cm of n8] (n9) {$\E{2}$}       
		edge[draw=white] node {\rotatebox{90}{$=$}} (n8);        
	\node[rectw, right=1.0cm of n8] (n10) {$\QU{2}$}       
		edge[draw=white] node {$\subsetneq$} (n8);
	\node[rectw, below=0.3cm of n10] (n11) {$\U{2}$}       
		edge[draw=white] node {\rotatebox{90}{$=$}} (n10); 
	\node[rectw, right=0.5cm of n10] (n12) {$\sfo$}       
		edge[draw=white] node {$=$} (n10); 
	\end{tikzpicture}
\end{center}
\caption{The relationship between the $\sfo$-hierarchy and the $\eqso(\fo)$-hierarchy, where $\E{1}$ and $\QE{0}$ are incomparable. \label{fig-sfo-eqso}}
\vspace{-0.1cm}
\end{figure*}

As a consequence of Proposition~\ref{theo-pnf-snf} and Theorem~\ref{theo-pi1-pnf}, we obtain that $\E{i} \subsetneq \QE{i}$ for $i = 0,1$, and that $\sh{\LL} = \eqso(\LL)$ for $\LL$ equal to  $\Pi_1$, $\Sigma_2$ or $\Pi_2$. The following proposition completes our picture of the relationship between the $\sfo$-hierarchy and the $\eqso(\fo)$-hierarchy.
\begin{prop}\label{prop-rest}
The following properties hold:
\begin{itemize}
\item $\QE{0}$ and $\E{1}$ are incomparable, that is, $\E{1} \not\subseteq \QE{0}$ and $\QE{0} \not\subseteq \E{1}$,
\item $\QE{1} \subsetneq \QU{1}$.
\end{itemize}
\end{prop}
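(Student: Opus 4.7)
My plan is to split the proof into the two bullet points, addressing each non-containment separately in the first bullet. For $\QE{0} \not\subseteq \E{1}$, the witness is the constant function $2$, expressible in $\QE{0}$ as $1 + 1$; by the second assertion of Theorem~\ref{theo-pi1-pnf}, this constant is not equivalent to any $\loge{1}$-PNF formula, hence not in $\E{1}$.

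For $\E{1} \not\subseteq \QE{0}$, I would consider the function $f(\A_n) = 2^n - 1$ on the signature $\{<\}$, which lies in $\E{1}$ via the $\Sigma_1$ formula $\ex{x} X(x)$ with $X$ a fresh unary second-order variable. Suppose for contradiction that some $\alpha \in \QE{0}$ satisfies $\sem{\alpha}(\A_n) = 2^n - 1$ for all $n$. Using Proposition~\ref{theo-pnf-snf}, rewrite $\alpha$ in $\loge{0}$-SNF as $\sum_i \sa{\bar{X}_i}\sa{\bar{x}_i}\varphi_i$ and partition the summands by whether $\bar{X}_i = \emptyset$. The first-order summands contribute a polynomially bounded function $p(n)$, while each second-order summand, as in the proof of Theorem~\ref{theo-pi1-pnf}, evaluates to a sum of terms of the form $|S(\bar{a})| \cdot 2^{N(n) - t(\bar{a})}$, where $N(n)$ grows polynomially and $t(\bar{a})$ is bounded by the syntactic size of $\varphi_i$; hence the second-order contribution is divisible by $2^\ell$ for every fixed $\ell$ and all sufficiently large $n$. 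It follows that $p(n) \equiv -1 \pmod{2^\ell}$ for every $\ell$ and every large enough $n$. Because integer-valued polynomials are periodic modulo $2^\ell$ from the start, the congruence propagates to all $n \geq 0$; evaluating at $n = 0$ gives $p(0) \equiv -1 \pmod{2^\ell}$ for every $\ell$, which forces $p(0) = -1$ and contradicts $p(0) \geq 0$.

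For $\QE{1} \subsetneq \QU{1}$, the containment $\QE{1} \subseteq \QU{1}$ combines three ingredients: Theorem~\ref{theo-pi1-pnf} gives $\QU{1} = \U{1}$; the witness-selection trick converts any $\Sigma_1$ formula $\ex{\bar{y}}\psi(\bar{x},\bar{y},\bar{X})$ into the $\Pi_1$ formula $\psi(\bar{x},\bar{y},\bar{X}) \wedge \fa{\bar{y}'}(\bar{y}' < \bar{y} \to \neg\psi(\bar{x},\bar{y}',\bar{X}))$, promoting $\bar{y}$ to a free variable counting the lexicographically least witness and giving $\E{1} \subseteq \U{1} = \QU{1}$; and $\QU{1}$ is closed under finite sums by construction. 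Applying these facts to any $\QE{1}$ formula in $\loge{1}$-SNF yields an equivalent $\QU{1}$ formula.

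The strictness $\QE{1} \neq \QU{1}$ is what I expect to be the main obstacle. The plan is to fix a function $g \in \U{1} \setminus \E{1}$ provided by Saluja et al.'s separation and argue $g \notin \QE{1}$ by identifying a structural invariant of $\E{1}$ functions used in their argument that is preserved under finite sums and that $g$ violates. Ensuring such an invariant exists, and extending Saluja et al.'s argument to rule out representations of $g$ as a finite sum of $\E{1}$-functions (i.e., rule out the $\loge{1}$-SNF form), is the delicate step.
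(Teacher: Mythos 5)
Your first bullet is in order. The direction $\QE{0} \not\subseteq \E{1}$ via the constant $2$ and Theorem~\ref{theo-pi1-pnf} is exactly the paper's argument. For $\E{1} \not\subseteq \QE{0}$ you take a genuinely different route: the paper instead uses the $\E{1}$-formula $\ex{y}(x<y)$, i.e.\ the function $\size{\A}-1$ over the signature $\{<\}$, discards second-order summands by Saluja et al.'s $\Omega(2^{\size{\A}})$ lower bound, and then proves a short claim that any sum $\sa{\bar{x}}\varphi(\bar{x})$ with $\varphi$ quantifier-free is either null, at least $n$, or in $\Omega(n^2)$ --- no arithmetic modulo powers of two is needed. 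Your argument with $2^n-1$ is correct in substance (over the pure order signature the first-order contribution really is an integer-valued polynomial, the second-order contribution is divisible by $2^{N(n)-t}$ with $t$ a syntactic constant, and integer-valued polynomials are purely periodic modulo $2^{\ell}$), but it is heavier than the paper's, and the two number-theoretic facts you invoke are asserted rather than proved; they would need justification or a citation.

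The genuine gap is the strictness in the second bullet. The inclusion $\QE{1} \subseteq \QU{1}$ via the least-witness rewriting coincides with the paper's proof, but for $\QE{1} \neq \QU{1}$ you only describe a plan --- find ``a structural invariant of $\E{1}$ functions preserved under finite sums'' --- without exhibiting either the invariant or the separating function, and you yourself flag this as the delicate step; as written, the proposition is not established. The paper closes this with a short concrete argument: take the $\QU{1}$ (indeed $\U{1}$) formula $\sa{x}\fa{y}(y=x)$, whose value is $1$ on one-element structures and $0$ otherwise. If it equalled some $\sum_{i}\sa{\bar{X}}\sa{\bar{x}}\ex{\bar{y}}\,\varphi_i(\bar{X},\bar{x},\bar{y})$ with $\varphi_i$ quantifier-free, then a satisfying assignment on a one-element structure $\A'$ would still satisfy $\varphi_i$ in the structure obtained by duplicating $\A'$ (quantifier-free matrices, hence existential formulae, are preserved when passing to an extension), so the sum would remain positive on a two-element structure, a contradiction. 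The invariant you were hoping for is precisely this preservation of positivity under structure duplication --- the same trick already used in the proof of Theorem~\ref{theo-pi1-pnf} to show that the constant $2$ is not $\loge{1}$-PNF definable --- and it is automatically compatible with finite sums, so no extension of Saluja et al.'s separation is required.
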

\begin{proof}

We divide this proof into three parts.
First, we show that $\E{1} \not\subseteq \QE{0}$. 
For this inclusion to be true, it is required to hold for an arbitrary ordered relational signature $\R$, so it suffices to show that it  is not true for at least one such a signature.
Let $\R$ be the ordered signature that contains only the relation name $<$.
Suppose that there is a $\QE{0}$ formula $\alpha$ over $\R$ which is equivalent to the $\E{1}$-formula $\sa{x} \ex{y} (x < y)$. 
That is, for every finite $\R$-structure $\A$, $\sem{\alpha}(\A) = \size{\A} - 1$.

Suppose that $\alpha$ is in SNF, namely, $\alpha = \sum_{i = 1}^k \alpha_i$ for some fixed $k$. 
Since $\sem{\alpha}$ is not the identically zero function, consider some $\alpha_i$ that describes a non-null function. 
Let $\alpha_i = \sa{\bar{X}}\sa{\bar{x}}\varphi(\bar{X},\bar{x})$ where $\varphi$ is quantifier-free. 
Notice that if $\length{\bar{X}} > 0$, then the function $\sem{\alpha}$ is in $\Omega(2^{\size{\A}})$, as it was proven in~\cite{SalujaST95}. 
Therefore, we have that $\alpha_i = \sa{\bar{x}}\varphi(\bar{x})$. 
We conclude our proof with the following claim, from which we conclude that $\alpha = \sum_{i = 1}^k \alpha_i$ cannot be equivalent to the $\E{1}$-formula $\sa{x} \ex{y} (x < y)$. 
\begin{clm}
	Let $\beta = \sa{\bar{x}}\varphi(\bar{x})$	where $\varphi$ is quantifier free. 
	Then the function $\sem{\beta}$ is either null, greater or equal to $n$, or is in $\Omega(n^2)$, where $n$ is the size of the input structure.
\end{clm}
\begin{proof}
Assume that $\bar x = (x_1, \ldots, x_m)$, and notice that each atomic sub-formula in $\varphi(\bar{x})$ is either $(x_i = x_j)$, $(x_i < x_j)$, $\top$ or a negation thereof, where $i,j \in \{1, \ldots, m\}$. 
	Suppose $\sem{\beta}$ is not null and consider some $\R$-structure $\A$ such that $\sem{\beta}(\A) > 0$. Hence, there exists an assignment $\bar a = (a_1, \ldots, a_m)$
	for $\bar{x}$ such that $\A\models\varphi(\bar{a})$.
	Given this assignment, define an equivalence relation $\sim$ on $\{x_1, \ldots, x_m\}$ as follows: $x_i \sim x_j$ if and only if $a_i = a_j$, and assume that $\sim$ partitions $\{x_1, \ldots, x_m\}$ into $\ell$ equivalence classes, where $\ell \geq 1$.
Then we have that there exist at least $\binom{\size{\A}}{\ell}$ assignments $\bar b$ for $\bar{x}$ such that $\A\models\varphi(\bar{b})$. Thus, given that if $\ell = 1$, then $\binom{\size{\A}}{\ell} = \size{\A}$, and if $\ell \geq 2$, then $\binom{\size{\A}}{\ell} \in \Omega(\size{\A}^2)$, we conclude that the claim holds.
\end{proof}

Now we show that $\QE{0} \not\subseteq \E{1}$. 
In Theorem \ref{theo-pi1-pnf} we proved that there is no formula in $\loge{1}$-PNF equivalent to the formula $\alpha = 2$. 
Every formula in $\E{1}$ can be expressed in $\loge{1}$-PNF, which implies that $2 \not\in \E{1}$. Therefore, given that $2 \in \QE{0}$ by the definition of this logic, we conclude that $\QE{0} \not\subseteq \E{1}$.

Finally, we prove that $\eqso(\loge{1})\subsetneq\eqso(\logu{1})$. 
For inclusion, let $\alpha$ be a formula in $\eqso(\loge{1})$. 
Suppose that it is in $\loge{1}$-SNF, namely, $\alpha = c + \sum_{i = 1}^{n}\alpha_i$. 
Let $\alpha_i = \sa{\bar{X}}\sa{\bar{x}}\ex{\bar{y}}\varphi_i(\bar{X},\bar{x},\bar{y})$, where $\varphi_i$ is quantifier-free for each $\alpha_i$. 
We use the same construction used in \cite{SalujaST95}, and we obtain that the formula $\ex{\bar{y}}\varphi_i(\bar{X},\bar{x},\bar{y})$ is equivalent to $\sa{\bar{y}}\,[\varphi_i(\bar{X},\bar{x},\bar{y}) \wedge \fa{\bar{y}'}(\varphi_i(\bar{X},\bar{x},\bar{y}')\to\bar{y}\leq\bar{y}')]$ for every assignment to $(\bar{X},\bar{x})$. 
We do this replacement for each $\alpha_i$, and we obtain an equivalent formula to $\alpha$ in $\eqso(\logu{1})$.

To prove that the inclusion is proper, consider the $\eqso(\logu{1})$ formula $\sa{x} \fa{y}(y = x)$. 
This formula defines the following function over each ordered structure $\A$:
$$
\sem{\alpha}(\A) = 
\begin{cases}
1 &\A \text{ has one element}\\
0 &\text{ otherwise}.
\end{cases}
$$
Suppose that there exists an equivalent formula $\alpha$ in $\eqso(\loge{1})$. 
Also, suppose that it is in $\LL$-SNF, so $\alpha = \sum_{i = 1}^n\sa{\bar{X}}\sa{\bar{x}}\ex{\bar{y}}\varphi_i(\bar{X},\bar{x},\bar{y})$. 
Consider a structure $\A'$ with one element. 
We have that for some $i$, there exists an assignment $(\bar{B},\bar{b},\bar{a})$ for $(\bar{X},\bar{x},\bar{y})$ such that $\A' \models\varphi_i(\bar{B},\bar{b},\bar{a})$. 
Consider now the structure $\A''$ that is obtained by duplicating $\A'$, as we did for Theorem \ref{theo-pi1-pnf}. 
Note that $\A''\models\varphi_i(\bar{B},\bar{b},\bar{a})$, which implies that $\sem{\alpha}(\A' \uplus \A'') > 1$, which leads to a contradiction.
%

\end{proof}
The relationship between the two hierarchies is summarized in Figure \ref{fig-sfo-eqso}.
Our hierarchy and the one proposed in~\cite{SalujaST95} only differ in~$\Sigma_0$ and~$\Sigma_1$. 
Interestingly, we show next that this difference is crucial for finding classes of functions with easy decision versions and good closure properties.


\subsection{Defining a class of functions with easy decision versions and good closure properties}
\label{sec-clo}

We use the \emph{$\eqso(\fo)$-hierarchy} to define syntactic classes of functions with good algorithmic and closure properties.  
But before doing this, we introduce a more strict notion of counting problem with easy decision version.
Recall that a function $f\colon\Sigma^* \to \mathbb{N}$ has an easy decision counterpart if $L_f = \{ x \in \Sigma^* \mid f(x) > 0 \}$ is a language in $\ptime$. As the goal of this section is to define a syntactic class of functions in $\shp$ with easy decision versions and good closure properties, we do not directly consider the semantic condition $L_f  \in \ptime$, but instead we consider a more restricted syntactic condition. More precisely, a function $f\colon\Sigma^* \to \mathbb{N}$ is said to be in the complexity class $\totp$~\cite{PagourtzisZ06} if there exists a  polynomial-time NTM $M$ such that $f(x) = \tmt_M(x) - 1$ for every $x \in \Sigma^*$, where $\tmt_M(x)$ is the total number of runs of $M$ with input $x$. Notice that one is subtracted from $\tmt_M(x)$ to allow for $f(x) = 0$. Besides, notice that $\totp \subseteq \shp$ and that $f \in \totp$ implies that $L_f \in \ptime$. 

The complexity class $\totp$ contains many important counting problems with easy decision counterparts, such as $\cpm$, $\cdnf$, and $\chsat$ among others~\cite{PagourtzisZ06}. Besides, $\totp$ has good closure properties as it is closed under sum, multiplication and subtraction by one. However, some functions in $\totp$ do not admit FPRAS under standard complexity-theoretical assumptions.\footnote{As an example consider the problem of counting the number of independent sets in a graph, and the widely believed assumption that $\np$ is not equal to the randomized complexity class $\rp$ (Randomized Polynomial-Time \cite{G77}). This counting problem is in $\totp$, and it is known that $\np = \rp$ if there exists an FPRAS for it \cite{DFJ02}.}
Hence, we use the $\eqso(\fo)$-hierarchy to find restrictions of $\totp$ with good approximation and closure properties.

It was proved in \cite{SalujaST95} that every function in $\E{1}$ admits an FPRAS. Besides, it can be proved that $\E{1} \subseteq \totp$. 
However, this class is not closed under sum, so it is not robust under the basic closure properties we are looking for. 
\begin{prop}\label{prop-e1-nc}
There exist functions $f, g \in \E{1}$ such that $(f + g) \not\in \E{1}$.
\end{prop}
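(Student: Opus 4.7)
The plan is to pick two functions $f, g \in \E{1}$ whose pointwise sum is the constant function~$2$, and then to invoke Theorem~\ref{theo-pi1-pnf} to conclude that this sum cannot lie in $\E{1}$.

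Concretely, I would take $f$ and $g$ both equal to the constant function~$1$. Over any signature, the $\Sigma_1$-formula $\top$ with empty free-variable tuples realizes this function: since $f_{\top}(\A) = \size{\{()\mid \A \models \top\}} = 1$ for every ordered structure~$\A$, we have $f = g = f_\top \in \E{0} \subseteq \E{1}$. Their pointwise sum is therefore the constant function~$2$.

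It remains to argue that the constant function~$2$ does not belong to $\E{1}$. For this, I would appeal to the correspondence recalled in the excerpt: a function $f_{\varphi(\bar{x},\bar{X})}$ in $\sh{\LL}$ coincides with the $\eqso(\LL)$-formula $\sa{\bar{X}}\sa{\bar{x}}\varphi(\bar{x},\bar{X})$ in $\LL$-PNF. Hence $\E{1}$ is exactly the class of functions definable by some $\Sigma_1$-PNF formula, so if the constant function~$2$ were in $\E{1}$, then it would be expressible in $\Sigma_1$-PNF, contradicting the second part of Theorem~\ref{theo-pi1-pnf}, which rules this out via a doubling argument on the input structure.

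The only real obstacle in the proposition is the lower bound stating that $2$ admits no $\Sigma_1$-PNF representation, but that step has already been carried out inside the proof of Theorem~\ref{theo-pi1-pnf}; hence no further hardness work is required. The creative task reduces to exhibiting a concrete pair $(f,g)$ in $\E{1}$ whose sum escapes $\E{1}$, and the trivial choice $f = g = 1$ does the job.
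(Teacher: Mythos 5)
Your reduction to the inexpressibility of the constant function $2$ is sound as far as it goes: the paper itself argues exactly this way in Proposition~\ref{prop-rest} (every formula in $\E{1}$ can be written in $\loge{1}$-PNF, hence $2\notin\E{1}$ by the $i=1$ case of Theorem~\ref{theo-pi1-pnf}; note this is the \emph{first} part of that theorem, not the second, and the fact that the witness is the constant $2$ appears only in its proof, not its statement). The step that is not secured is the very first one: that the constant function $1$ belongs to $\E{1}$. Your witness is the sentence $\top$ viewed as a $\Sigma_1$-formula with empty tuples of free variables, so everything hinges on the convention that the Saluja-style classes $\sh{\LL}$ admit formulas with no free variables at all. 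The paper never commits to this convention; it only records the translation from $\sh{\LL}$ into $\LL$-PNF, and in places (e.g.\ the proof of Proposition~\ref{prop:qe0-fp-qe1-totp-fptras}) it treats bare $\loge{1}$-sentences separately from ``formulas in $\E{1}$''. If the intended reading requires at least one free (first- or second-order) variable, then the constant function $1$ is provably \emph{not} in $\E{1}$, by the same disjoint-copy argument you invoke: a $\Sigma_1$-formula with a nonempty free first-order tuple that is satisfied once in $\A$ acquires at least two satisfying assignments on the juxtaposition of two copies of $\A$; and if the free tuple is purely second-order, the only way to escape this is for the unique satisfier to be the all-empty assignment, which is then also defeated by adding fresh elements and placing a tuple on them. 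Moreover, no other decomposition rescues the plan: since $2\notin\E{1}$ (and constant $0$ paired with constant $2$ is useless), any $f,g\in\E{1}$ with $f+g=2$ would force $f=g=1$, so the whole strategy of targeting the constant $2$ stands or falls with that membership claim.

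The paper's own proof is arranged precisely so as not to depend on this point: it takes $f=g=\sem{\sa{x}(x=x)}=\size{\A}$, which is in $\E{1}$ via a formula with a genuine free variable, and then shows directly that $2\size{\A}$ admits no $\loge{1}$-PNF definition, first using the $\Omega(2^{\size{\A}})$ growth bound from \cite{SalujaST95} to exclude second-order free variables, and then evaluating on a one-element structure to exclude the purely first-order case. So, if you make explicit (and justify from the definition of $\sh{\LL}$) that sentences are admitted, your argument is a correct and considerably shorter alternative; otherwise the claim $1\in\E{1}$ is a genuine gap, and repairing it pushes you back to a non-constant target such as $2\size{\A}$, i.e.\ essentially to the paper's argument.
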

\begin{proof}
Towards a contradiction, assume that $\E{1}$ is closed under binary sum. 
Consider the formula $\alpha = \sa{x}(x = x) \in \E{1}$ over some signature $\R$. 
This defines the function $\sem{\alpha}(\A) = \size{\A}$. 
From our assumption, there exists some formula in $\E{1}$ equivalent to the formula $\alpha \add \alpha$, which describes the function $2\size{\A}$. 
Let $\sa{\bar{X}}\sa{\bar{x}}\exists\bar{y}\,\varphi(\bar{X},\bar{x},\bar{y})$ be this formula, where $\varphi$ is first-order and quantifier-free. 
For each $\R$-structure $\A$, we have the following inequality:
$$
\sem{\sa{\bar{X}}\sa{\bar{x}}\sa{\bar{y}}\,\varphi(\bar{X},\bar{x},\bar{y})}(\A)
\leq 
\sem{\sa{\bar{X}}\sa{\bar{x}}\exists\bar{y}\,\varphi(\bar{X},\bar{x},\bar{y})}(\A) \cdot  \size{\A}^{\length{\bar{y}}} \leq 2\size{\A}^{\length{\bar{y}}+1} 
$$
Note that the formula $\sa{\bar{X}}\sa{\bar{x}}\sa{\bar{y}}\,\varphi(\bar{X},\bar{x},\bar{y})$ defines a function in $\E{0}$. 
Therefore, as it was proven in \cite{SalujaST95}, if $\length{\bar{X}} > 0$ then the function is in $\Omega(2^{\size{\A}})$, which violates the inequality for large structures.

We now have that $\length{\bar{X}} = 0$.
Consider a structure $\mathfrak{1}$ with only one element. 
We have that $\sem{\sa{\bar{x}}\exists\bar{y}\,\varphi(\bar{x},\bar{y})}(\mathfrak{1}) = 2$, but since the structure has only one element, there is only one possible assignment to $\bar{x}$. 
And so, $\sem{\sa{\bar{x}}\exists\bar{y}\,\varphi(\bar{x},\bar{y})}(\mathfrak{1}) \leq 1$, which leads to a contradiction.
%

\end{proof}
To overcome this limitation, one can consider the class $\QE{1}$, which is closed under sum by definition. In fact, the following proposition shows that the same good properties as for $\E{1}$ hold for $\QE{1}$, together with the fact that it is closed under sum and multiplication.
\begin{prop} \label{prop:qe0-fp-qe1-totp-fptras}
$\QE{1} \subseteq \totp$ and every function in $\QE{1}$ has an FPRAS. Moreover, $\QE{1}$ is closed under sum and multiplication.
\end{prop}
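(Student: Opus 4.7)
The strategy is to exploit Proposition~\ref{theo-pnf-snf}, which lets us rewrite every $\eqso(\Sigma_1)$-formula as a sum of formulas in $\Sigma_1$-PNF, i.e.\ of the form $\sa{\bar X}\sa{\bar x}\ex{\bar y}\,\varphi(\bar X,\bar x,\bar y)$ with $\varphi$ quantifier-free, and to handle each of the three claims at the level of such PNF formulas. Closure under sum is immediate from the $\eqso$-grammar: $(\alpha + \beta) \in \eqso(\Sigma_1)$ whenever $\alpha,\beta \in \eqso(\Sigma_1)$. For closure under multiplication, I would put $\alpha$ and $\beta$ in $\Sigma_1$-SNF, so $\alpha = \sum_i \alpha_i$ and $\beta = \sum_j \beta_j$ with each $\alpha_i,\beta_j$ in PNF, and then use $\alpha\cdot\beta = \sum_{i,j}(\alpha_i\cdot\beta_j)$. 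After renaming bound variables so the blocks of $\alpha_i$ and $\beta_j$ are disjoint, the identity $(\sum_a f(a))(\sum_b g(b)) = \sum_{a,b} f(a)g(b)$ together with $\sem{\varphi}\cdot\sem{\psi} = \sem{\varphi\wedge\psi}$ for Boolean kernels collapses each $\alpha_i\cdot\beta_j$ into a single $\Sigma_1$-PNF formula, so the full expression lies in $\QE{1}$.

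For $\QE{1}\subseteq\totp$, recall that $\totp$ is closed under sum~\cite{PagourtzisZ06}, so by SNF it suffices to place each PNF formula in $\totp$. Let $f = \sem{\sa{\bar X}\sa{\bar x}\ex{\bar y}\,\varphi}$. Membership in $\shp$ is straightforward: an NTM guesses $(\bar B,\bar a)$ and verifies $\ex{\bar y}\varphi$ by iterating over $\bar y$. The crux is showing $L_f\in\ptime$: $L_f(\A)$ asks whether some triple $(\bar B,\bar a,\bar c)$ satisfies $\varphi$, and since $\varphi$ is quantifier-free, once $\bar a$ and $\bar c$ are fixed, $\varphi(\bar X,\bar a,\bar c)$ mentions only a constant number of atoms of the form $X_i(\bar t)$ (bounded by $|\varphi|$), so its propositional satisfiability over those atom truth values takes constant time; iterating over the polynomially many choices of $(\bar a,\bar c)$ then decides $L_f$ in polynomial time. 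Self-reducibility of $f$ is also clear: branch on a single value of some $x_j$, or on the membership of a single tuple in some $X_i$, to obtain a smaller PNF instance. Combined with $L_f\in\ptime$, the Pagourtzis--Zachos characterization of $\totp$ as the class of self-reducible $\shp$-functions with polynomial-time decision versions yields $f\in\totp$.

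For the FPRAS claim, again by SNF and the fact that a finite sum of FPRAS-admitting functions admits an FPRAS (approximate each summand to tolerance $\varepsilon/n$ and sum), it suffices to handle the PNF case; but $\sem{\sa{\bar X}\sa{\bar x}\ex{\bar y}\,\varphi}$ is precisely the Saluja-style counting function associated with the $\Sigma_1$-formula $\ex{\bar y}\varphi(\bar X,\bar x,\bar y)$, so it belongs to $\E{1}$, which already admits an FPRAS by~\cite{SalujaST95}. The main obstacle I expect is the verification that $L_f\in\ptime$ in the $\totp$ argument: one is initially tempted to think that the second-order existential sums over $\bar X$ push the decision question into NP, but the absence of first-order quantifiers inside the quantifier-free kernel $\varphi$ caps the number of distinct atoms to a constant per tuple $(\bar a,\bar c)$ and keeps the per-tuple test in constant time. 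The multiplication step also requires careful bookkeeping with variable renaming and a small use of distributivity, but is otherwise routine.
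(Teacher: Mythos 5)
Your proposal is correct in substance, and for the closure properties and the FPRAS claim it coincides with the paper's argument: closure under sum is by definition, closure under multiplication is proved by putting both operands in $\loge{1}$-SNF, distributing, and conjoining the kernels (the paper isolates this as a lemma for any $\LL$ closed under conjunction), and the FPRAS is obtained by observing that each $\loge{1}$-PNF summand is exactly a Saluja-style $\E{1}$ function, invoking the FPRAS of \cite{SalujaST95}, and summing approximations. Where you genuinely diverge is the inclusion $\QE{1} \subseteq \totp$: the paper does not argue about the decision version directly, but instead reuses the \emph{product reduction} of \cite{SalujaST95} from every $\E{1}$ function to a restricted $\cdnf$, uses $\cdnf \in \totp$ and $\fp \subseteq \totp$, and stitches the summands together with branching; your route is more self-contained, establishing $L_f \in \ptime$ from first principles via the observation that for fixed $(\bar a,\bar c)$ the quantifier-free kernel mentions only constantly many second-order atoms, any truth assignment to which is realizable by actual relations. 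That observation is sound (and is close in spirit to the ``small witness'' idea the paper exploits later for subtraction by one). The one step you should tighten is the appeal to the Pagourtzis--Zachos characterization: as stated, ``branch on a value of $x_j$ or on membership of a tuple in $X_i$ to get a smaller PNF instance'' does not produce an instance of the \emph{same} problem (the formula is fixed and the structure does not change), so $f$ is not literally self-reducible in their sense. This is easily repaired: either build the $\totp$ machine directly by the standard branch-and-prune enumeration of the pairs $(\bar B,\bar a)$, using your polynomial-time test (which works just as well under a partial specification of $\bar B$, since satisfiability of one $\bar c$ with one completion suffices) to prune dead branches so that the tree has exactly $\sem{\alpha}(\A)$ surviving leaves plus the dummy branch; or parsimoniously reduce to a genuinely self-reducible problem over an expanded signature carrying the partial assignment. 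With that adjustment your proof is complete, and arguably more informative than the paper's, since it avoids importing the $\cdnf$ reduction machinery.
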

\begin{proof}

The authors in \cite{SalujaST95} proved that there exists a {\em product reduction} from every function in $\E{1}$ to a restricted version of $\cdnf$. 
That is, if $\alpha\in\E{1}$ over some signature $\R$, there exist polynomially computable functions $g\colon\ostr[\R]\to\ostr[\R_{\text{DNF}}]$ and $h\colon\nat\to\nat$ such that for every $\R$-structure $\A$, it holds that $\sem{\alpha}(\A) = \cdnf(\enc(g(\A)))\cdot h(\size{\A})$. We use this fact in the following arguments. 

To show that $\eqso(\loge{1})$ is contained  in \textsc{TotP}, let $\alpha$ be a $\eqso(\loge{1})$ formula and assume that it is in $\loge{1}$-SNF. 
That is, $\alpha = \sum_{i = 1}^n\alpha_i$ where each $\alpha_i$ is in $\loge{1}$-PNF. 
Consider the following nondeterministic procedure that on input $\enc(\A)$ generates $\sem{\alpha}(\A)$ branches. 
For each $\alpha_i = \varphi$, where $\varphi$ is a $\loge{1}$ formula, it checks if $\A\models\varphi$ in polynomial time and generates a new branch if that is the case. 
For each $\alpha_i = \sa{\bar{X}}\sa{\bar{x}}\varphi$, this formula is also in $\E{1}$. 
We use the reduction to $\cdnf$ provided in \cite{SalujaST95} and we obtain $g(\enc(\A))$, which is an instance to $\cdnf$. 
Since $\cdnf$ is also in $\totp$ \cite{PagourtzisZ06}, we simulate the corresponding nondeterministic procedure that generates exactly $\cdnf(\enc(g(\A)))$ branches. 
Since, $\fp\subseteq\totp$\cite{PagourtzisZ06}, each polynomially computable function is also in $\totp$, and then on each of these branches we simulate the corresponding nondeterministic procedure to generate $h(\size{\A})$ more. 
The number of branches for each $\alpha_i$ is $\sem{\alpha_i}(\A) = \cdnf(\enc(g(\A)))\cdot h(\size{\A})$, and the total number of branches is equal to $\sem{\alpha}(\A)$. 
We conclude that $\alpha\in\totp$.

To show that every function in $\eqso(\loge{1})$ has an FPRAS,  let $\alpha$ be a $\eqso(\loge{1})$ formula and assume that it is in $\loge{1}$-SNF. 
That is, $\alpha = \sum_{i = 1}^n\alpha_i$ where each $\alpha_i$ is in $\loge{1}$-PNF. 
Note that each $\alpha_i$ that is equal to some $\loge{1}$ formula $\varphi$ has an FPRAS given by the procedure that simply checks if $\A\models\varphi$ and returns 1 if it does and 0 otherwise. 
Also, each remaining $\alpha_i$ has an FPRAS since $\alpha_i\in \E{1}$ \cite{SalujaST95}. 
If two functions have an FPRAS, then their sum also has one given by the procedure that simulates them both and sums the results. 
We conclude that $\alpha$ has an FPRAS.

Finally, we show that $\eqso(\loge{1})$ is closed under sum and multiplication. 
Since $\eqso(\loge{1})$ is closed under sum by definition, we focus only on proving that it is closed under multiplication. 
We prove this for the more general case of $\eqso(\LL)$ with $\LL$ being a fragment of $\so$.

\begin{lem} \label{conj-mult}
	If $\LL$ is a fragment closed under conjunction, then $\eqso(\LL)$ is closed under binary multiplication.
\end{lem}
\begin{proof}
	Given two formulae $\alpha, \beta$ in $\eqso(\LL)$ we will construct a formula in the logic which is equivalent to $(\alpha\mult \beta)$. 
	From what was proven in Proposition \ref{theo-pnf-snf}, we may assume that $\alpha$ and $\beta$ are in $\LL$-SNF. 
	Let $\alpha = \sum_{i = 1}^n\sa{\bar{X}_i}\sa{\bar{x}_i}\varphi_i(\bar{X}_i,\bar{x}_i)$, and $\beta = \sum_{i = 1}^m\sa{\bar{Y}_i}\sa{\bar{y}_i}\psi_i(\bar{Y}_i,\bar{y}_i)$. Expanding the product in $(\alpha\mult \beta)$ and reorganizing results in the equivalent formula
	$$
	\sum_{i = 1}^n\sum_{j = 1}^m\sa{\bar{X}_i}\sa{\bar{Y}_j}\sa{\bar{x}_i}\sa{\bar{y}_j}(\varphi_i(\bar{X}_i,\bar{x}_i)\wedge\psi_i(\bar{Y}_j,\bar{y}_j)),
	$$
	which is in $\LL$-SNF, and therefore, in $\eqso(\LL)$.
\end{proof}
Since $\loge{1}$ is closed under conjunction, we have that Lemma~\ref{conj-mult} can be applied to $\eqso(\loge{1})$, and we can deduce that $\eqso(\loge{1})$ is closed under multiplication. 
This concludes the proof of the proposition. 
%

\end{proof}
Hence, it only remains to prove that $\QE{1}$ is closed under subtraction by one. Unfortunately, it is not clear whether this property holds; in fact, we conjecture that it is not the case. Thus, we need to find an extension of $\QE{1}$ that keeps all the previous properties and is closed under subtraction by one. It is important to notice that $\shp$ is believed not to be closed under subtraction by one due to some complexity-theoretical assumption\footnote{A decision problem $L$ is in the randomized complexity class $\cspp$ if there exists a polynomial-time NTM $M$ such that for every $x \in L$ it holds that $\tma_M(x) - \tmr_M(x) = 2$, and for every $x \not\in L$ it holds that $\tma_M(x) = \tmr_M(x)$ \cite{OH93,FFK94}. It is believed that $\np \not\subseteq \cspp$.
However, if $\shp$ is closed under subtraction by one, then it holds that $\np \subseteq \cspp$ \cite{OH93}.}. So, the following proposition rules out any logic that extends $\Pi_1$ as a possible extension of~$\QE{1}$ with the desired closure property.
\begin{prop} \label{pi-minusone}
If $\Pi_1 \subseteq \LL \subseteq \fo$ and $\eqso(\LL)$ is closed under subtraction by one, then $\shp$ is closed under subtraction by~one. 
\end{prop}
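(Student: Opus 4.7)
The plan is to exhibit a parsimoniously $\shp$-complete problem that already lives in $\eqso(\logu{1})$, apply the hypothesis to obtain its ``$\dotminus 1$'' version inside $\shp$, and then lift this closure to every function in $\shp$ via polynomial-time pre-composition. The chain of inclusions $\eqso(\logu{1}) \subseteq \eqso(\LL) \subseteq \eqso(\fo)$ (from $\logu{1} \subseteq \LL \subseteq \fo$), together with Proposition~\ref{prop:capture-shP}, ensures that every function definable in $\eqso(\LL)$ already lies in $\shp$, which is what makes the transfer possible in principle.

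Concretely, I would take $F$ to be the problem of counting independent sets in an undirected graph. Over the graph signature $\bG = \{E,<\}$, the property ``$S$ is an independent set'' is $\logu{1}$-expressible by $\fa{x}\fa{y}(S(x)\wedge S(y)\wedge x\neq y \rightarrow \neg E(x,y))$, so the formula
\[
\alpha_F \ := \ \sa{S}\ \fa{x}\fa{y}\bigl(S(x)\wedge S(y)\wedge x\neq y \rightarrow \neg E(x,y)\bigr)
\]
belongs to $\eqso(\logu{1})$ and captures $F$. Moreover $F$ is $\shp$-complete under parsimonious reductions, a classical counting-complexity fact. Since $\logu{1} \subseteq \LL$, the formula $\alpha_F$ also lies in $\eqso(\LL)$, and the hypothesized closure of $\eqso(\LL)$ under subtraction by one produces a formula $\beta_F \in \eqso(\LL)$ with $\sem{\beta_F}(\fG) = F(\enc(\fG))\dotminus 1$ for every graph structure $\fG$. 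Because $\LL\subseteq\fo$, we have $\beta_F\in\eqso(\fo)$, and Proposition~\ref{prop:capture-shP} yields $F\dotminus 1 \in \shp$.

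To conclude, take any $G\in\shp$. Parsimonious $\shp$-completeness of $F$ gives a polynomial-time computable $h$ with $G(x) = F(h(x))$ for every input $x$, so
\[
G(x)\dotminus 1 \ = \ F(h(x))\dotminus 1 \ = \ (F\dotminus 1)(h(x)).
\]
Since $\shp$ is closed under polynomial-time pre-composition---prepend a deterministic routine computing $h$ to the NTM witnessing $F\dotminus 1 \in\shp$---the function $G\dotminus 1 = (F\dotminus 1)\circ h$ is in $\shp$. The subtle point is the choice of $F$: by Figure~\ref{fig-sfo-eqso}, $\U{1} \subsetneq \shp$, so a generic $\shp$-complete problem need not lie in $\eqso(\logu{1})$, and moreover the completeness must be \emph{parsimonious} (rather than merely Turing) so that $\dotminus 1$ commutes with the composition; counting independent sets is a standard candidate meeting both constraints, and the same argument would work with any other parsimoniously $\shp$-complete problem definable by a $\logu{1}$-property.
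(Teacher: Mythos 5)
Your overall strategy is exactly the paper's: exhibit a parsimoniously $\shp$-complete function that is already definable in $\eqso(\logu{1})$, apply the hypothesized closure to put its ``$\dotminus 1$'' version in $\eqso(\LL)\subseteq\eqso(\fo)=\shp$, and then transfer to an arbitrary $\shp$ function by composing with the parsimonious reduction. However, your concrete witness breaks the argument: counting independent sets is \emph{not} $\shp$-complete under parsimonious reductions, and in fact cannot be. Every graph has at least one independent set (the empty set, and every singleton), so the count is never $0$; a parsimonious reduction demands exact equality of counts, so no function that attains the value $0$ --- in particular $\csat$ --- can parsimoniously reduce to it. Put differently, parsimonious reductions preserve the associated decision problem, and the decision version of counting independent sets is trivial, so its parsimonious $\shp$-hardness would already collapse $\np$ to $\ptime$; here it is even unconditionally impossible because of the zero-preservation issue. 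The known $\shp$-completeness of counting independent sets is only under Turing-style reductions, which do not commute with $\dotminus 1$ in the way your final composition step needs. (The paper itself cites counting independent sets, in a footnote, precisely as an example of an easy-decision problem inside $\totp$ --- the kind of problem that Section~\ref{sec:syntactic} argues cannot be parsimoniously hard for $\shp$.)

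The repair is the one the paper makes and that your closing remark already anticipates: choose a parsimoniously $\shp$-complete problem whose satisfaction condition is universal. The paper uses $\ctcnf$: with the standard encoding, ``$T$ is a satisfying assignment of the 3-CNF formula'' is a $\logu{1}$ property (for all clauses, some designated literal pattern holds), so $\ctcnf\in\U{1}\subseteq\eqso(\logu{1})\subseteq\eqso(\LL)$, and the usual reduction from $\csat$ to $\ctcnf$ preserves the number of satisfying assignments, giving parsimonious completeness. With that substitution the rest of your argument --- applying closure under subtraction by one, invoking $\eqso(\fo)=\shp$ via Proposition~\ref{prop:capture-shP}, and pre-composing the $\shp$ machine for $\ctcnf\dotminus 1$ with the reduction --- goes through verbatim and coincides with the paper's proof.
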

\begin{proof}
Let $\LL$ be a fragment of $\fo$ that contains $\logu{1}$. Then we have that every function in $\U{1}$ is expressible in $\eqso(\LL)$. In particular, $\ctcnf \in \eqso(\LL)$. Suppose that $\eqso(\LL)$ is closed under subtraction by one. Then, the function $\ctcnf-1$, which counts the number of satisfying assignments of a 3-CNF formula minus one, is also in $\eqso(\LL)$. Recall also that $\eqso(\LL) \subseteq \eqso(\fo) = \shp$ and that $\ctcnf$ is $\shp$-complete under parsimonious reductions\footnote{It can be easily verified that the standard reduction from SAT to 3-CNF (or 3-SAT) preserves the number of satisfying assignments}. Let $f$ be a function in $\shp$, and consider the nondeterministic polynomial-time procedure that on input $\enc(\A)$ computes the corresponding reduction $g(\enc(\A))$ into $\ctcnf$ and simulates the $\shp$ procedure for $\ctcnf-1$ on input $g(\enc(\A))$. This is a $\shp$ procedure that computes $f-1$, from which we conclude that $\shp$ is closed under subtraction by one.

\end{proof}
Therefore, the desired extension has to be achieved by allowing some local extensions to~$\Sigma_1$. To this end, we define $\logex{1}$ as $\Sigma_1$ but allowing atomic formulae over a signature~$\R$ to be of the form either $u = v$ or $X(\bar u)$, where $X$ is a second-order variable, or $\varphi(\bar u)$, where $\varphi(\bar u)$ is a first-order formula over $\R$ (in particular, it does not mention any second-order variable). With this extension we obtain a class with the desired properties.
\begin{thm}\label{sigmafo-minusone}
The class $\eqso(\logex{1})$ is closed under sum, multiplication and subtraction by one. Moreover, $\eqso(\logex{1}) \subseteq \totp$ and every function in $\eqso(\logex{1})$ has an FPRAS.
\end{thm}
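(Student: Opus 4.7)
The theorem bundles four claims: closure under sum, closure under multiplication, containment $\eqso(\logex{1})\subseteq\totp$ with an FPRAS, and closure under subtraction by one. The first three are adaptations of earlier arguments in the paper, while the fourth is the novel claim motivating the design of $\logex{1}$.

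Closure under sum is immediate from the semantics of $\eqso$, since $+$ is a logical constructor. For closure under multiplication I invoke Lemma~\ref{conj-mult} (embedded in the proof of Proposition~\ref{prop:qe0-fp-qe1-totp-fptras}): it suffices to show that $\logex{1}$ is closed under conjunction. Given two $\logex{1}$ formulas $\exists\bar{y}_1\,\psi_1$ and $\exists\bar{y}_2\,\psi_2$, after renaming the $\bar{y}_i$ to be disjoint their conjunction rewrites as $\exists\bar{y}_1\bar{y}_2\,(\psi_1\wedge\psi_2)$, and the allowed atomic forms of $\logex{1}$---variable equalities, second-order atoms $X(\bar{u})$, and pure first-order formulas $\varphi(\bar{u})$---are visibly closed under Boolean combinations. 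For $\eqso(\logex{1})\subseteq\totp$ and the FPRAS I put $\alpha$ in $\logex{1}$-SNF and mirror the proof of Proposition~\ref{prop:qe0-fp-qe1-totp-fptras}. The critical feature of $\logex{1}$ used here is that its FO atoms carry no second-order variables and are therefore evaluable in deterministic polynomial time once the structure is fixed; consequently the Boolean body $\exists\bar{y}\,\psi$ of each PNF summand reduces to polynomial-time enumeration over $\bar{y}$ combined with deterministic checks. This mirrors the self-reducibility underlying the $\E{1}$-to-$\cdnf$ reduction of~\cite{SalujaST95}, placing each summand in $\totp$ with an FPRAS, and both properties are preserved under sum.

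The main obstacle is closure under subtraction by one. After reducing to a single PNF summand $\alpha=\sa{\bar{X}}\sa{\bar{x}}\exists\bar{y}\,\psi$, the natural candidate
\[
\sa{\bar{X}}\sa{\bar{x}}\exists\bar{y}\,\bigl[\psi(\bar{X},\bar{x},\bar{y}) \wedge \exists (\bar{X}',\bar{x}')\prec(\bar{X},\bar{x})\,\exists\bar{y}'\,\psi(\bar{X}',\bar{x}',\bar{y}')\bigr]
\]
counts exactly $\sem{\alpha}\dotdiv 1$ by excluding the lex-smallest valid pair, but the inner second-order existential over $\bar{X}'$ is not permitted inside the Boolean body of $\logex{1}$. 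My plan is to promote that existential to an additional second-order sum at the $\eqso$ layer while enforcing uniqueness of the witness through a $\logex{1}$-expressible condition $\mathrm{wit}$, producing a formula of the form
\[
\sa{\bar{X}}\sa{\bar{X}'}\sa{\bar{x}}\sa{\bar{x}'}\,\exists\bar{y}\,\exists\bar{y}'\,\bigl[\psi(\bar{X},\bar{x},\bar{y}) \wedge \psi(\bar{X}',\bar{x}',\bar{y}') \wedge \mathrm{wit}(\bar{X},\bar{X}',\bar{x},\bar{x}')\bigr].
\]
Here the FO-atom extension of $\Sigma_1$ plays the decisive role: the structural comparison on the SO encodings inside $\mathrm{wit}$ is reformulated using FO atoms that index the ``first position of disagreement'' between the encodings of $\bar{X}$ and $\bar{X}'$, avoiding a universal second-order quantifier. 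The delicate part---which I expect to be the technical heart of the argument---is engineering $\mathrm{wit}$ so that for every valid non-minimum pair exactly one witness triple contributes: naive choices either overcount (yielding $\binom{N}{2}$ rather than $N-1$) or demand a universal second-order quantifier that would push us outside the fragment. Once the PNF case is handled, the general case follows from $\logex{1}$-SNF normalization together with a case analysis on the summand that owns the overall minimum witness, so that $\dotdiv 1$ is applied to exactly one PNF summand and the remaining summands are left in place.
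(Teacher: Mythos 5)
Your treatment of sum, multiplication, and the $\totp$/FPRAS claims is sound and essentially the paper's route (the paper makes your ``evaluate the FO atoms in polynomial time'' idea formal via a parsimonious reduction that replaces each pure-FO subformula $\psi(\bar z)$ by a fresh relation symbol $R_\psi$ precomputed on the structure, and then invokes Proposition~\ref{prop:qe0-fp-qe1-totp-fptras}). The gap is in subtraction by one, which you yourself flag as unresolved: your plan is to delete the lexicographically least satisfying \emph{pair} $(\bar X,\bar x)$ by adding extra sum quantifiers $\sa{\bar X'}\sa{\bar x'}$ and a Boolean condition $\mathrm{wit}$ that forces a \emph{unique} smaller witness. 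For uniqueness you need $\mathrm{wit}$ to pin $(\bar X',\bar x')$ down as a canonical object (the global minimum, or the predecessor, of the satisfying pairs), and any such characterization requires saying ``no smaller pair satisfies $\psi$'' or ``$\bar X'$ and $\bar X$ agree on all tuples before the first disagreement'' --- universal conditions over tuples or over second-order assignments. Neither is available in $\logex{1}$: the fragment has only outermost existential first-order quantification, second-order variables occur only in atoms $X(\bar u)$, and, crucially, the FO-formula atoms are required \emph{not} to mention second-order variables, so your idea of encoding the ``first position of disagreement'' between $\bar X$ and $\bar X'$ inside FO atoms is ruled out by the definition of the fragment. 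As sketched, $\mathrm{wit}$ either overcounts or leaves the fragment, so the heart of the claim is not established. (Your reduction of the general case to one summand via ``the summand that owns the overall minimum'' has the same flavor of inexpressibility; the paper instead conditions the correction for the $i$-th disjunct on formulas $\chi_1,\dots,\chi_{i-1}$ asserting the earlier disjuncts are unsatisfiable, which is expressible for the reason below.)

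The paper's proof avoids this obstacle by \emph{not} removing the lex-least satisfying pair. After normalizing each DNF disjunct $\varphi_i = \ex{\bar y}(\varphi_i^{\fo}\wedge\varphi_i^{+}\wedge\varphi_i^{-})$ so that second-order atoms mention only $\bar y$, the FO part fixes a weak ordering, and positive/negative literals cannot clash, one gets (Claim~\ref{claim:minusone}) that satisfiability of $\varphi_i$ reduces to satisfiability of its pure-FO part, and that the lexicographically minimal FO assignment $v$ induces a canonical \emph{constant-size} second-order witness $\bar B^{v}$ (the tuples explicitly mentioned in $\varphi_i^{+}$ under $v$) which is contained in \emph{every} satisfying second-order assignment for that $v$. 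This containment is the decisive point: excluding exactly the assignment $(\bar B^{v}, v(\bar x))$ becomes the purely existential statement ``some $X$ contains a tuple different from all the (finitely many) mentioned tuples, or $\bar x$ is not the minimal first-order choice,'' guarded by the pure-FO formula $\text{{\it min}-}\varphi_i^{\fo}$ (whose universal quantifiers are legal because it mentions no second-order variable). That small-witness mechanism is exactly what your proposal is missing, and without it the construction cannot be carried out inside $\eqso(\logex{1})$.
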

\begin{leftbar}
  \begin{proof}

For the sake of readability, we divide the proof into three parts. The last part, i.e.\ subtraction by one, is the most technical proof in the paper. Since this proof is several pages long, a line was drawn on the left margin to visually differentiate it from the rest of the paper.
\medskip

\noindent {\bf Closed under sum and multiplication.} By the previous results, it is straightforward to prove that $\eqso(\logex{1})$ is closed under sum and multiplication. Indeed, $\eqso(\LL)$ is closed under sum by definition for every fragment $\LL$, and since $\logex{1}$ is closed under conjunction, from Lemma \ref{conj-mult} it follows that $\eqso(\logex{1})$ is closed under multiplication.
\medskip

\noindent {\bf Easy decision version and FPRAS.}  We show here that $\eqso(\logex{1}) \subseteq \totp$ and every function in $\eqso(\logex{1})$ has an FPRAS. We do this by showing a parsimonious reduction from any function in $\eqso(\logex{1})$ to some function in $\eqso(\loge{1})$, and using the result of Proposition~\ref{prop:qe0-fp-qe1-totp-fptras}. First, we define a function that converts any formula $\alpha$ in $\eqso(\logex{1})$ over a signature $\R$ into a formula $\lambda(\alpha)$ in $\eqso(\loge{1})$ over a signature $\R_{\alpha}$. Afterwards, we define a function $g_{\alpha}$ that receives an $\R$-structure $\A$ and outputs an $\R_{\alpha}$-structure $g_{\alpha}(\A)$.

Let $\alpha$ be in $\eqso(\logex{1})$. The signature $\R_{\alpha}$ is obtained by adding the symbol $R_{\psi}$ to $\R$ for every $\fo$ formula $\psi(\bar{z})$ in $\alpha$. Each symbol $R_{\psi}$ represents a predicate with arity $\length{\bar{z}}$. Then, $\lambda(\alpha)$ is defined as $\alpha$ where each $\fo$ formula $\psi(\bar{z})$ has been replaced by $R_{\psi}(\bar{z})$. We now define the function $g_{\alpha}$ with a polynomial time procedure. Let $\A$ be a $\R$-structure with domain $A$. Let $\A'$ be an $\R_{\alpha}$-structure obtained by copying $\A$ and leaving each $R_{\psi}^{\A'}$ empty. For each $\fo$-formula $\psi(\z)$ with $\length{\bar{z}}$ open first-order variables, we iterate over all tuples $\bar{a} \in A^{\length{\bar{z}}}$. If $\A\models\psi(\bar{a})$ (this can be done in $\ptime$), then the tuple $\bar{a}$ is added to $R_{\psi}^{\A'}$ . This concludes the construction of $\A'$. Note that the number of $\fo$ subformulae, arity and tuple size is fixed in $\alpha$, so computing this function takes polynomial time over the size of the structure. Moreover, the encoding of $\A'$ has polynomial size over the size of $\enc(\A)$. We define $g_{\alpha}(\A) = \A'$ and we have that for each $\R$-structure $\A$: $\sem{\alpha}(\A) = \sem{\lambda(\alpha)}(g_{\alpha}(\A))$. Therefore, we have a parsimonious reduction from $\alpha$ to the $\eqso(\loge{1})$ formula $\lambda(\alpha)$.

To show that the function defined by $\alpha$ is in $\totp$, we can convert $\alpha$ and $\enc(\A)$ into $\lambda(\alpha)$ and $\enc(g_{\alpha}(\A))$, respectively, and run the procedure in Proposition~\ref{prop:qe0-fp-qe1-totp-fptras}. Similarly, to show that $\alpha$ has an FPRAS, we do the same as before and simulates the FPRAS for $\lambda(\alpha)$ in Proposition~\ref{prop:qe0-fp-qe1-totp-fptras}. These procedures also takes polynomial time and satisfies the required conditions.

\medskip

\noindent {\bf Closed under subtraction by one.} We prove here that $\eqso(\logex{1})$ is closed under subtraction by one. 
For this, given $\alpha \in \eqso(\logex{1})$ over a signature $\R$, we will define a $\eqso(\logex{1})$-formula $\kappa(\alpha)$ such that for each finite structure $\A$ over $\R$: $\sem{\kappa(\alpha)}(\A) = \sem{\alpha}(\A) \dotminus 1$. 
Without loss of generality, we assume that $\alpha$ is in $\logex{1}$-SNF, that is, $\alpha = \sum_{i = 1}^{n}\sa{\bar{X}}\sa{\bar{x}}\varphi_i$ where each $\varphi_i$ is in $\logex{1}$. Moreover, we assume that $\length{\bar{x}} > 0$  since, if this is not the case, we can replace $\sa{\bar{X}}\varphi_i$ with the equivalent formula $\sa{\bar{X}} \sa{y} \varphi_i \wedge\first(y)$.

{\em Proof outline:} The proof will be separated in two parts. In the first part, we will assume that  $\alpha$ is in $\logex{1}$-PNF, namely, $\alpha = \sa{\bar{X}}\sa{\bar{x}}\varphi$ for some $\varphi$ in $\logex{1}$. We will show how to define a formula $\varphi'$ that satisfies the following condition: for each $\A$, if $(\A,V,v)\models \varphi(\bar{X},\bar{x})$ for some $V$ and $v$ over $\A$, then there exists exactly one assignment to $(\bar{X},\bar{x})$ that satisfies $\varphi$ and not $\varphi'$. 
From this, we will have that $\kappa(\alpha) = \sa{\bar{X}}\sa{\bar{x}}\varphi'$ is the desired formula. In the second part, we suppose that $\alpha$ is of the form $\beta + \sa{\bar{X}}\sa{\bar{x}}\varphi$ with $\beta$ being the sum of one or more formulae in $\logex{1}$-PNF. 
We define a formula $\varphi'$ that satisfies the following condition: if $(\A,V,v)\models \varphi(\bar{X},\bar{x})$ and $\sem{\beta}(\A) = 0$, then there exists exactly one assignment to $(\bar{X},\bar{x})$ that satisfies $\varphi$ and not $\varphi'$. From here, we can define $\kappa$ recursively as  $\kappa(\alpha) = \kappa(\beta) +  \sa{\bar{X}}\sa{\bar{x}}\varphi'$ and the property of subtraction by one will be~proven.

\medskip

\noindent {\em Part (1).}  Let $\alpha =  \sa{\bar{X}}\sa{\bar{x}} \varphi(\bar{X},\bar{x})$ where $\varphi$ is a $\logex{1}$-formula.
Note that, if $\alpha$ is of the form $\alpha = \sa{\bar{x}} \varphi(\bar{x})$ (i.e.\ $\length{\bar{X}} = 0$), we can define $\kappa(\alpha) = \sa{\bar{x}} [\varphi(\bar{x})\wedge \ex{\bar{z}}(\varphi(\bar{z})\wedge\bar{z} < \bar{x})]$, which is in $\eqso(\logex{1})$ and fulfils the desired property. 
Therefore, for the rest of the proof we can assume that $\length{\bar{X}} > 0$ and $\length{\bar{x}} > 0$.

To simplify the analysis of $\varphi$, the first step is to rewrite $\varphi$ as a DNF formula. 
More precisely, we rewrite $\varphi$ into an equivalent formula of the form $\bigvee_{i = 1}^m \varphi_i$ for some $m\in \bbN$ where each $\varphi_i(\bar{X}, \bar{x}) = \ex{\bar{y}} \varphi_i'(\bar{X}, \bar{x}, \bar{y})$ and $\varphi_i'(\bar{X}, \bar{x}, \bar{y})$ is a conjunction of atomic formulae or negation of atomic formulae. Furthermore, we assume that each $\varphi_i'(\bar{X}, \bar{x}, \bar{y})$ has the~form:
$$
\varphi_i'(\bar{X}, \bar{x}, \bar{y}) =  \underbrace{\varphi_i^{\fo}(\bar{x},\bar{y})}_{\text{an $\fo$ formula}} \wedge 
\underbrace{\varphi_i^{+}(\bar{X},\bar{x},\bar{y})}_{\text{conjunction of $X_j$'s}} \wedge
\underbrace{\varphi_i^{-}(\bar{X},\bar{x},\bar{y})}_{\text{conjunction of $\neg X_j$'s}}.
$$
Note that atomic formulae, like $R(\bar{z})$ for $R \in \R$, will appear in the subformula $\varphi_i^{\fo}(\bar{x},\bar{y})$. 

Now, we define a series of rewrites of $\varphi$ that will make each formula $\varphi_i$ satisfy the following three conditions: (a) no variable from $\bar{x}$ appears in $\varphi_i^{-}(\bar{X},\bar{x},\bar{y})\wedge\varphi_i^{+}(\bar{X},\bar{x},\bar{y})$, (b) $\varphi_i^{\fo}(\bar{x},\bar{y})$ defines a weak ordering over the variables in $(\bar{x},\bar{y})$ (the precise definition of a weak ordering is detailed below) and (c) if both $X_j(\bar{z})$ and $\neg X_j(\bar{w})$ appear in the formula, the weak ordering should not satisfy $\bar{z} = \bar{w}$.
When these conditions are met, $\varphi_i'(\bar{X}, \bar{x}, \bar{y})$ will be satisfiable if and only if $\varphi_i^{\fo}(\bar{x},\bar{y})$ is satisfiable. This is proven in Claim~\ref{claim:minusone}.
We explain below how to rewrite $\varphi_i$ in order to satisfy each condition.

\medskip

\noindent {\em (a) No variable from $\bar{x}$ appears in $\varphi_i^{-}(\bar{X},\bar{x},\bar{y})\wedge\varphi_i^{+}(\bar{X},\bar{x},\bar{y})$.} In order to satisfy this condition, consider some instance of $X_j(\bar{w})$ in $\varphi_i$, where $\bar{w}$ is a subtuple of $(\bar{x},\bar{y})$. Add $\length{\bar{w}}$ new variables $z_1,\ldots,z_{\length{\bar{w}}}$ to the formula and let $\bar{z} = (z_1,\ldots,z_{\length{\bar{w}}})$. We rewrite $\varphi_i^{+}(\bar{X},\bar{x},\bar{y})$ by replacing $X_j(\bar{w})$ with $X_j(\bar{z})$ (denoted by $\varphi_i^{+}(\bar{X},\bar{x},\bar{y})[X_j(\bar{w}) \leftarrow X_j(\bar{z})]$) and then the formula $\varphi_i$ is equivalently defined as:
$$
\varphi_i(\bar{X},\bar{x}) \ := \ \ex{\bar{y}} \ex{\bar{z}} \big( \bar{z} = \bar{w} \wedge \varphi_i^{\fo}(\bar{x},\bar{y}) \wedge \varphi_i^{+}(\bar{X},\bar{x},\bar{y})[X_j(\bar{w}) \leftarrow X_j(\bar{z})] \wedge
\varphi_i^{-}(\bar{X},\bar{x},\bar{y}) \big).
$$
We repeat this process for each instance of a $X_j(\bar{w})$ in $\varphi_i$, and we obtain a formula where none of the $X_j$'s acts over any variable in $\bar{x}$. We add the new first-order variables to $\bar{y}$ and we redefine $\varphi_i$ as:
$$
\varphi_i(\bar{X},\bar{x}) \ := \  \ex{\bar{y}} \big( \varphi_i^{\fo}(\bar{x},\bar{y}) \wedge \varphi_i^{-}(\bar{X},\bar{y}) \wedge \varphi_i^{+}(\bar{X},\bar{y})\big).
$$
For example, if $\bar{x} = x$, $\bar{y} = y$ and $\varphi_i = \ex{\bar{y}} \big( x < y \wedge  X(x,y)\wedge \neg X(x,x)\big)$, then we redefine $\bar{y} = (y,v_1,v_2,v_3,v_4)$ and:
$$
\varphi_i \ := \ \ex{\bar{y}}\big( v_1 = x \wedge v_2 = y \wedge v_3 = x \wedge v_4 = x \wedge x < y  \wedge  X(v_1,v_2) \wedge \neg X(v_3,v_4)\big).
$$
\noindent {\em (b) $\varphi_i^{\fo}(\bar{x},\bar{y})$ defines a weak ordering over the variables in $(\bar{x},\bar{y})$.} A weak ordering on a set $S$ is defined by an equivalence relation $\sim$ over $S$, and a linear order over $S/\!\sim$. For example, let $\bar{x} = (x_1,x_2,x_3,x_4)$. A possible weak ordering would be defined by the formula $\theta(\bar{x}) = x_2 < x_1 \wedge x_1 = x_4 \wedge x_4 < x_3$. On the other hand, the formula $\theta'(\bar{x}) = x_1 < x_2 \wedge x_1 < x_4 \wedge x_2 = x_3$ does not define a weak ordering since both $\{x_1\}<\{x_2,x_3\}<\{x_4\}$ and $\{x_1\} < \{x_2,x_3,x_4\}$ satisfy $\theta'$.
For a given $k$, let $\cB_k$ be the number of possible weak orderings for a set of size $k$. For $1 \leq j \leq \cB_{\length{(\bar{x},\bar{y})}}$ 
let $\theta^j(\bar{x},\bar{y})$ be the formula that defines the $j$-th weak ordering over $(\bar{x},\bar{y})$. 
The formula $\varphi(\bar{X},\bar{x})$ is thus redefined as:
$$
\varphi(\bar{X},\bar{x}) \ := \ \bigvee_{i = 1}^m \bigvee_{j = 1}^{\cB_{\length{(\bar{x},\bar{y})}}} \ex{\bar{y}} \big(\theta^j(\bar{x},\bar{y})\wedge \varphi_i^{\fo}(\bar{x},\bar{y}) \wedge \varphi_i^{-}(\bar{X},\bar{y}) \wedge \varphi_i^{+}(\bar{X},\bar{y})\big),
$$
Note that each $\theta^j(\bar{x},\bar{y})$ is an $\fo$-formula.
Then, by redefining $\varphi_i^{\fo}(\bar{x},\bar{y})$  as $\theta^j(\bar{x},\bar{y})\wedge \varphi_i^{\fo}(\bar{x},\bar{y})$, we can suppose that each $\varphi_i^{\fo}(\bar{x},\bar{y})$ forces a weak ordering over the variables in $(\bar{x},\bar{y})$.

\noindent {\em (c) If both $X_j(\bar{z})$ and $\neg X_j(\bar{w})$ appear in the formula, the weak ordering should not satisfy $\bar{z} = \bar{w}$.} 
If there exists an instance of $X_j(\bar{z})$ in $\varphi^{+}_i$, an instance of $\neg X_j(\bar{w})$ in $\varphi^{-}_i$ and the weak ordering in $\varphi^{\fo}_i$ satisfies $\bar{z} = \bar{w}$, then the entire formula $\varphi_i$ is removed from $\varphi$.
It is important to notice that the resulting $\varphi$ is equivalent to the initial one, and it is still a formula in $\eqso(\logex{1})$. From now on, we assume that each $\varphi_i(\bar{X},\bar{x}) = \ex{\bar{y}}  \varphi_i'(\bar{X},\bar{x}, \bar{y}) $ satisfies conditions (a), (b) and (c), and moreover, $\varphi_i'(\bar{X},\bar{x}, \bar{y})$ has the~form:
$$
\varphi_i'(\bar{X},\bar{x}, \bar{y}) \; = \; \varphi^{\fo}_i(\bar{x},\bar{y}) \wedge \varphi^{+}_i(\bar{X}, \bar{y}) \wedge \varphi^{-}_i(\bar{X},\bar{y}).
$$
Note that neither one of $\varphi^{+}$ or $\varphi^{-}_i$ depends on $\bar{x}$.
\begin{clm}\label{claim:minusone}
	For an ordered structure $\A$ and an FO assignment $v$ for $\A$, $(\A,v)\models\varphi^{\fo}_i(\bar{x},\bar{y})$ if, and only if, there exists an SO assignment $V$ for $\A$ such that $(\A,V,v)\models \varphi_i'(\bar{X},\bar{x}, \bar{y})$.
\end{clm}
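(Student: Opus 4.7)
The $(\Leftarrow)$ direction is immediate: if $(\A,V,v) \models \varphi_i'$, then projecting onto the first conjunct gives $(\A,v) \models \varphi^{\fo}_i$. All the content sits in the $(\Rightarrow)$ direction: given $v$ satisfying $\varphi^{\fo}_i(\bar{x},\bar{y})$, I have to exhibit an SO assignment $V$ such that $(\A,V,v)$ also satisfies $\varphi^+_i(\bar{X},\bar{y}) \wedge \varphi^-_i(\bar{X},\bar{y})$. Note that conditions (a) and (b) have already placed us in a very rigid setting: by (a) the second-order atoms depend only on $\bar{y}$ (so I can freely separate the role of $v$ on $\bar{x}$ from its role on $\bar{y}$), and by (b) the formula $\varphi^{\fo}_i(\bar{x},\bar{y})$ pins down a specific weak ordering on the variables of $(\bar{x},\bar{y})$.

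My plan is to define $V$ explicitly by making the positive constraints tight. For each second-order variable $X_j$ appearing in $\bar{X}$, I would set
$$
V(X_j) \; = \; \{\, v(\bar{z}) \; : \; X_j(\bar{z}) \text{ is a conjunct of } \varphi^+_i\,\}.
$$
This choice makes every positive atom $X_j(\bar{z})$ of $\varphi^+_i$ true under $(\A,V,v)$ by construction, so $(\A,V,v) \models \varphi^+_i$. The entire remaining burden is to check that no negative conjunct $\neg X_j(\bar{w})$ of $\varphi^-_i$ is falsified by this $V$.

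To check that, I would argue by contradiction: suppose $v(\bar{w}) \in V(X_j)$ for some conjunct $\neg X_j(\bar{w})$ of $\varphi^-_i$. Then there must exist a conjunct $X_j(\bar{z})$ of $\varphi^+_i$ with $v(\bar{z}) = v(\bar{w})$. Here conditions (b) and (c) combine to produce the contradiction. Because $(\A,v) \models \varphi^{\fo}_i$ and $\varphi^{\fo}_i$ determines a weak ordering on $(\bar{x},\bar{y})$, componentwise equality $v(\bar{z}) = v(\bar{w})$ holds if and only if that weak ordering entails $\bar{z} = \bar{w}$. But (c) was precisely the guarantee that this cannot happen whenever $X_j(\bar{z})$ appears in $\varphi^+_i$ and $X_j(\bar{w})$ appears in $\varphi^-_i$: a contradiction.

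The main obstacle, and really the only one, is this clean identification between syntactic equality induced by the weak ordering and semantic equality of the assignment $v$. Once that bridge is stated (reading off equivalence classes of the weak ordering from $v$ and back), the construction of $V$ and the verification of $\varphi^+_i \wedge \varphi^-_i$ are essentially bookkeeping, and the claim follows.
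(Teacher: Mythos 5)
Your proposal is correct and follows essentially the same route as the paper's proof: the same explicit construction of $V$ as the $v$-images of the tuples mentioned positively in $\varphi^{+}_i$, and the same contradiction argument in which a falsified negative literal $\neg X_j(\bar{w})$ yields a positive occurrence $X_j(\bar{z})$ with $v(\bar{z}) = v(\bar{w})$, so that the weak ordering forced by $\varphi^{\fo}_i$ must declare $\bar{z} = \bar{w}$, contradicting condition (c). The converse direction is handled identically (it is a subformula/projection observation), so there is nothing to add.
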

\begin{proof}
	To prove the {\em only if} direction, let $\A$ be an ordered structure with domain $A$ and let $v$ be a first-order assignment for $\A$, such that $(\A,v)\models\varphi^{\fo}_i(\bar{x},\bar{y})$.
	Define $\bar{B} = (B_1,\ldots,B_{\length{\bar{X}}})$ as $B_j = \{v(\bar{w})\mid \text{ $X_j(\bar{w})$ is mentioned in $\varphi^{+}_i(\bar{X},\bar{y})$}\}$, and let $V$ be a second-order assignment for which $V(\bar{X}) = \bar{B}$.
	Towards a contradiction, suppose that $(\A,V,v)\not\models \varphi^{\fo}_i(\bar{x},\bar{y}) \wedge \varphi^{+}_i(\bar{X}, \bar{y}) \wedge \varphi^{-}_i(\bar{X},\bar{y})$.
	By the choice of $v$, and construction of $V$ it is clear that $(\A,V,v)\models\varphi^{\fo}_i(\bar{x},\bar{y})\wedge\varphi^{+}_i(\bar{X},\bar{y})$, so we necessarily have that $(\A,V,v)\not\models\varphi^{-}_i(\bar{X},\bar{y})$.
	Let $X_t$ be such that $\neg X_t(\bar{w})$ is mentioned in $\varphi^{-}_i(\bar{X},\bar{y})$ and $(\A,V, v)\not\models\neg X_t(\bar{w})$, namely, $v(\bar{w})\in B_t$. 
	However, by the construction of $B_t$, there exists a subtuple $\bar{z}$ of $\bar{y}$ such that $X_t(\bar{z})$ appears in $\varphi^{+}_i(\bar{X},\bar{y})$ and $v(\bar{z}) = v(\bar{w})$. Since $(\A,v)\models\varphi^{\fo}_i(\bar{x},\bar{y})$ and $v(\bar{z}) = v(\bar{w})$, then the weak ordering in $\varphi^{\fo}_i$ satisfies $\bar{z} = \bar{w}$. This violates condition (c) above since $\neg X_t(\bar{w})$ appears in $\varphi^{-}_i$ and $X_t(\bar{z})$ appears in $\varphi^{+}_i$, which leads to a contradiction. 
	
The other direction is trivial since $\varphi^{\fo}_i(\bar{x},\bar{y})$ is a subformula of $\varphi_i'$.
\end{proof}

The previous claim its subsequent proof motivate the following definition. For a first-order assignment $v$ over $\A$, define $\bar{B}^v = (B^v_1,\ldots,B^v_{\length{\bar{X}}})$ where each $B^v_j = \{v(\bar{w}) \mid \text{$X_j(\bar{w})$ is mentioned in $\varphi^{+}_i(\bar{X},\bar{y})$}\}$.
One can easily check that for every assignment $(V, v)$ such that $(\A,V,v)\models \varphi_i'(\bar{X}, \bar{x}, \bar{y})$, it holds that $(\A,\bar{B}^v,v)\models \varphi_i'(\bar{X}, \bar{x}, \bar{y})$ and $\bar{B}^v \subseteq V(\bar{X})$.
Namely, $\bar{B}^v$ is a valid candidate for $\bar{X}$ and, furthermore, it is contained in all satisfying assignments of $\bar{X}$ when $v$ is fixed.
This offers an insight into the main idea of Part (1): by choosing one particular $v$ the plan is to remove $\bar{B}^v$ as an assignment over $\bar{X}$ in $\varphi_i$. 
For this, we choose the minimal $v$ that satisfies $\varphi^{\fo}_i(\bar{x},\bar{y})$ which can be defined with the following formula:
\[
\text{{\it min}-}\varphi^{\fo}_i(\bar{x}, \bar{y}) \; = \;  \,\varphi_i^{\fo}(\bar{x},\bar{y})\wedge \fa{\bar{x}'} \fa{\bar{y}'}\big(\varphi_i^{\fo}(\bar{x}',\bar{y}')\to ( \bar{x}\leq\bar{x}' \wedge  \bar{y}\leq\bar{y}')\big).
\] 
If $\varphi^{\fo}_i$ is satisfiable, let $v$ be the only assignment such that  $(\A,v)\models \text{{\it min}-}\varphi^{\fo}_i(\bar{x}, \bar{y})$. 
Furthermore, let $V^*$ be the second-order assignment and $v^*$ the first-order assignment that satisfy $V^*(\bar{X}) = \bar{B}^{v}$ and $v^*(\bar{x}) = v(\bar{x})$.
By the previous discussion, $(\A,V^*,v^*)\models\varphi_i(\bar{X},\bar{x})$.

Now, we have all the ingredients in order to define $\kappa(\alpha)$. 
Intuitively, we want to exclude the assignment $(V^*, v^*)$ from the satisfying assignments of $\varphi_i(\bar{X},\bar{x})$.
Towards this goal, we can define a formula $\psi_i(\bar{X},\bar{x})$ such that, if $\varphi_i(\bar{X}, \bar{x})$ is satisfiable, $(\A, V, v) \models \psi_i(\bar{X},\bar{x})$ if, and only if either $V \neq V^*$ or $v \neq v^*$. 
This property can be defined as follows:
\begin{align}
\psi_i(\bar{X},\bar{x}) \; := \;  &\big( \, \ex{\bar{x}'} \ex{\bar{y}} \varphi^{\fo}_i(\bar{x}',\bar{y}) \,\big) \rightarrow  \label{eq:part1} \\
&\Big( 
\ex{\bar{y}}\!\big[ \text{{\it min}-}\varphi^{\fo}_i(\bar{x}, \bar{y}) \wedge \big( \, \varphi_i'(\bar{X},\bar{x}, \bar{y})  \rightarrow \!\! \bigvee_{X \in \bar{X}}\! \ex{\bar{z}}\!( \, X(\bar{z}) \wedge   \!\!\!\!\!\!\!\!\!\!\!\! \bigwedge\limits_{X(\bar{w}) \, \in \, \varphi^{+}_i(\bar{X},\bar{y})} \!\!\!\!\!\!\!\!\!\!\!\! \bar{w}\neq\bar{z} \, ) \, \big)\big] \;  \vee \!  \label{eq:part2} \\ 
&\; \ex{\bar{x}'} \ex{\bar{y}}\big( \varphi_i'(\bar{X},\bar{x}',\bar{y})\wedge \bar{x}' < \bar{x}\big) \; \;\; \Big)  \label{eq:part3}
\end{align}

To understand the formula, first notice that the premise of the implication at (\ref{eq:part1}) is true if, and only if, $\varphi_i(\bar{X}, \bar{x})$ is satisfiable. 
Indeed, by Claim~\ref{claim:minusone} we know that if $\ex{\bar{x}} \ex{\bar{y}} \varphi^{\fo}_i(\bar{x},\bar{y})$ is true, then there exist assignments $V$ and $v$ such that $(\A,V,v) \models\varphi_i'(\bar{X},\bar{x},\bar{y})$.
The conclusion of the implication (divided into (\ref{eq:part2}) and (\ref{eq:part3})), takes care that $V(\bar{X}) \neq V^*(\bar{X})$ or $v(\bar{x}) \neq v^*(\bar{x})$.
Here, the first disjunct (\ref{eq:part2}) checks that $V(\bar{X}) \neq V^*(\bar{X})$ by defining that if $\varphi_i'(\bar{X},\bar{x}, \bar{y})$ is satisfied then $V^*(\bar{X}) \subsetneq V(\bar{X})$. 
The second disjunct (\ref{eq:part3}) is satisfied when $v(\bar{x})$ is not the lexicographically smallest tuple that satisfies $\varphi_i$ (i.e.\ $v(\bar{x}) \neq v^*(\bar{x})$).
Finally, from the previous discussion one can easily check that $\psi_i(\bar{X},\bar{x})$ satisfies the desired property.

We are ready to define the formula  $\kappa(\alpha)$ as $\sa{\bar{X}}\sa{\bar{x}} \bigvee_{i = 1}^m\varphi_i^*(\bar{X},\bar{x})$
where each modified disjunct $\varphi_i^*(\bar{X},\bar{x})$ is constructed as follows. 
For the sake of simplicity, define the auxiliary formula $\chi_i = \neg \ex{\bar{x}} \ex{\bar{y}}\varphi^{\fo}_i(\bar{x},\bar{y})$. 
This formula basically checks if $\varphi_i$ is not satisfiable (recall Claim~\ref{claim:minusone}).
Define the first formula $\varphi_1^*$ as:
\[
\varphi^*_1(\bar{X},\bar{x}) \; := \; \varphi_1(\bar{X},\bar{x})\wedge\psi_1(\bar{X},\bar{x}).
\]
This formula accepts all the assignments that satisfy $\varphi_1$, except for the assignment $(V^*,v^*)$ of $\varphi_1$. The second formula $\varphi_2^*$ is defined as:
\[
\varphi^*_2(\bar{X},\bar{x}) \; := \; \varphi_2(\bar{X},\bar{x})\wedge\psi_1(\bar{X},\bar{x})\wedge(\chi_1\to\psi_2(\bar{X},\bar{x})).
\]
This models all the assignments that satisfy $\varphi_2$, except for the assignment $(V^*,v^*)$ of $\varphi_1$. Moreover, if $\varphi_1$ is not satisfiable, then $\psi_1(\bar{X},\bar{x})$ and $\chi_1$ will hold, and the formula $\psi_2(\bar{X},\bar{x})$ will exclude the assignment $(V^*,v^*)$ of $\varphi_2$. 
This construction can be generalized for each $\varphi_i$ as follows:
\begin{multline*}
\varphi_i^*(\bar{X},\bar{x}) \; := \; \varphi_i(\bar{X},\bar{x})\wedge\psi_1(\bar{X},\bar{x})\,\wedge \\ (\chi_1\to\psi_2(\bar{X},\bar{x}))\wedge((\chi_1\wedge\chi_2)\to\psi_3(\bar{X},\bar{x}))\wedge\cdots\wedge(
\bigwedge_{j = 1}^{j = i-1}\chi_j\to\psi_i(\bar{X},\bar{x})).
\end{multline*}
From the construction of $\kappa(\alpha)$, one can easily check that  $\sem{\kappa(\alpha)}(\A) = \sem{\alpha}(\A)-1$ for each $\A$.

\medskip

\noindent {\em Part (2).} Let $\alpha = \beta + \sa{\bar{X}}\sa{\bar{x}} \varphi(\bar{X},\bar{x})$ for some $\eqso(\logex{1})$ formula $\beta$. We define $\kappa(\alpha)$ as follows.
First, rewrite $\varphi(\bar{X},\bar{x})$ as in Part (1). Let $\varphi = \bigvee_{i = 1}^m\varphi_i(\bar{X},\bar{x})$ where each $\varphi_i$ satisfies conditions (a), (b) and (c) defined above. Also, consider the previously defined formulae $\chi_i$ and $\psi_i$, for each $i \leq m$. 
We construct a function $\lambda$ that receives a formula $\beta \in \eqso(\logex{1})$ and produces a logic formula $\lambda(\beta)$ that satisfies $\A\models\lambda(\beta)$ if, and only if, $\sem{\beta}(\A) = 0$. If $\beta = \sa{\bar{x}} \varphi(\bar{x})$, then $\lambda(\beta) = \neg \ex{\bar{x}'} \varphi(\bar{x}')$. If $\beta = \sa{\bar{X}}\sa{\bar{x}}\varphi(\bar{X},\bar{x})$, then 
define $\lambda(\beta) = \chi_1\wedge \cdots\wedge\chi_m$. If $\beta = (\beta_1 + \beta_2)$, then $\lambda(\beta) = \lambda(\beta_1) \wedge \lambda(\beta_2)$.
Now, following the same ideas as in Part (1) we define a formula $\varphi_i^*(\bar{X},\bar{x})$ that removes the minimal $(V^*, v^*)$ of $\varphi_i$ whenever $\beta$ cannot be satisfied (i.e.\ $\lambda(\beta)$ is true). Formally, we define $\varphi_i^*$ as follows:
\begin{multline*}
\varphi_i^*(\bar{X},\bar{x}) \ := \ \varphi_i(\bar{X},\bar{x})\wedge\Big(\lambda(\beta)\to\Big(\psi_1(\bar{X},\bar{x})\,\wedge \\
(\chi_1\to\psi_2(\bar{X},\bar{x}))\wedge((\chi_1\wedge\chi_2)\to\psi_3(\bar{X},\bar{x}))\wedge\cdots\wedge(
\bigwedge_{j = 1}^{j = i-1}\chi_j\to\psi_i(\bar{X},\bar{x}))\Big)\Big).
\end{multline*}
Finally, $\kappa(\alpha)$ is defined as $\kappa(\alpha) = \kappa(\beta) + \sa{\bar{X}}\sa{\bar{x}} \bigvee_{i = 1}^m\varphi_i^*(\bar{X},\bar{x})$, which is in $\eqso(\logex{1})$ and satisfies the desired conditions. This concludes the proof.
%

\end{proof}
\end{leftbar}
%
%
The proof that $\eqso(\logex{1})$ is closed under subtraction by one is the most involved in the paper.
A key insight in this proof is the fact that if a formula in $\eqso(\logex{1})$ has a satisfying assignment over a structure, then it also has a satisfying assignment over this structure that is of logarithmic size, and which can be characterized and removed by using some fixed formulae. To give more intuition about this idea, let us consider the case of $\ctdnf$, and the way Saluja et\text{.} al \cite{SalujaST95}  propose to encode this problem in $\E{1}$ and, thus, also in $\eqso(\logex{1})$. Let $\R = \{P_0(\cdot,\cdot,\cdot), P_1(\cdot,\cdot,\cdot), P_2(\cdot,\cdot,\cdot), P_3(\cdot,\cdot,\cdot),<\}$ and $\theta$ be a propositional formula in 3-DNF. Then $\theta$ is encoded as an $\R$-structure $\A_\theta$ as follows. The domain of $\A_\theta$ is the set of propositional variables occurring in $\theta$, and for every tuple $(a,b,c)$ of propositional variables occurring in $\theta$, we have that: (i) $P_0(a,b,c)$ holds if $(a\wedge b \wedge c)$ is a disjunct in $\theta$;
(ii) $P_1(a,b,c)$ holds if $(a\wedge b \wedge \neg c)$ is a disjunct in $\theta$;
(iii) $P_2(a,b,c)$ holds if $(a\wedge \neg b \wedge \neg c)$ is a disjunct in $\theta$; and 
(iv) $P_3(a,b,c)$ holds if $(\neg a \wedge \neg b \wedge \neg c)$ is a disjunct in $\theta$.
Moreover, to define $\ctdnf$, we consider a fixed $\so$-formula $\varphi(T)$ over $\R$, where $T$ is a unary predicate, such that the number of satisfying assignments of $\theta$ is equal to $\sem{\sa{T}\varphi(T)}(\A_\theta)$. More specifically, $T(a)$ holds if and only if  $a$ is assigned value true, so that $\varphi(T)$ is defined as:
\begin{align*}
\varphi(T) \ := \ \ex{x}\ex{y}\ex{z}\big(\,
&(P_0(x,y,z)\wedge T(x)\wedge  T(y)\wedge T(z))\ \vee\\
&(P_1(x,y,z)\wedge T(x)\wedge  T(y)\wedge \neg T(z))\ \vee\\
&(P_2(x,y,z)\wedge T(x)\wedge \neg T(y)\wedge \neg T(z))\ \vee\\
&(P_3(x,y,z)\wedge \neg T(x)\wedge \neg T(y)\wedge \neg T(z))\big).
\end{align*}
Let us now focus on the first disjunct $(P_0(x,y,z)\wedge T(x)\wedge  T(y)\wedge T(z))$ of $\varphi(T)$. Assuming that a propositional formula $(a\wedge b \wedge c)$ is a disjunct in $\theta$, we have that $(\A_\theta,V) \models \varphi(T)$ for every assignment $V$ for $T$ such that $\{a,b,c\} \subseteq V(T)$. 
Notice that some of these assignments, such as the one that assigns to $T$ all the variables occurring in $\theta$, are of linear size in the size of $\A_\theta$. However, $\varphi(T)$ also admits satisfying assignments of logarithmic size in the size of $\A_\theta$, such as $V(T) = \{a,b,c\}$. A key idea in the proof is the fact that the minimum of such small witnesses (under the lexicographic order induced by~$<$) can be identified by using a fixed formula in $\logex{1}$:
\begin{multline*}
\alpha(x,y,z) \ := \ P_0(x,y,z) \wedge  \fa{x'}\fa{y'}\fa{z'} \big(P_0(x',y',z')\ \to\\
\big((x < x') \vee (x = x' \wedge y < y') \vee (x = x' \wedge y = y' \wedge z < z')\big)\big),
\end{multline*}
and then it can be removed also by using a fixed formula in $\logex{1}$:
\begin{align*}
\beta(T) \ := \ \, & \ex{x} \ex{y} \ex{z} \big(P_0(x,y,z)\wedge T(x)\wedge  T(y)\wedge T(z)\big)\ \wedge \\
& \ex{x'}\ex{y'}\ex{z'} \big(\alpha(x',y',z')\ \wedge \\
& \hspace{23mm} (T(x')\wedge T(y')\wedge T(z')\to \ex{w}(T(w) \wedge w \neq x' \wedge w\neq y' \wedge w'\neq z))\big).
\end{align*}
In particular, the combination of these two formulae forces $T$ to represent a satisfying assignment for $\theta$ that is different from the set $\{a_0,b_0,c_0\}$, where $(a_0,b_0,c_0)$ is the minimum tuple under the lexicographic order induced by~$<$ on the set of tuples $(a,b,c)$ such that $(a \wedge b \wedge c)$ is a disjunct of $\theta$. In the proof, we generalize and properly formalize this idea, thus using the existence of logarithmic size witnesses for the formulae in $\eqso(\logex{1})$ to prove that this class is closed under subtraction by one. We think that the existence of such witnesses is a fundamental property of this class that deserves to be further investigated. 



\subsection{Defining a class of functions with easy decision versions and natural complete problems}
\label{sec-horn}
\newcommand{\pP}{\textit{P}}
\newcommand{\pN}{\textit{N}}
\newcommand{\pV}{\textit{V}}
\newcommand{\pT}{\textit{T}}
\newcommand{\pA}{\textit{A}}
\newcommand{\pNC}{\textit{NC}}
\newcommand{\pD}{\textit{D}}

The goal of this section is to define a class of functions in $\shp$ with easy decision counterparts and natural complete problems. To this end, we consider the notion of parsimonious reduction. Formally, a function $f\colon\Sigma^* \to \N$ is parsimoniously reducible to a function $g\colon\Sigma^* \to \N$ if there exists a function $h\colon\Sigma^* \to \Sigma^*$ such that $h$ is computable in polynomial time and $f(x) = g(h(x))$ for every $x \in \Sigma^*$. As mentioned at the beginning of this section, if $f$ can be parsimoniously reduced to $g$, then $L_g \in \ptime$ implies that $L_f \in \ptime$ and the existence of an FPRAS for $g$ implies the existence of an FPRAS for $f$. 

In the previous section, we showed that the class $\eqso(\logex{1})$ has good closure and approximation properties. Unfortunately, it is not clear whether it admits a {\em natural} complete problem under parsimonious reductions, where {\em natural} means any of the counting problems defined in this section or any other well-known counting problem (not one specifically designed to be complete for the class). On the other hand, $\totp$ admits a natural complete problem under parsimonious reductions, which is the problem of counting the number of inputs accepted by a monotone circuit~\cite{BCPPZ17}. However, the notion of monotone circuit used in \cite{BCPPZ17} does not correspond with the usual notion of monotone circuit~\cite{GS90}, that is, circuits with AND and OR gates but without negation. In this sense, we still lack a class of functions in $\shp$ with easy decision counterparts and a complete problem that is well known and has been widely studied. In this section, we follow a different approach to find such a class,
which is inspired by the approach followed in \cite{G92} that uses a restriction of second-order logic to Horn clauses for capturing $\ptime$ (over ordered structures). The following example shows how our approach works.

\begin{exa} \label{ex-hornsat-esop1}
Let $\R = \{\pP(\cdot,\cdot), \pN(\cdot,\cdot), \pV(\cdot), \pNC(\cdot),<\}$. This vocabulary is used as follows to encode a Horn formula. A fact $\pP(c,x)$ indicates that propositional variable $x$ is a disjunct in a clause $c$, while $\pN(c,x)$ indicates that $\neg x$ is a disjunct in $c$. Furthermore, $\pV(x)$ holds if  $x$ is a propositional variable, and $\pNC(c)$ holds if $c$ is a clause containing only negative literals, that is, $c$ is of the form $(\neg x_1 \vee \cdots \vee \neg x_n)$.

To define $\chsat$, we consider an \so-formula $\varphi(\pT)$ over $\R$, where $\pT$ is a unary predicate, such that for every Horn formula $\theta$ encoded by an $\R$-structure $\A$, the number of satisfying assignments of $\theta$ is equal to $\sem{\sa{\pT} \varphi(\pT)}(\A)$. In particular, $\pT(x)$ holds if and only if $x$ is a propositional variable that is assigned value true.  More specifically, 
\begin{align*}
\varphi(\pT) \; :=\;\;  & \fa{x} (\pT(x) \to \pV(x)) \ \wedge\\
& \fa{c}  (\pNC(c) \to \ex{x} (\pN(c,x) \wedge \neg \pT(x))) \ \wedge\\
& \fa{c} \fa{x} ([\pP(c,x) \wedge \fa{y} (\pN(c,y) \to \pT(y))] \to \pT(x)).
\end{align*}
We can rewrite $\varphi(\pT)$ in the following way:
\begin{align*}
& \fa{x}  (\neg \pT(x) \vee \pV(x)) \ \wedge\\
& \fa{c}  (\neg \pNC(c) \vee \ex{x} (\pN(c,x) \wedge \neg \pT(x)))\ \wedge\\
& \fa{c} \fa{x}  (\neg \pP(c,x) \vee \ex{y} (\pN(c,y) \wedge \neg \pT(y)) \vee \pT(x)).
\end{align*}
Moreover, by introducing an auxiliary predicate $\pA$ defined as 
\begin{align*}
\fa{c} \fa{x}  (\neg \pA(c,x) \leftrightarrow [\pN(c,x) \wedge \neg \pT(x)]),
\end{align*}
we can translate $\varphi(\pT)$ into the following equivalent formula:
\begin{align*}
\psi(\pT,\pA) \; := \;\;  & \fa{x} (\neg \pT(x) \vee \pV(x)) \ \wedge\\
& \fa{c} (\neg \textit{NC}(c) \vee \ex{x} \neg \textit{A}(c,x)) \ \wedge\\
& \fa{c} \fa{x}  (\neg \textit{P}(c,x) \vee \ex{y} \neg \textit{A}(c,y) \vee \textit{T}(x)) \ \wedge\\
& \fa{c} \fa{x} (\neg \textit{N}(c,x) \vee \textit{T}(x) \vee \neg \textit{A}(c,x)) \ \wedge \\
& \fa{c} \fa{x} (\textit{A}(c,x) \vee \textit{N}(c,x)) \ \wedge\\
& \fa{c} \fa{x} (\textit{A}(c,x) \vee \neg\textit{T}(x)).
\end{align*}
More precisely, we have that:
\begin{align*}
\sem{\sa{\pT} \varphi(\pT)}(\A) &= \sem{\sa{\pT} \sa{\pA} \psi(\pT,\pA)}(\A),
\end{align*}
 for every $\R$-structure $\A$ encoding a Horn formula. Therefore, the formula $\psi(\pT,\pA)$ also defines $\chsat$. More importantly, $\psi(\pT,\pA)$ resembles a conjunction of Horn clauses except for the use of negative literals of the form $\ex{v} \neg \textit{A}(u,v)$. \qed
\end{exa}
The previous example suggests that to define $\chsat$, we can use Horn formulae defined as follows. 
A positive literal is a formula of the form $X(\x)$, where $X$ is a second-order variable and $\x$ is a tuple of first-order variables, and a negative literal is a formula of the form $\ex{\v} \neg X(\u,\v)$, where $\u$ and $\v$ are tuples of first-order variables. Given a signature $\R$, a clause over $\R$ is a formula of the form $\fa{\x} (\varphi_1 \vee \cdots \vee \varphi_n)$, 
where each $\varphi_i$ ($1 \leq i \leq n$) is either a positive literal, a negative literal or an \fo-formula over $\R$.  A clause is said to be Horn if it contains at most one positive literal, and a formula is said to be Horn if it is a conjunction of Horn clauses. With this terminology, we define $\uhorn$ as the set of formulae $\psi$ such that $\psi$ is a Horn formula over a signature $\R$. 

As we have seen, we have that $\chsat \in \eqso(\uhorn)$. Moreover, one can show that $\eqso(\uhorn)$ forms a class of functions with easy decision counterparts, namely, $\eqso(\uhorn) \subseteq \totp$.
Thus, $\eqso(\uhorn)$ is a new alternative in our search for a class of functions in $\shp$ with easy decision counterparts and natural complete problems. Moreover, an even larger class for our search can be generated by extending the definition of $\uhorn$ with outermost existential quantification. 
Formally, a formula $\varphi$ is in $\ehorn$ if $\varphi$ is of the form $\ex{\bar x} \psi$ with $\psi$ a Horn formula. 

\begin{prop}\label{prop:ehorn-pe}
$\eqso(\ehorn) \subseteq \totp$.
\end{prop}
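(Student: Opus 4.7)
The plan is to construct, for each $\alpha \in \eqso(\ehorn)$ over a signature $\R$, a polynomial-time NTM $M$ satisfying $\tmt_M(\enc(\A)) = \sem{\alpha}(\A) + 1$ for every $\A \in \ostr[\R]$. First, I would invoke Proposition~\ref{theo-pnf-snf} to rewrite $\alpha$ in $\ehorn$-SNF as a sum of formulae of the shape $\sa{\bar{X}}\sa{\bar{x}}\,\ex{\bar{y}}\,\psi(\bar{X},\bar{x},\bar{y})$ with $\psi \in \uhorn$. Using the closure of $\totp$ under sum (Pagourtzis--Zachos, also recorded in the preliminaries), this reduces the problem to a single summand $\alpha = \sa{\bar{X}}\sa{\bar{x}}\,\ex{\bar{y}}\,\psi(\bar{X},\bar{x},\bar{y})$, so that $\sem{\alpha}(\A)$ counts the pairs $(\bar{B},\bar{a})$ for which some $\bar{b}$ satisfies $\psi$.

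The central tool will be a polynomial-time \emph{extension test} $\mathrm{Ext}(\A,\sigma)$, where a \emph{partial commitment} $\sigma$ fixes some coordinates of $\bar{x}$ to domain elements and flags some ground atoms $X_j(\bar{t})$ as ``in'' or ``out''; the test decides whether $\sigma$ can be completed to a pair $(\bar{B},\bar{a})$ and a witness $\bar{b}$ making $\psi$ true. I would enumerate the polynomially many completions of $\bar{x}$ consistent with $\sigma$ and of $\bar{y}$, and for each completion ground $\psi$ into a propositional Horn formula: positive literals $X_j(\bar{u})$ become Boolean atoms, negative literals $\ex{\bar{v}}\neg X_j(\bar{u},\bar{v})$ ground into disjunctions of negated atoms, first-order subformulae evaluate to constants over $\A$, and $\sigma$ contributes Horn unit clauses of both polarities. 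Since propositional Horn-SAT is in $\ptime$, this shows $\mathrm{Ext}$ runs in polynomial time.

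Equipped with $\mathrm{Ext}$, the machine $M$ would operate as follows. On input $\enc(\A)$ it first evaluates $\mathrm{Ext}(\A,\emptyset)$: if this returns false then $\sem{\alpha}(\A) = 0$ and $M$ halts on a single run. Otherwise $M$ nondeterministically spawns a dummy branch (which halts immediately) and a counting branch. The counting branch walks an a-priori fixed enumeration consisting of all coordinates of $\bar{x}$ followed by all ground atoms of $\bar{X}$, maintaining a commitment $\sigma$ which, by invariant, always extends. When $\sigma$ is complete it halts (a single accepting leaf corresponding to the pair $(\bar{B},\bar{a}) = \sigma$); otherwise it consults $\mathrm{Ext}$ on $\sigma \cup \{v{=}0\}$ and $\sigma \cup \{v{=}1\}$ for the next undecided item $v$, branching into both extending children or passing through the unique extending one. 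A short induction on the number of uncommitted items will then give that the subtree rooted at any extending $\sigma$ produces exactly the count of completions of $\sigma$ to satisfying pairs; the dummy branch supplies the requisite $+1$, yielding $\tmt_M(\enc(\A)) = \sem{\alpha}(\A) + 1$.

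The hard part will be justifying the extension test. I need to argue that partial second-order commitments translate cleanly into Horn unit clauses (immediate for both polarities, since a single negated propositional literal is already Horn) and, more subtly, that the second-order negative literals $\ex{\bar{v}}\neg X_j(\bar{u},\bar{v})$ remain within the Horn fragment after grounding. This should work because grounding rewrites such a literal as a disjunction of negated propositional atoms, which only add negative literals to the ambient clause and therefore preserve the at-most-one-positive-literal property; once this is in place, the counting tree is a standard instance of the Pagourtzis--Zachos self-reducibility template and the analysis collapses to the inductive argument above.
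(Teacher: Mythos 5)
Your argument is correct, but it takes a genuinely different route from the paper. The paper does not attack $\eqso(\ehorn)$ directly: it first proves (Theorem~\ref{sigma2hard}) that $\shdhsat$ is $\eqso(\ehorn)$-complete under parsimonious reductions --- the hardness direction being a global grounding of the PNF formula over $\enc(\A)$ into a single disjunction of propositional Horn formulae, with fresh variables $t_1,\ldots,t_{m'}$ separating the summands --- and then shows $\shdhsat \in \totp$ by the self-reducibility template (check satisfiability of the Horn disjunction in $\ptime$, spawn a dummy branch, split on a propositional variable, recurse), concluding via closure of $\totp$ under parsimonious reductions. You instead handle the summands by $\ehorn$-SNF plus closure of $\totp$ under sum, and run the self-reducibility directly on the space of pairs $(\bar B,\bar a)$, with your extension test doing on-the-fly what the paper's reduction does once and globally: ground the Horn matrix (FO subformulae evaluate to constants, $\ex{\bar v}\neg X(\bar u,\bar v)$ expands to a disjunction of negated atoms, so Horn-ness is preserved), add unit clauses for the committed atoms, and call propositional Horn-SAT. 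Your route is self-contained and avoids both the completeness theorem and the $t_i$-gadget, at the price of redoing the grounding at every node and of leaning on closure of $\totp$ under sum (which needs the usual care with the $-1$ bookkeeping, though it is a known fact the paper also asserts); the paper's route costs the detour through $\shdhsat$ but buys the completeness result it wants anyway and keeps the $\totp$ argument purely propositional. Two small points to tighten: the coordinates of $\bar x$ range over the $n$-element domain, so the ``$\sigma\cup\{v{=}0\}$, $\sigma\cup\{v{=}1\}$'' step must be replaced at those items by computing the set of extending domain values and spawning exactly that many computation paths (a routine adjustment for an NTM); and it is worth saying explicitly that the existential block $\ex{\bar y}$ is handled inside the extension test by enumerating the polynomially many instantiations of $\bar y$, so that a complete commitment $\sigma$ is accepted precisely when $(\bar B,\bar a)$ satisfies $\ex{\bar y}\,\psi$.
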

In this section, we identify a complete problem for $\eqso(\ehorn)$ under parsimonious reductions. Hence, to prove that $\eqso(\ehorn) \subseteq \totp$, it is enough to prove that such a problem is in $\totp$, as $\totp$ is closed under parsimonious reductions. We give this proof at the end of this section, after the complete problem has been identified.

Interestingly, we have that both $\chsat$ and $\cdnf$ belong to $\eqso(\ehorn)$. 
An imperative question at this point is whether in the definitions of $\uhorn$ and $\ehorn$, it is necessary to allow negative literals of the form $\ex{\v} \neg X(\u,\v)$. Actually, this forces our Horn classes to be included in $\eqso(\logu{2})$ and not necessarily in $\eqso(\loge{2})$. The following result shows that this is indeed the case.

\begin{prop}\label{prop:hsat-not-sigma2}	
$\chsat \not\in \eqso(\loge{2})$.
\end{prop}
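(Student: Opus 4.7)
First, I would invoke Theorem~\ref{theo-pi1-pnf}: since $\logu{1}\subseteq\loge{2}$ and $\loge{2}$ is closed under conjunction and disjunction, every $\eqso(\loge{2})$-formula is equivalent to one in $\loge{2}$-PNF, i.e., of the shape $\sa{\bar X}\sa{\bar x}\,\varphi(\bar X,\bar x)$ with $\varphi\in\loge{2}$. This fragment coincides with the Saluja, Subrahmanyam, and Thakur class $\E{2}$ of~\cite{SalujaST95}, so the proposition reduces to showing $\chsat\notin\E{2}$.

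Suppose, toward a contradiction, that $\chsat=\sem{\alpha}$ on Horn-formula encodings for some $\alpha=\sa{\bar X}\sa{\bar x}\,\exists\bar y\,\forall\bar z\,\varphi$ with $\varphi$ quantifier-free over the vocabulary $\{\pP,\pN,\pV,\pNC,<\}$ of Example~\ref{ex-hornsat-esop1}. The plan is to derive a contradiction via a disjoint-union/locality argument exploiting the multiplicativity $\chsat(\theta_1\uplus\theta_2)=\chsat(\theta_1)\cdot\chsat(\theta_2)$ that holds when $\theta_1$ and $\theta_2$ have disjoint variables and clauses. Taking $\theta_n$ to be $n$ disjoint copies of a small Horn gadget $\theta_0$ with $\chsat(\theta_0)=k\geq 2$ yields $\chsat(\theta_n)=k^n$ on an encoding $\A_{\theta_n}$ of size $O(n)$. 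On the $\E{2}$ side, I would set up a Hanf-style locality decomposition of $\sem{\alpha}$ on a disjoint union: the second-order tuple $\bar X$ splits into pieces supported on each copy, and the inner $\exists\bar y\,\forall\bar z$ block interacts between copies only in a way controlled by the quantifier rank of $\varphi$. A well-chosen $\theta_0$ would then force on $\sem{\alpha}(\A_{\theta_n})$ a structural recurrence of polynomially many terms that cannot match the purely exponential profile $k^n$ uniformly in $n$, producing the contradiction.

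The hardest step is formalizing this locality decomposition in the presence of $\eqso$-quantifiers over second-order variables of unbounded arity, since such quantifiers can witness assignments ranging simultaneously across all $n$ copies of the disjoint union; the bookkeeping of how the $\exists\bar y\,\forall\bar z$ block is ``sliced'' across components requires care beyond classical Hanf locality. A cleaner alternative I would pursue in parallel is a reduction-based argument: identify a function $g$ in $\U{2}\setminus\E{2}$, available through the strict inclusion $\E{2}\subsetneq\U{2}$ shown in Figure~\ref{fig-sfo-eqso}, and exhibit a first-order parsimonious reduction from $g$ to $\chsat$ by translating a general $\Pi_2$ counting problem into the Horn encoding (using auxiliary predicates to flatten nested quantifiers into implication chains). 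Combined with closure of $\E{2}$ under first-order parsimonious reductions, this would contradict $\chsat\in\E{2}$. Either route reduces the combinatorial heart of the argument to the already-established $\sfo$-hierarchy separation; the translation through the Horn encoding of Example~\ref{ex-hornsat-esop1} is routine once the reduction or decomposition has been set up.
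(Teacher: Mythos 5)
Your opening step (rewriting any $\eqso(\loge{2})$ formula in $\loge{2}$-PNF via Theorem~\ref{theo-pi1-pnf}) coincides with the paper, but neither of your two routes closes the argument. The disjoint-union/locality route has a gap that is not mere bookkeeping: duplication arguments of the kind used in Theorem~\ref{theo-pi1-pnf} and Proposition~\ref{prop-rest} rely on preservation of \emph{existential} formulae under extensions, and this fails for the inner $\ex{\bar y}\fa{\bar z}$ block of a $\loge{2}$ formula --- a witness living in one copy of $\A\uplus\A$ can be killed by the universal quantifier ranging over the other copy, so the value on a disjoint union need not relate to the values on the components in any controlled way. Nor can a growth-rate mismatch do the job: $\QE{2}$ contains $\sa{X}\top$, i.e.\ the function $2^{|\A|}$, and exponential profiles such as $k^n$ are perfectly attainable in $\E{2}$, so the ``recurrence with polynomially many terms'' that your argument needs is exactly the unproved (and, as stated, false-in-general) claim.

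The reduction-based alternative fails for a more basic reason: parsimonious reductions preserve positivity of the count, and $L_{\chsat}\in\ptime$ (Horn satisfiability), whereas ``a general $\Pi_2$ counting problem'' includes, e.g., $\ctcnf$, whose decision version is $\np$-complete; a parsimonious reduction of such a $g$ to $\chsat$ would therefore yield $\ptime=\np$, so it cannot be exhibited. To salvage that route you would need a specific $g\in\U{2}\setminus\E{2}$ with easy decision version together with a notion of reduction under which $\E{2}$ is provably closed --- none of which the cited separation $\E{2}\subsetneq\U{2}$ provides, and if the known separation witness were $\chsat$ itself the route would be circular. The paper's actual proof is much more elementary and uses an idea absent from your proposal: write $\alpha=\sa{\bar X}\sa{\bar x}\ex{\bar y}\fa{\bar z}\varphi$, evaluate it on the Horn formula $\Phi=p\wedge\bigwedge_{i=1}^n(t_i\wedge p\to q)\wedge\neg q$ with $n=|\bar x|+|\bar y|+1$, fix a satisfying assignment, choose some $t_\ell$ not mentioned by the first-order witnesses, and delete it from the structure. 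Universal formulae are preserved under induced substructures, so the count stays positive, while the encoded Horn formula acquires the clause $(p\to q)$ and becomes unsatisfiable --- a contradiction. This monotonicity of $\Sigma_2$-positivity under deletion of unused elements, which $\chsat$ does not share, is the key insight your proposal is missing.
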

\begin{proof}
Suppose that the statement is false, that is, $\chsat \in \eqso(\loge{2})$. Consider the signature $\R$ from Example~\ref{ex-hornsat-esop1} and let $\alpha \in \eqso(\loge{2})$ be a formula over $\R$ that defines $\chsat$. By Proposition~\ref{theo-pnf-snf} we know that every formula in $\eqso(\loge{2})$ can be rewritten in $\loge{2}$-PNF, so we can assume that $\alpha$ is of the form $\sa{\bar{X}}\sa{\bar{x}} \ex{\bar{y}} \fa{\bar{z}}\varphi(\bar{X},\bar{x},\bar{y},\bar{z})$. Now, consider the following Horn formula:
$$
\Phi \ = \ p \wedge \bigwedge_{i = 1}^n (t_i \wedge p \to q) \wedge \neg q,
$$
such that $n = \length{\bar{x}} + \length{\bar{y}} + 1$ and let $\A_{\Phi}$ be the encoding of this formula over $\R$. 
One can easily check that $\Phi$ is satisfiable, so $\sem{\alpha}(\A_{\Phi}) \geq 1$. Let $(\bar{B},\bar{b},\bar{a})$ be an assignment to $(\bar{X},\bar{x},\bar{y})$ such that $\A_{\Phi} \models \fa{\bar{z}} \varphi(\bar{B},\bar{b},\bar{a},\bar{z})$ and let $t_{\ell}$ be such that it does not appear in $\bar{b}$ or $\bar{a}$ (recall that $n > \length{\bar{x}} + \length{\bar{y}}$). Consider the induced substructure $\A_{\Phi}'$ that is obtained by removing $t_{\ell}$ from $\A_{\Phi}$ and $\bar{B}'$ as the subset of $\bar{B}$ obtained by deleting each appearance of $t_{\ell}$ in $\bar{B}$. Given that if a universal formula holds in a structure $\A$, then it holds in every induced substructure of $\A$, we have that $\A_{\Phi}'\models \fa{\bar{z}}\varphi(\bar{B}',\bar{b},\bar{a},\bar{z})$. And so, it follows that $\sem{\alpha}(\A_{\Phi}') \geq 1$ which is not possible since $\A_{\Phi}'$ encodes the formula
$$
\Phi' = p \wedge \bigwedge_{i = 1}^{\ell-1} (t_i \wedge p \to q) \wedge (p\to q) \wedge \bigwedge_{i = \ell+1}^{n} (t_i \wedge p \to q) \wedge \neg q,
$$
which is unsatisfiable. This leads to a contradiction and we conclude that $\chsat$ is not in $\eqso(\ehorn)$.

\end{proof}
Next we show that $\eqso(\ehorn)$ is the class we were looking for, as not only every function in $\eqso(\ehorn)$ has an easy decision counterpart, but also $\eqso(\ehorn)$ admits a natural complete problem under parsimonious reductions. More precisely, define 
$\shdhsat$ as the problem of counting the satisfying assignments of a formula $\Phi$ that is a disjunction of Horn formulae. Then we have that:

\begin{thm} \label{sigma2hard}
	$\shdhsat$ is $\eqso(\ehorn)$-complete under parsimonious reductions. 
\end{thm}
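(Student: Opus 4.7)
I would prove the two directions of the claim separately.

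For the membership direction $\shdhsat\in\eqso(\ehorn)$, I would adapt Example~\ref{ex-hornsat-esop1} to the disjunctive setting. Fix a signature $\R$ with unary predicates $\pV$ (propositional variables), $\pD$ (disjunct indices) and $\pNC$ (clauses with no positive literal), and binary predicates $\pC(d,c)$ (clause $c$ belongs to disjunct $d$), $\pP(c,x)$ (positive literal of $c$) and $\pN(c,x)$ (negative literal of $c$). Encode any instance of $\shdhsat$ as an $\R$-structure in the natural way. The counting formula then has shape $\sa{\pT}\sa{\pA}\ex{d}\,\psi(\pT,\pA,d)$, where the Horn formula $\psi$ asserts $\pD(d)$, $\pT\subseteq\pV$, and that every clause $c$ with $\pC(d,c)$ is satisfied by $\pT$, using the auxiliary predicate $\pA$ in exactly the manner of Example~\ref{ex-hornsat-esop1} to convert the inherently non-Horn satisfaction condition into Horn form. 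The resulting formula lies in $\eqso(\ehorn)$ and its semantics coincides with $\shdhsat$ on the encoded inputs.

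For the hardness direction, given $\alpha\in\eqso(\ehorn)$ I would first apply Proposition~\ref{theo-pnf-snf} to put $\alpha$ in $\ehorn$-SNF, say $\alpha=\sum_{i=1}^{n}\sa{\bar X_i}\sa{\bar x_i}\ex{\bar y_i}\psi_i(\bar X_i,\bar x_i,\bar y_i)$ with each $\psi_i$ a Horn formula. Given an ordered $\R$-structure $\A$ with domain $A$, construct in polynomial time the propositional formula $\Phi(\A)=\bigvee_{i,\bar a,\bar b}\Phi_{i,\bar a,\bar b}$ as follows. The propositional variables are, for each $i$ and each $X_k\in\bar X_i$ of arity $r_k$ and each $\bar c\in A^{r_k}$, a variable $x_k^{(i,\bar c)}$ encoding $\bar X_i$, together with \emph{private indicator variables} $z_{i,\bar a}$ for each $i$ and $\bar a\in A^{|\bar x_i|}$. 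For every triple $(i,\bar a,\bar b)$ with $\bar a\in A^{|\bar x_i|}$ and $\bar b\in A^{|\bar y_i|}$, form $\Phi_{i,\bar a,\bar b}$ by grounding $\psi_i(\bar X_i,\bar a,\bar b)$ over $\A$ (evaluating FO-subformulae to constants, unfolding each $\fa{\bar w}$ as a conjunction over $A$, and translating each negative literal $\ex{\bar v}\neg X_k(\bar u,\bar v)$ into a disjunction of negative propositional literals), then conjoining the unit clauses $\{z_{i,\bar a}\}$ and $\{\neg z_{j,\bar a'}\}$ for every $(j,\bar a')\neq(i,\bar a)$. Grounding preserves the ``at most one positive literal'' property, so each $\Phi_{i,\bar a,\bar b}$ is a Horn CNF, and the total size of $\Phi(\A)$ is polynomial since $n$ and all tuple lengths are fixed.

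The counting identity then falls out as follows. The private-indicator unit clauses force any satisfying assignment of $\Phi(\A)$ to pin down a unique pair $(i,\bar a)$ via the $z$-variables. The remaining $x_k^{(i,\bar c)}$-variables then encode a second-order assignment $\bar B$ for $\bar X_i$, and the assignment satisfies $\Phi(\A)$ iff it satisfies $\Phi_{i,\bar a,\bar b}$ for some $\bar b$, which by construction is equivalent to $\exists\bar b:(\A,\bar B,\bar a,\bar b)\models\psi_i$. Summing over $(i,\bar a)$, the number of satisfying assignments of $\Phi(\A)$ equals $\sum_{i=1}^{n}|\{(\bar a,\bar B):\exists\bar b,(\A,\bar B,\bar a,\bar b)\models\psi_i\}|=\sem{\alpha}(\A)$, so $\A\mapsto\Phi(\A)$ is a parsimonious reduction from $\alpha$ to $\shdhsat$.

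The main obstacle is precisely handling the existential quantifier $\ex{\bar y_i}$: a naive Skolemisation would turn $\bar y_i$ into extra propositional variables and count each $\bar B$ once per witness $\bar b$, which would not be parsimonious. The key trick is to place $\bar b$ only in the \emph{disjunct index} of $\Phi(\A)$ while keeping the $(i,\bar a)$-indicator $z_{i,\bar a}$ shared across all $\bar b$; the disjunction then collapses all witnesses of a single $(i,\bar a,\bar B)$ into one satisfying assignment, matching the semantics of $\ex{\bar y_i}$ exactly.
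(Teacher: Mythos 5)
Your membership argument is essentially the paper's: extend the Horn encoding of $\chsat$ from Example~\ref{ex-hornsat-esop1} with a clause-membership predicate and an outermost existential quantifier over the chosen disjunct, which is exactly how the paper shows $\shdhsat\in\eqso(\ehorn)$. The hardness direction, however, has a genuine gap: the reduction you describe is not parsimonious. Once the unit clauses on the $z$-variables pin a satisfying assignment to a unique pair $(i,\bar a)$, the chosen disjunct $\Phi_{i,\bar a,\bar b}$ constrains only those indicators and the propositional variables $x_k^{(i,\bar c)}$ encoding $\bar X_i$; every variable $x_k^{(j,\bar c)}$ belonging to a summand $j\neq i$ occurs nowhere in that disjunct and is therefore left free. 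Hence each satisfying choice of $(\bar a,\bar B)$ for summand $i$ lifts to $2^{M-M_i}$ distinct satisfying assignments of $\Phi(\A)$, where $M$ is the total number of $x$-variables and $M_i$ the number owned by summand $i$, so the number of models of $\Phi(\A)$ is inflated by summand-dependent powers of two and does not equal $\sem{\alpha}(\A)$. Switching to a shared variable pool does not by itself help: the variables for second-order symbols not mentioned by $\psi_i$ would still be unconstrained in the selected disjunct.

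The missing idea is the step the paper inserts before grounding: let $\bar X$ be the union of all the $\bar X_i$ and replace the $i$-th summand by $\sa{\bar X}\sa{\bar x}\ex{\bar y}\big(\cdots\wedge\bigwedge_{X\notin\bar X_i}\fa{\bar u}\,X(\bar u)\big)$, i.e.\ force every second-order variable not used by that summand to be the full relation. After grounding these become positive unit clauses, so Horn-ness is preserved, and now every propositional variable is constrained in each disjunct, restoring the bijection between models of the selected disjunct and the pairs counted by that summand. With this addition your construction matches the paper's: your private indicators $z_{i,\bar a}$ play the role of the paper's fresh variables $t_1,\ldots,t_{m'}$, and your treatment of $\ex{\bar y}$ --- folding the witness $\bar b$ into the disjunct index while sharing the indicator across all $\bar b$ --- is exactly how the paper collapses all witnesses of one second-order assignment into a single model.
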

\begin{proof}

First we prove that $\shdhsat$ is in $\eqso(\ehorn)$. Recall that each instance of $\shdhsat$ is a disjunction of Horn formulae. Let $\R$ be a relational signature such that $\R = \{\pP(\cdot,\cdot), \pN(\cdot,\cdot), \pV(\cdot), \pNC(\cdot), \pD(\cdot,\cdot)\}$. Each symbol in this vocabulary is used to indicate the same as in Example \ref{ex-hornsat-esop1}, with the addition of $\pD(d,c)$ which indicates that $c$ is a clause in the formula $d$. Define $\psi$ as in Example \ref{ex-hornsat-esop1} such that $\sa{\pT} \sa{\pA} \psi(\pT,\pA)$
defines $\chsat$. In order to encode $\shdhsat$, we extend $\psi(\pT,\pA)$ by adding the information of $\pD(d,c)$ as follows:
\begin{align*}
\psi'(T,A) \ := \ \ex{d} \big[ \ & \fa{x} (\neg \pT(x) \vee \pV(x)) \ \wedge\\
& \fa{c} (\neg \pD(c,d)\vee \neg \textit{NC}(c) \vee  \ex{x} \neg \textit{A}(c,x)) \ \wedge\\
& \fa{c} \fa{x} (\neg \pD(c,d)\vee\neg \textit{P}(c,x) \vee \ex{y} \neg \textit{A}(c,y) \vee \textit{T}(x)) \ \wedge\\
& \fa{c} \fa{x} (\neg \pD(c,d)\vee\neg \textit{N}(c,x) \vee \textit{T}(x) \vee \neg \textit{A}(c,x)) \ \wedge\\
& \fa{c} \fa{x}  (\neg \pD(c,d)\vee\textit{A}(c,x) \vee \textit{N}(c,x)) \ \wedge\\
& \fa{c} \fa{x} (\neg \pD(c,d)\vee\textit{A}(c,x) \vee \neg\textit{T}(x)) \ \big].
\end{align*}
One can check that $\psi'(T,A)$ effectively defines $\shdhsat$ as for every disjunction of Horn formulae $\theta = \theta_1\vee\cdots\vee\theta_m$ encoded by an $\R$-structure $\A$, the number of satisfying assignments of $\theta$ is equal to $\sem{\sa{\pT} \sa{\pA} \psi'(\pT,\pA)}(\A)$.  Therefore, we conclude that $\shdhsat \in \eqso(\ehorn)$.

Next, we prove that $\shdhsat$ is hard for $\eqso(\ehorn)$ over each signature~$\R$ under parsimonious reductions. For each $\eqso(\ehorn)$ formula $\alpha$ over $\R$, we define a polynomial-time function $g_{\alpha}$ that receives an $\R$-structure $\A$ and outputs an instance of $\shdhsat$ such that $\sem{\alpha}(\A) = \shdhsat(g_{\alpha}(\A))$. By Proposition~\ref{theo-pnf-snf}, we can assume that $\alpha$ is of the form:
$$
\alpha \ = \ \sum_{i = 1}^{m} \sa{\bar{X}_{i}}\sa{\bar{x}} \ex{\bar{y}} \bigwedge_{j = 1}^{n} \fa{\bar{z}} \varphi^i_j(\bar{X}_{i},\bar{x},\bar{y},\bar{z}),
$$
where each $\varphi^i_j$ is a Horn clause, and each $\bar{X}_{i}$ is a sequence of second-order variables.
Consider $\bar{X}$ as the union of all $\bar{X}_{i}$. We replace each of the $m$ summands in $\alpha$ with
$$
\sa{\bar{X}}\sa{\bar{x}} \ex{\bar{y}} \bigg( \bigwedge_{j = 1}^{n} \fa{\bar{z}} \varphi^i_j(\bar{X}_{i},\bar{x},\bar{y},\bar{z})\wedge\bigwedge_{X\not\in\bar{X}_{i}}\!\!\fa{\bar{u}}X(\bar{u})\bigg),
$$
whose sum is equivalent to $\alpha$.
Now, consider a finite $\R$-structure $\A$ with domain $A$. 
The next transformation of $\alpha$ and $\A$ towards a disjunction of Horn-formulae is to expand each first-order quantifier (i.e.\ $\Sigma{\bar{x}}$,  $\exists\bar{y}$, and $\forall\bar{z}$) by replacing variables with constants.
More specifically, we obtain the following formula that defines the same function as $\alpha$, and it is of polynomial size in the size of $\A$ (recall that $\alpha$ is fixed):
$$
\alpha_{\A} \ = \ \sum_{i = 1}^{m} \sum_{\bar{a}\,\in A^{\length{\bar{x}}}} \sa{\bar{X}}\bigvee_{\bar{b}\,\in A^{\length{\bar{y}}}}\bigg(\bigwedge_{j = 1}^n\bigwedge_{\bar{c}\,\in A^{\length{\bar{z}}}}\varphi^i_j(\bar{X}_i,\bar{a},\bar{b},\bar{c})\wedge \bigwedge_{X \not\in \bar{X}_i}\bigwedge_{\bar{e}\,\in A^{\arity(X)}}\!\!\!\!\!\!\! X(\bar{e})\bigg).
$$

Notice that each first-order subformula in $\varphi^i_j(\bar{X}_i,\bar{a},\bar{b},\bar{c})$ has no free variables and, therefore, we can evaluate each of them in polynomial time and easily rewrite $\alpha_{\A}$ to an equivalent formula that does not have any first-order subformula. In other words, in polynomial time we can replace $\varphi^i_j$ with a disjunction of negative literals of the form $\neg X_{\ell}(\bar d)$ and at most one positive literal of the form $X_{\ell}(\bar d)$, where $\bar d$ is a tuple of constants. After simplifying, grouping and reordering the previous formula, we can obtain an equivalent formula $\alpha_{\A}'$ of the form:
$$
\alpha_{\A}' \ = \ \sum_{i = 1}^{m'}\sa{\bar{X}}\bigvee_{j = 1}^{n_1'}\bigwedge_{k = 1}^{n_2'}\psi^{i}_{j,k}(\bar{X}),
$$
where every $\psi^{i}_{j,k}(\bar{X})$ is a disjunction of 
negative literals of the form $\neg X_{\ell}(\bar d)$ and at most one positive literal of the form $X_{\ell}(\bar d)$, where $\bar d$ is a tuple of constants. 

The idea for the rest of the proof is to show how to obtain $g_{\alpha}(\A)$, i.e.\ an instance of $\shdhsat$, from $\alpha_{\A}'$.
First, if $m' = 1$ and $\alpha_{\A}' = \sa{\bar{X}}\bigvee_{j = 1}^{n_1'}\bigwedge_{k = 1}^{n_2'}\psi_{j,k}(\bar{X})$, then we can define $g_{\alpha}(\A)$ as the propositional formula $\bigvee_{j = 1}^{n_1'}\bigwedge_{k = 1}^{n_2'}\psi_{j,k}(\bar{X})$ over the propositional alphabet $\{X(\bar{e}) \mid X \in \bar{X} \text{ and } \bar{e}\in A^{\arity(X)} \}$. It is straightforward to see that $\bigvee_{j = 1}^{n_1'}\bigwedge_{k = 1}^{n_2'}\psi_{j,k}(\bar{X})$ is a disjunction of Horn formulae, and its number of satisfying assignments is exactly $\sem{\alpha}(\A)$. 
Otherwise, if $m' > 1$, then we can use $m'$ fresh new variables $t_1,\ldots,t_{m'}$ and define:
$$
g_{\alpha}(\A) \ := \ \bigvee_{i=1}^{m'} \, \bigvee_{j = 1}^{n_1'} \, \bigwedge_{k = 1}^{n_2'} \, \psi^i_{j,k}(\bar{X}) \wedge t_i \wedge \bigwedge_{\ell \neq i} \neg t_{\ell}
$$ 
over the propositional alphabet $\{X(\bar{e}) \mid X \in \bar{X} \text{ and } \bar{e}\in A^{\arity(X)} \} \cup \{t_1,\ldots,t_{m'}\}$.
Variables $t_1,\ldots,t_{m'}$ are used to have disjoint sets of propositional assignments for the different disjuncts of the outermost disjunction, which correspond to the summands in the original formula.
One can easily check that $g_{\alpha}(\A)$ is a disjunction of Horn formulae, and that the number of satisfying assignments of $g_{\alpha}(\A)$ is exactly $\sem{\alpha}(\A)$. This covers all possible cases for $\alpha$, and the entire procedure takes polynomial time.
%
%

\end{proof}
Now that we have a complete problem for $\eqso(\ehorn)$, we can provide a simple proof of Proposition~\ref{prop:ehorn-pe}.

\medskip

\noindent{\emph{Proof of Proposition~\ref{prop:ehorn-pe}.}}
%
%
As we mentioned before, $\totp$ is closed under parsimonious reductions, so we only need to show that 
$\shdhsat$ is in $\totp$. For this consider a non-deterministic procedure that receives a $\dhsat$ formula $\Phi$ as input and does the following. First it checks whether $\Phi$ is satisfiable. If it is not, then it stops; otherwise, it creates a dummy branch that simply stops, and continues in the main branch. More precisely, it picks in the main branch a propositional variable $x$ in $\Phi$ and creates two formulae $\Phi_0$ and $\Phi_1$, where $x$ has been replaced by $\perp$ and $\top$, respectively. If only one of these is satisfiable, it continues on this branch with the respective $\Phi_i$, and if both are satisfiable, then it creates a new branch for $\Phi_1$ and continues on this branch with $\Phi_0$. On each branch, it repeats the same instructions until no variables are left to replace. Since $\dhsat$ is in $\ptime$, all the aforementioned checks can be done in polynomial time, so that the procedure takes at most $h(n)$ steps in each branch, where $n$ is the size of $\Phi$ and $h$ is some fixed polynomial. Moreover, the algorithm produces exactly $\shdhsat(\Phi)+1$ branches, from which we conclude that $\shdhsat$ is in $\totp$. 
%

\qed

Finally, it is important to mention that from the previous proof one can easily derive that $\eqso(\ehorn) \equiv \#(\ehorn)$. Therefore, the framework in~\cite{SalujaST95} is enough for defining the class of problems in $\eqso(\ehorn)$.


\section{Adding recursion to QSO}\label{sec:beyond}

We have used weighted logics to give a framework for descriptive complexity of counting complexity~classes. Here, we go beyond weighted logics and give the first steps on defining recursion at the quantitative level.
This goal is not trivial not only because we want to add recursion over functions, but also because 
it is not clear what could be the right notion of ``fixed point''. 
To this end, we show first how to extend $\qso$ with function symbols that are later
used 
to define a natural generalization of LFP to functions.
 As a proof of concept, we show that this notion can be used to capture $\fp$.
Moreover, we use this concept
to define an operator for counting paths in a graph, a natural generalization of the transitive closure operator~\cite{immerman1999descriptive}, and show that this gives rise to a logic that captures~$\shl$. 

We start by defining an extension of $\qso$ with function symbols. Assume that $\fs$ is an infinite set of function symbols, where each $h \in \fs$ has an associated arity denoted by $\arity(h)$. Then the set of $\fqso$ formulae over a signature $\R$ is defined by the following grammar:
\begin{multline}
\label{eq-fqso}
	\alpha \ :=  \ \varphi \ \mid \  s \  \mid \  h(x_1, \ldots, x_\ell) \  \mid \
	(\alpha \add \alpha) \  \mid\  (\alpha \mult \alpha) \  \mid \\
	\sa{x} \alpha \  \mid \
	\pa{x} \alpha \  \mid \
	\sa{X} \alpha \  \mid \
	\pa{X} \alpha,
\end{multline}
where $h \in \fs$, $\arity(h) = \ell$ and $x_1, \ldots, x_\ell$ is a sequence of (not necessarily distinct) first-order variables. Given an $\R$-structure $\A$ with domain $A$, we say that $F$ is a \emph{function assignment} for $\A$ if for every $h \in \fs$ with $\arity(h) = \ell$, we have that $F(h)\colon A^\ell \to \N$. The notion of function assignment is used to extend the semantics of $\qso$ to the case of a quantitative formula of the form $h(x_1, \ldots, x_\ell)$. More precisely, given first-order and second-order assignments $v$ and $V$ for $\A$, respectively, 
we have that:
\begin{align*}
\sem{h(x_1, \ldots, x_\ell)}(\A,v,V,F) &= F(h)(v(x_1),\ldots, v(x_\ell)).
\end{align*}
As for the case of $\qfo$, we define $\fqfo$ disallowing quantifiers $\Sigma X$ and $\Pi X$ in \eqref{eq-fqso}.

It is worth noting that function symbols in $\fqso$ represent functions from tuples to natural numbers, so they are different from the classical notion of function symbol in $\fo$~\cite{L04}. 
Furthermore, a function symbol can be seen as an ``oracle'' that is instantiated by the function assignment. 
To the best of our knowledge, this is the first article to propose this extension on weighted logics, which we think should be further investigated. 

We define an extension of LFP \cite{I86,vardi1982complexity} to allow counting. 
More precisely, the set of $\rqfo(\fo)$ formulae over a signature $\R$, where $\rqfo$ stands for recursive $\qfo$, is defined as an extension of $\qfo(\fo)$ that includes the formula $\clfp{\beta(\x, h)}$, where (1) $\x = (x_1, \ldots, x_\ell)$ is a sequence of $\ell$ distinct first-order variables, (2) $\beta(\x, h)$ is an $\fqfo(\fo)$-formula over $\R$ whose only function symbol is $h$, and (3) $\arity(h) = \ell$. The free variables of the formula $\clfp{\beta(\x,h)}$ are $x_1, \ldots, x_\ell$; in particular, $h$ is not considered to be free.

Fix an $\R$-structure with domain $A$ and a quantitative formula $\clfp{\beta(\x,h)}$ with $\arity(h) = \ell$, and assume that $\F$ is the set of functions $f\colon A^\ell \to \N$. To define the semantics of $\clfp{\beta(\x,h)}$, we first show how $\beta(\x,h)$ can be interpreted as an operator $T_{\beta}$ on $\F$. More precisely, for every $f \in \F$ and tuple $\a = (a_1, \ldots, a_{\ell}) \in A^\ell$, the function $T_{\beta}(f)$ satisfies that:
\begin{align*}
T_{\beta}(f)(\a) &= \sem{\beta(\x, h)}(\A,v,F),
\end{align*}
where $v$ is a first-order assignment  for $\A$ such that $v(x_i) = a_i$ for every $i \in \{1, \ldots, \ell\}$, and $F$ is a function assignment for $\A$ such that $F(h) = f$. 

As for the case of LFP, it would be natural to consider the point-wise partial order $\leq$ on $\F$ defined as $f \leq g$ if, and only if, $f(\bar{a}) \leq g(\bar{a})$ for every $\bar{a} \in A^{\ell}$, and let the semantics of $\clfp{\beta(\x,h)}$ be the least fixed point of the operator $T_\beta$. However, $(\F, \leq)$ is not a complete lattice, so we do not have a Knaster-Tarski Theorem ensuring that such a fixed point exists. Instead, we generalize the semantics of LFP as follows. In the definition of the semantics of LFP, an operator $T$ on relations is considered, and the semantics is defined in terms of the least fixed point of $T$, that is, a relation $R$ such that~\cite{I86,vardi1982complexity}: 
(a) $T(R) = R$, and (b) $R \subseteq S$ for every $S$ such that $T(S) = S$. 
We can view $T$ as an operator on functions if we consider the characteristic function of a relation. Given a relation $R \subseteq A^\ell$, let $\chi_R$ be its characteristic function, that is $\chi_R(\bar a) = 1$ if $\bar a \in R$, and $\chi_R(\bar a) = 0$ otherwise. Then define an operator $T^\star$ on characteristic functions as $T^\star(\chi_R) = \chi_{T(R)}$. Moreover, we can rewrite the conditions defining a least fixed point of $T$ in terms of the operator $T^\star$ if we consider the notion of support of a function. Given a function $f \in \F$, define the support of $f$, denoted by $\support(f)$, as $\{ \bar a \in A^\ell \mid f(\bar a) > 0 \}$. Then given that $\support(\chi_R) = R$, we have that the conditions (a) and (b) are equivalent to the following conditions on $T^\star$:
(a) $\support(T^\star(\chi_R)) = \support(\chi_R)$, and  (b) $\support(\chi_R) \subseteq \support(\chi_S)$ for every $S$ such that  $\support(T^\star(\chi_{S})) = \support(\chi_S)$.
To define a notion of fixed point for $T_\beta$ we simply generalize these conditions. More precisely, a function $f \in \F$ is a {\em s-fixed point} of $T_{\beta}$ if $\support(T_\beta(f)) = \support(f)$, and $f$ is a {\em least s-fixed point} of $T_{\beta}$ if $f$ is a s-fixed point of $T_\beta$ and for every s-fixed point $g$ of $T_\beta$ it holds that $\support(f) \subseteq \support(g)$. The existence of such fixed point is ensured by the following lemma:
\begin{lem}\label{lem-support}
Let $h \in \fs$ such that $\arity(h) = \ell$, and $\beta$ be an $\fqfo(\fo)$-formula over a signature $\R$ such that if a function symbol occurs in $\beta$, then this function symbol is $h$. Moreover, let $\A$ be an $\R$-structure with domain $A$, $f,g\colon A^\ell \to \mathbb{N}$ and $F,G$ be function assignments such that $F(h) = f$ and $F(h) = g$. If $\support(f) \subseteq \support(g)$, then for every first-order and second-order assignments $v$ and $V$, respectively, it holds that:
\begin{center}
if $\sem{\beta}(\A,v,V,F) > 0$, then $\sem{\beta}(\A,v,V,G) > 0$.
\end{center}
\end{lem}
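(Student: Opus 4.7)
The natural plan is to prove the lemma by structural induction on the $\fqfo(\fo)$-formula $\beta$, showing in each case that positivity of $\sem{\beta}$ under $F$ forces positivity under $G$ whenever $\support(f)\subseteq\support(g)$. The whole argument rests on the fact that the only way $\beta$ can ``see'' $f$ or $g$ is through the atomic subformula $h(x_1,\ldots,x_\ell)$, and there the hypothesis on supports is exactly what is needed; all other constructors simply lift positivity monotonically.

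For the base cases, when $\beta$ is an $\fo$-formula $\varphi$ or a constant $s\in\bbN$, the value $\sem{\beta}(\A,v,V,\cdot)$ does not depend on the function assignment at all, so the implication is immediate. The only interesting base case is $\beta = h(x_1,\ldots,x_\ell)$: here $\sem{\beta}(\A,v,V,F)=f(v(x_1),\ldots,v(x_\ell))$, so positivity means $(v(x_1),\ldots,v(x_\ell))\in\support(f)\subseteq\support(g)$, giving $\sem{\beta}(\A,v,V,G)=g(v(x_1),\ldots,v(x_\ell))>0$.

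The inductive cases are all routine once one recalls the semantics from Table~\ref{tab-semantics}. For $\beta=\alpha_1\add\alpha_2$, positivity of the sum under $F$ forces at least one summand to be positive under $F$, hence by the induction hypothesis positive under $G$, so the sum under $G$ is positive. For $\beta=\alpha_1\mult\alpha_2$, positivity of the product under $F$ forces both factors to be positive under $F$; by the induction hypothesis both are positive under $G$, so the product is positive. For $\beta=\sa{x}\alpha$, positivity of the sum over $a\in A$ under $F$ gives some witness $a_0\in A$ with $\sem{\alpha}(\A,v[a_0/x],V,F)>0$, hence by the induction hypothesis $\sem{\alpha}(\A,v[a_0/x],V,G)>0$, so the whole sum under $G$ is positive. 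For $\beta=\pa{x}\alpha$, positivity of the product under $F$ means each factor $\sem{\alpha}(\A,v[a/x],V,F)$ is positive; applying the induction hypothesis pointwise yields positivity of every factor under $G$, and therefore of the product.

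Because $\fqfo$ does not include the second-order quantitative quantifiers $\sa{X}$ and $\pa{X}$, the above cases exhaust the grammar, so the induction is complete. I do not anticipate any serious obstacle: the statement is the correct monotonicity-of-support property for the positive fragment $\fqfo(\fo)$, and the only subtlety is noticing that subtraction is absent, which is why positivity is preserved rather than the stronger point-wise inequality $\sem{\beta}(\A,v,V,F)\le\sem{\beta}(\A,v,V,G)$ (which, incidentally, would also follow by the same induction if one assumed $f\le g$ point-wise, but is not needed here).
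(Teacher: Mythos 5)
Your proof is correct and follows essentially the same route as the paper's: a structural induction on $\beta$ with the constant/$\fo$ and $h(x_1,\ldots,x_\ell)$ base cases and the four quantitative constructors $\add$, $\mult$, $\sa{x}$, $\pa{x}$ handled by the same positivity arguments. No gaps to report.
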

\begin{proof}

We prove the lemma by induction on the structure of $\beta$. First we need to consider the base cases.
\begin{enumerate}
\item Assume that $\beta$ is either a constant $s \in \mathbb{N}$ or an $\fo$-formula $\varphi$. In both cases, function symbol $h$ is not mentioned, so $\sem{\beta}(\A,v,V,F) = \sem{\beta}(\A,v,V,G)$ and it trivially holds that if $\sem{\beta}(\A,v,V,F) > 0$, then $\sem{\beta}(\A,v,V,G) > 0$.

\item Assume that $\beta$ is equal to $h(y_1, \ldots, y_\ell)$, where $y_1$, $\ldots$, $y_\ell$ is a sequence of (non-necessarily pairwise distinct) variables. Let $\bar a = (v(y_1), \ldots, v(y_\ell))$. Then we have that $\sem{\beta}(\A,v,V,F) = F(h)(v(y_1), \ldots, v(y_\ell)) = f(\bar a)$ and $\sem{\beta}(\A,v,V,G) = g(\bar a)$. Given that $\supp(f) \subseteq \supp(g)$, if $f(\bar a) > 0$, then $g(\bar a) > 0$. Hence, we conclude that if $\sem{\beta}(\A,v,V,F) > 0$, then $\sem{\beta}(\A,v,V,G) > 0$.
\end{enumerate}
We now consider the inductive steps. Assume that the property holds for $\fqfo(\fo)$-formulae $\beta_1$, $\beta_2$ and $\delta$
\begin{enumerate}
\setcounter{enumi}{2}
\item Assume that $\beta = (\beta_1 + \beta_2)$. If $\sem{\beta}(\A,v,V,F) > 0$, then 
$\sem{\beta_1}(\A,v,V,F) > 0$ or $\sem{\beta_2}(\A,v,V,F) > 0$. Thus, by induction hypothesis we conclude that $\sem{\beta_1}(\A,v,V,G) > 0$ or $\sem{\beta_2}(\A,v,V,G) > 0$. Hence, we have that $\sem{\beta}(\A,v,V,G) > 0$.

\item Assume that $\beta = (\beta_1 \mult \beta_2)$. If $\sem{\beta}(\A,v,V,F) > 0$, then 
$\sem{\beta_1}(\A,v,V,F) > 0$ and $\sem{\beta_2}(\A,v,V,F) > 0$. Thus, by induction hypothesis we conclude that $\sem{\beta_1}(\A,v,V,G) > 0$ and $\sem{\beta_2}(\A,v,V,G) > 0$. Hence, we have that $\sem{\beta}(\A,v,V,G) > 0$.


\item Suppose that $\beta = \sa{x} \delta$. Then we have that $\sem{\beta}(\A,v,V,F) = \sum_{a \in A} \sem{\delta}(\A,v[a/x],V,F)$ and $\sem{\beta}(\A,v,V,G) = \sum_{a \in A} \sem{\delta}(\A,v[a/x],V,G)$. Thus, if we assume that $\sem{\beta}(\A,v,V,F) > 0$, then there exists $a \in A$ such that $\sem{\delta}(\A,v[a/x],V,F) > 0$. Hence, by induction hypothesis we have that $\sem{\delta}(\A,v[a/x],V,G) > 0$ and, therefore, we conclude that $\sem{\beta}(\A,v,V,G) >0$.


\item Suppose that $\beta = \pa{x} \delta$. Then we have that $\sem{\beta}(\A,v,V,F) = \prod_{a \in A} \sem{\delta}(\A,v[a/x],V,F)$ and $\sem{\beta}(\A,v,V,G) = \prod_{a \in A} \sem{\delta}(\A,v[a/x],V,G)$. Thus, if we assume that $\sem{\beta}(\A,v,V,F) > 0$, then $\sem{\delta}(\A,v[a/x],V,F) > 0$ for every $a \in A$. Hence, by induction hypothesis we have that $\sem{\delta}(\A,v[a/x],V,G) > 0$ for every $a \in A$, and, therefore, we conclude that $\sem{\beta}(\A,v,V,G) >0$.
%
  \qedhere
\end{enumerate}
%

\end{proof}

In the particular case of an $\rqfo(\fo)$-formula $\clfp{\beta(\x, h)}$, Lemma  \ref{lem-support} tell us that if $f,g \in \F$ and $\support(f) \subseteq \support(g)$, then $\support(T_\beta(f)) \subseteq \support(T_\beta(g))$. Hence, as for the case of LFP, this lemma gives us a simple way to compute a least s-fixed point of $T_\beta$. Let $f_0 \in \F$ be a function such that $f_0(\bar a) = 0$ for every $\bar a \in A^\ell$ (i.e.\ $f_0$ is the only function with empty support), and let function $f_{i+1}$ be defined as $T_\beta(f_i)$ for every $i \in \N$. Then there exists $j \geq 0$ such that $\support(f_j) = \support(T_\beta(f_j))$. Let $k$ be the smallest natural number such that $\support(f_{k}) = \support(T_\beta(f_k))$. We have that $f_k$ is a least s-fixed point of $T_\beta$, which is used to define the semantics of $\clfp{\beta(\x, h)}$. More specifically, for an arbitrary first-order assignment $v$ for $\A$:
\begin{align*}
\sem{\clfp{\beta(\x, h)}}(\A,v) &= f_{k}(v(\x)).
\end{align*}

\begin{exa} \label{ex:count-path}
We would like to define an $\rqfo(\fo)$-formula that, given a directed acyclic graph $G$ with $n$ nodes and a pair of nodes $b$, $c$ in $G$, counts the number of paths of length less than $n$ from $b$ to $c$ in $G$. To this end, assume that graphs are encoded using the signature $\R = \{ E(\cdot,\cdot),<\}$, and then define formula $\alpha(x, y, f)$ as follows:
\begin{eqnarray}\label{eq-count-paths-acyclic}
E(x,y) + \sa{z} f(x,z)\cdot E(z,y).
\end{eqnarray}
We have that $\clfp{\alpha(x,y,f)}$ defines our counting function. In fact, assume that $\A$ is an $\R$-structure with $n$ elements in its domain encoding an acyclic directed graph. Moreover, assume that $b,c$ are elements of $\A$ and $v$ is a first-order assignment over $\A$ such that $v(x) = b$ and $v(y) = c$. Then we have that $\sem{\clfp{\alpha(x,y,f)}}(\A,v)$ is equal to the number of paths in $\A$ from $b$ to $c$ of length at most $n-1$.

Assume now that we need to count the number of paths of length less than $n$ from a node $b$ to a node $c$ in a directed graph that is not necessarily acyclic. 
Then we cannot use formula \eqref{eq-count-paths-acyclic}, as the fixed point could be reached too soon without counting all paths of length at most $n-1$. The fixed point of \eqref{eq-count-paths-acyclic} is always reached after $k$ steps, where $k$ is the size of the maximum shortest path between two vertices in the graph. If the graph has a cycle in it, there might be paths of size $k' \in \{k+1,\ldots,n-1\}$ that would not be counted because the fixed point was reached earlier. Thus, we need a more involved formula in the general case, which is given below. 

Suppose that $\varphi_{\text{\rm first}}(x)$ and $\varphi_{\text{succ}}(x,y)$ are $\fo$-formulae defining the first element of $<$ and the successor relation associated to $<$, respectively.
Moreover, define formula $\beta(x, y, t, g)$ as follows:
\begin{align*}
(E(x,y) + \sa{z} g(x,z,t)\cdot E(z,y)) \cdot \varphi_{\text{\rm first}}(t) \ +
\sa{t'} \varphi_{\text{succ}}(t',t) \cdot \left(\sa{x'} \sa{y'} g(x',y',t') \right)
\end{align*}
Then our extended counting function is defined by:
$$
\sa{t} (\varphi_{\text{\rm first}}(t) \cdot \clfp{\beta(x,y,t,g)}).
$$ 
In fact, the number of paths of length at most $n$ from a node $x$ to a node $y$ is recursively computed by using the formula $(E(x,y) + \sa{z} g(x,z,t)\cdot E(z,y)) \cdot \varphi_{\text{\rm first}}(t)$, which stores this value in $g(x,y,t)$ with $t$ the first element in the domain.  The other formula $\sa{t'} \varphi_{\text{succ}}(t',t) \cdot \left(\sa{x'} \sa{y'} g(x',y',t') \right)$ is just an auxiliary artifact that is used as a counter to allow reaching a fixed point in the support of $g$ in $n$ steps. Notice that the use of the filter $\varphi_{\text{succ}}(t',t)$ prevents this formula for incrementing the value of $g(x,y,t)$ when $t$ is the first element in the domain.
\qed
\end{exa}

In contrast to $\lfp$, to reach a fixed point we do not need to impose any positive restriction on the formula $\beta(\x,h)$.
In fact, since $\beta$ is constructed from monotone operations (sum and product) over the natural numbers, the resulting operator $T_{\beta}$ is monotone as well.

Now that a least fixed point operator over functions is defined, the next step is to understand its expressive power.
In the following theorem, we show that this operator can be used to capture $\fp$.
\begin{thm} \label{rqfo-fo-cap}
	$\rqfo(\fo)$ captures $\fp$ over ordered structures.
\end{thm}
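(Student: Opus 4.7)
The plan is to establish the two conditions of Definition~\ref{def:cap}.

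For direction~(1), that every $\rqfo(\fo)$-sentence defines a function in $\fp$, I would proceed by structural induction on the formula. The atomic cases ($\fo$-formulae and numerical constants) and the $\qfo$-combinators (binary sum and product, $\Sigma x$ and $\Pi x$) are handled by routine $\fp$-closure arguments: $\fo$-model-checking over ordered structures is in $\ptime$, and $\fp$ is closed under the aggregation operators of $\qfo$ as long as the operands have polynomial bit-length. The only substantially new case is $\clfp{\beta(\bar{x}, h)}$. On input $\A$ with $|A| = n$, the evaluation computes the sequence $f_0, f_1, \ldots$ with $f_{i+1} = T_\beta(f_i)$ and $f_0 \equiv 0$. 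Since $\support(f_0) = \emptyset \subseteq \support(f_1)$, Lemma~\ref{lem-support} (applied inductively) guarantees that the supports form an increasing chain in $2^{A^\ell}$, hence $\support(f_k) = \support(f_{k+1})$ for some $k \leq |A|^\ell$, which is polynomial in $n$. Each iteration evaluates $\beta$ at every tuple in $A^\ell$ using the previous iterate as the interpretation of $h$; by induction this is a polynomial-time step, provided the intermediate values have polynomial bit-length. Confirming this bit-length invariant is the delicate technical step of the direction.

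For direction~(2), that every $\fp$-function is definable by an $\rqfo(\fo)$-sentence, I would bootstrap from Theorem~\ref{theo:capture-fp}, which already shows that $\qfo(\lfp)$ captures $\fp$. It then suffices to translate each $\qfo(\lfp)$-formula into an equivalent $\rqfo(\fo)$-formula, which reduces to simulating each $\lfp$-subformula $[\lfp_{X,\bar{x}}\,\varphi(X,\bar{x})](\bar{t})$, with $X$ occurring only positively in $\varphi$, by a $\clfp$-construct. I would define a syntactic translation $\tau$ from $\varphi$ to an $\fqfo(\fo)$-formula $\tau(\varphi)(\bar{x},h)$ as follows: each atom $X(\bar{u})$ is replaced by $h(\bar{u})$; each subformula $\psi$ not mentioning $X$ is left unchanged (and is then evaluated as a Boolean atom taking value $0$ or $1$); conjunctions become $\mult$, disjunctions become $\add$, $\exists y$ becomes $\Sigma y$, and $\forall y$ becomes $\Pi y$. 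A straightforward induction on $\varphi$ shows that for every $R \subseteq A^\ell$, $\sem{\tau(\varphi)}$ evaluated with $h$ bound to the characteristic function $\chi_R$ is strictly positive if, and only if, $\varphi$ holds with $X$ bound to $R$. Consequently the support-iteration of $T_{\tau(\varphi)}$ reproduces stage-for-stage the $\lfp$-iteration of $\varphi$, so the support of the least s-fixed point of $T_{\tau(\varphi)}$ equals the relation defined by $[\lfp_{X,\bar{x}}\varphi]$. The $\lfp$-test $[\lfp_{X,\bar{x}}\varphi](\bar{t})$ is then replaced by the Boolean check ``$\clfp{\tau(\varphi)(\bar{x},h)}$ is positive at $\bar{t}$'', implemented inside the surrounding $\rqfo(\fo)$-formula as a multiplicative filter, exactly as in the bit-by-bit encoding used in the proof of Theorem~\ref{theo:capture-fp}.

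I expect the main obstacle to be the bit-length control in direction~(1). While support stabilization bounds the number of iterations polynomially, the numerical values at tuples in the stable support can in principle grow through repeated $\mult$ and $\Pi$ operators applied to $h$-terms across successive iterations. A careful induction on the structure of $\beta$, tracking how each application of $T_\beta$ amplifies the maximum magnitude of its input values as a function of the arities and of $n$, together with the polynomial iteration bound, is the crux of the analysis; once the polynomial bit-length invariant is in hand, direction~(1) follows by the standard polynomial-time simulation of the iterative computation, and combined with the translation above, the capture result follows.
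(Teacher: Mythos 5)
Your direction (2) departs genuinely from the paper, and that is where the argument breaks. The paper does not route through Theorem~\ref{theo:capture-fp} at all: it simulates a deterministic polynomial-time machine with output tape directly by a simultaneous least s-fixed point, and the construction is engineered so that every auxiliary function takes values in $\{0,1\}$ at the relevant arguments (exactly one head position, exactly one active state, each tape predicate at most $1$), with the numerical output accumulated inside one designated function via ``double'' and ``double plus one'' updates. Your plan instead replaces the Boolean test $\Phi(\bar{x})$ of Theorem~\ref{theo:capture-fp} by the quantitative formula $\clfp{\tau(\varphi)(\bar{x},h)}$ and uses it as a multiplicative filter in the bit-reconstruction $\sa{\bar{x}} \clfp{\tau(\varphi)(\bar{x},h)} \cdot \pa{\bar{y}}((\bar{y}<\bar{x})\mapsto 2)$. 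Your stage-for-stage support argument is fine (Lemma~\ref{lem-support} gives that positivity depends only on the support of $h$), but the value of $\clfp{\tau(\varphi)(\bar{x},h)}$ on a tuple of the fixed point is a positive integer that need not be $1$: you translate $\vee$ to $\add$ and $\exists$ to $\Sigma$, so witnesses are counted, and these counts compound across the iterations of $T_{\tau(\varphi)}$. The resulting formula evaluates to $\sum_{\bar{a}} f_k(\bar{a})\cdot 2^{m(\bar{a})}$ rather than $\sum_{\bar{a}\in L} 2^{m(\bar{a})}$, so the binary reconstruction collapses. Since the Boolean layer of $\rqfo(\fo)$ is only $\fo$, there is no generic way to convert ``this quantitative value is positive'' into a $0/1$ filter; avoiding exactly this is why the paper hand-crafts the fixed-point formulas around a deterministic computation in which every count stays at most $1$. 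Repairing your route would require making the fixed point $\{0,1\}$-valued (e.g., choosing $\Phi$ as an LFP formalization of a deterministic computation with unique derivations), which essentially amounts to redoing the paper's construction. A smaller point: bodies of $\clfp{\cdot}$ must be plain $\fqfo(\fo)$-formulae, so nested LFP operators are not directly translatable and you need to invoke the single-LFP normal form over ordered structures first.

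On direction (1) you and the paper run parallel (the paper simply asserts that a fixed $\clfp{\beta(\bar{x},h)}$ is polynomial-time evaluable), but be aware that the ``polynomial bit-length invariant'' you defer is not routine: the support-chain bound controls the number of iterations, yet nothing in the grammar prevents values on the already-positive part of the support from being repeatedly multiplied (through $h(\bar{x})\mult h(\bar{x})$ or $\Pi$) during those up to $|A|^{\ell}$ iterations, so the magnitudes of the entries of $f_k$ need a genuine argument rather than the stabilization bound alone. As written, your proof of this direction rests on an unproved and delicate claim, so the crux you identify is left open rather than resolved.
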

\begin{proof}

\newcommand{\ttB}{\mathtt{B}}
\newcommand{\successor}{\text{succ}}
\newcommand{\out}{\text{\it out}}

Given the definition of the semantics of $\rqfo(\fo)$, it is clear that a fixed formula $\clfp{\beta(\x, h)}$ can be evaluated in polynomial time, from which it is possible to conclude that each fixed formula in $\rqfo(\fo)$ can be evaluated in polynomial time. Thus, to prove that $\rqfo(\fo)$ captures $\fp$, we only need to prove the second condition in Definition \ref{def:cap}.

Let $f$ be a function in $\fp$. We address the case when $f$ is defined for the encodings of the structures of a relational signature $\R = \{ E(\cdot, \cdot) \}$, as the proof for an arbitrary signature is analogous.
 Let $M$ be a deterministic polynomial-time TM with a working tape and an output tape, such that the output of $M$ on input $\enc(\A)$ is $f(\enc(\A))$ for each $\R$-structure $\A$. We assume that $M = (Q,\{0,1\},q_0,\delta)$, 
 where $Q = \{q_0,\ldots,q_{\ell}\}$, and $\delta\colon Q\times\{0,1,\ttB, \vdash\}\to Q\times\{0,1,\ttB, \vdash\}\times \{\leftarrow,\rightarrow\}\times\{0,1,\emptyset\}$ is a partial function. In particular, the tapes of $M$ are infinite to the right so the symbol $\vdash$ is used to indicate the first position in each tape, and $M$ does not have any final states, as it produces an output for each input. Moreover, the only allowed operations in the output tape are: 
 (1) writing 0 and moving the head one cell to the right, (2) writing 1 and moving the head one cell to the right, or (3) doing nothing. These operations are represented by 0, 1, and $\emptyset$, respectively. Finally, assume that $M$, on input $\enc(\A)$ with domain $A = \{1,\dots,n\}$, executes exactly $n^k$ steps on large inputs for a fixed $k \geq 1$. We ignore small inputs since they can be handled separately.

We construct a formula $\alpha$ in an extension of the grammar of $\rqfo(\fo)$ such that $\sem{\alpha}(\A) = f(\enc(\A))$ for each $\R$-structure $\A$. This extension allows defining the operator ${\bf lsfp}$ for multiple functions, analogously to the notion of simultaneous LFP~\cite{L04}.
Let $\bar{x} = (x_1,\ldots,x_k)$ and $\bar{t} = (t_1,\ldots,t_k)$. Then $\alpha$ is defined as:
\begin{align*}
\alpha \ = \ \sa{\bar{t}}\clfp{\out(\bar{t}): \,&\alpha_{T_0}(\bar{t},\bar{x},T_0,T_1,T_{\ttB},T_{\vdash},h,\hat{h},s_{q_0},\ldots,s_{q_{\ell}},\out),\\
	&\alpha_{T_1}(\bar{t},\bar{x},T_0,T_1,T_{\ttB},T_{\vdash},h,\hat{h},s_{q_0},\ldots,s_{q_{\ell}},\out),\\
	&\alpha_{T_{\ttB}}(\bar{t},\bar{x},T_0,T_1,T_{\ttB},T_{\vdash},h,\hat{h},s_{q_0},\ldots,s_{q_{\ell}},\out),\\
	&\alpha_{T_{\vdash}}(\bar{t},\bar{x},T_0,T_1,T_{\ttB},T_{\vdash},h,\hat{h},s_{q_0},\ldots,s_{q_{\ell}},\out),\\
	&\alpha_{h}(\bar{t},\bar{x},T_0,T_1,T_{\ttB},T_{\vdash},h,\hat{h},s_{q_0},\ldots,s_{q_{\ell}},\out),\\
	&\alpha_{\hat{h}}(\bar{t},\bar{x},T_0,T_1,T_{\ttB},T_{\vdash},h,\hat{h},s_{q_0},\ldots,s_{q_{\ell}},\out),\\
	&\alpha_{s_{q_0}}(\bar{t},T_0,T_1,T_{\ttB},T_{\vdash},h,\hat{h},s_{q_0},\ldots,s_{q_{\ell}},\out),\\
	&\vdots \\
	&\alpha_{s_{q_{\ell}}}(\bar{t},T_0,T_1,T_{\ttB},T_{\vdash},h,\hat{h},s_{q_0},\ldots,s_{q_{\ell}},\out),\\
	&\alpha_{\out}(\bar{t},T_0,T_1,T_{\ttB},T_{\vdash},h,\hat{h},s_{q_0},\ldots,s_{q_{\ell}},\out)}\mult \last(\bar{t}).
\end{align*}
In this formula, $T_0,T_1,T_{\ttB},T_{\vdash},h,\hat{h}$ are functions symbols of arity $2 \cdot k$, while $s_{q_0},\ldots,s_{q_{\ell}},\out$ are function symbols of arity $k$. For each one of these function symbols $f$, there is a formula $\alpha_f$ defining how the values of $f$ are updated when computing the fixed point. For example, $\alpha_{T_0}$ is used to define the values of function $T_0$. Notice that the values of each function in $\alpha$ depend on the values of the other functions, which is why we talk about a simultaneous computation. Besides, notice that the notation $\clfp{\out(\bar{t}) : \ldots}$ is used to indicate that (1) the free variables of the formula are $\bar t = (t_1, \ldots, t_k)$, and (2) once the least fixed point has been computed, the value of $\clfp{\out(\bar{t}) : \ldots}$ for an assignment $\bar a$ to $\bar t$ is given by the value (in the fixed point) of function $\out$  on $\bar a$. Finally, it is important to notice that $\alpha$ can be transformed into a proper $\rqfo(\fo)$ formula by using the same techniques used to prove that simultaneous LFP has the same expressiveness as LFP~\cite{L04}; in particular, functions symbols $T_0,T_1,T_{\ttB},T_{\vdash},h,\hat{h},s_{q_0},\ldots,s_{q_{\ell}},\out$ are replaced by a single function $f$ of arity $6 \cdot 2 \cdot k + (\ell + 1) \cdot k + k = (\ell+ 14) \cdot k$.

In the formula $\alpha$, function $T_0$ is used to indicate whether the content of a cell of the working tape is 0 at a certain point of time, that is, $T_0(\bar{t},\bar{x}) > 0$ if the cell at position $\bar{x}$ of the working tape contains the symbol 0 at step $\bar{t}$, and $T_0(\bar{t},\bar{x}) = 0$ otherwise. Functions $T_1$, $T_{\ttB}$ and $T_{\vdash}$ are defined analogously. Function $h$ is used to indicate whether the head of the working tape is in a specific position at a certain point of time, that is, $h(\bar{t},\bar{x}) > 0$ if the head of the working tape is at position $\bar{x}$ at step $\bar{t}$, and $h(\bar{t},\bar{x}) = 0$ otherwise. 
Function $\hat{h}$ is used to indicate whether the head of the working tape is {\bf not} in a position at a particular point of time, that is, $\hat{h}(\bar{t},\bar{x}) > 0$ if the head of the working tape is {\bf not} at position $\bar{x}$ at step $\bar{t}$, and $\hat{h}(\bar{t},\bar{x}) = 0$ otherwise. For each $i \in \{0, \ldots, \ell\}$, function $s_{q_i}$ is used to indicate whether the TM $M$ is in state $q_i$ at a certain point of time, that is, $s_{q_i}(\bar{t}) > 0$ if the TM $M$ is in state $q_i$ at step $\bar{t}$, and $s_{q_i}(\bar{t}) = 0$ otherwise. Function $\out$ stores the output of the TM $M$; in particular, $\out(\bar t)$ is the value returned by $M$ when $\bar t$ is the last step. Finally, assuming that $\varphi_{\text{\rm last}}(x)$ is an $\fo$-formula defining the last element of $<$, we have that $\last(\bar t) = \bigwedge_{i=1}^k \varphi_{\text{\rm last}}(t_i)$, that is, $\last(\bar t)$ holds if $\bar t$ is the last step. Therefore, the use of $\last(\bar t)$ in $\alpha$ allows us to return the output of the TM $M$. 

As in Example \ref{ex:count-path}, assume that $\varphi_{\text{\rm first}}(x)$ is an $\fo$-formula defining the first element of $<$. Moreover, assume that $\first(\bar t)$ and $\succesor(\bar{t},\bar{t'})$ are $\fo$-formulae such that 
$\first(\bar t)$ holds if $\bar t$ is the first step and $\succesor(\bar{t},\bar{t'})$ holds if $\bar t'$ is the successor step of $\bar t$ (that is, $\succesor(\cdot,\cdot)$ is the successor relation associated to the lexicographical order induced by $<$ on the tuples with $k$ elements).  Then formula $\alpha_{T_{\vdash}}$ is defined as follows:
\begin{multline*}
\alpha_{T_{\vdash}}(\bar{t},\bar{x},T_0,T_1,T_{\ttB},T_{\vdash},h,\hat{h},s_{q_0},\ldots,s_{q_{\ell}},\out) = \\
(\first(\bar{t}) \wedge \first(\bar{x})) +
\sa{\bar{t}'}(\successor(\bar{t}',\bar{t}) \mult \hat{h}(\bar{t}',\bar{x}) \mult T_{\vdash}(\bar{t}',\bar{x})) \ + \\
\sum_{(q,a) \,:\, \delta(q,a) = (\_,\vdash,\_,\_)}\sa{\bar{t}'}(\successor(\bar{t}',\bar{t}) \mult h(\bar{t}',\bar{x}) \mult T_a(\bar{t}',\bar{x}) \mult s_{q}(\bar{t}')).
\end{multline*}
Notation $\sum_{(q,a) \,:\, \delta(q,a) = (\_,\vdash,\_,\_)}$ in this formula is used to indicate that we are considering a sum over all pairs $(q,a) \in Q \times \{0,1,\ttB,\vdash\}$ such that $\delta(q,a) = (q',\vdash,u,v)$ for some $q' \in Q$, $u \in \{\leftarrow,\rightarrow\}$ and $v \in \{0,1,\emptyset\}$.
Formulae $\alpha_{T_0}$, $\alpha_{T_1}$ and $\alpha_{T_{\ttB}}$ are defined analogously; for the sake of presentation, we only show here how $\alpha_{T_0}$ is defined, assuming that $\bar y = (y_1, \ldots, y_k)$:
\begin{multline*}
\alpha_{T_0}(\bar{t},\bar{x},T_0,T_1,T_{\ttB},T_{\vdash},h,\hat{h},s_{q_0},\ldots,s_{q_{\ell}},\out) = \\ 
(\first(\bar{t}) \wedge \exists\bar{y}(\varphi_{\text{\rm first}}(y_1) \wedge \cdots \wedge \varphi_{\text{\rm first}}(y_{k-2}) \wedge\neg E(y_{k-1},y_k) \wedge \successor(\bar{y},\bar{x}) ))\ + \\
\sa{\bar{t}'}(\successor(\bar{t}',\bar{t}) \mult \hat{h}(\bar{t}',\bar{x}) \mult T_0(\bar{t}',\bar{x})) + \\
\sum_{(q,a) \,:\, \delta(q,a) = (\_,0,\_,\_)}\sa{\bar{t}'}(\successor(\bar{t}',\bar{t}) \mult h(\bar{t}',\bar{x}) \mult T_a(\bar{t}',\bar{x}) \mult s_{q}(\bar{t}')).
\end{multline*}
Notice that in the initial step, the encoding of the structure $\A$ has to be placed on the working tape of $M$ from the second position since the symbol $\vdash$ is placed in the first position. This is the reason why we use tuple $\bar y$ and define $\bar x$ as the successor of $\bar y$.
Formula $\alpha_{h}$ is defined as follows:
\begin{align*}
\alpha_{h}(\bar{t},\bar{x},&T_0,T_1,T_{\ttB},T_{\vdash},h,\hat{h},s_{q_0},\ldots,s_{q_{\ell}},\out) = \\
&(\first(\bar{t}) \wedge \successor(\bar{t},\bar{x})) + \\
&\sum_{(q,a) \,:\, \delta(q,a) = (\_,\_,\leftarrow,\_)}\sa{\bar{t}'}\sa{\bar{x}'}(\successor(\bar{t}',\bar{t}) \mult \successor(\bar{x},\bar{x}')\mult h(\bar{t}',\bar{x}') \mult T_a(\bar{t}',\bar{x}') \mult s_{q}(\bar{t}')) \ + \\
&\sum_{(q,a) \,:\, \delta(q,a) = (\_,\_,\rightarrow,\_)}\sa{\bar{t}'}\sa{\bar{x}'}(\successor(\bar{t}',\bar{t}) \mult \successor(\bar{x}',\bar{x})\mult h(\bar{t}',\bar{x}') \mult T_a(\bar{t}',\bar{x}') \mult s_{q}(\bar{t}')).
\end{align*}
Similarly, formula $\alpha_{\hat{h}}$ is defined as follows:
\begin{align*}
\alpha_{\hat{h}}(\bar{t},\bar{x},&T_0,T_1,T_{\ttB},T_{\vdash},h,\hat{h},s_{q_0},\ldots,s_{q_{\ell}},\out) = \\
&(\first(\bar{t}) \wedge \neg\successor(\bar{t},\bar{x})) + \\
&\sum_{(q,a) \,:\, \delta(q,a) = (\_,\_,\leftarrow,\_)}\sa{\bar{t}'}\sa{\bar{x}'}\sa{\bar{x}''}(\successor(\bar{t}',\bar{t}) \mult h(\bar{t}',\bar{x}') \mult T_a(\bar{t}',\bar{x}') \mult s_{q}(\bar{t}') \, \mult \\
&\hspace{250pt} \successor(\bar{x}'',\bar{x}') \mult (\bar{x} \neq \bar{x}'')) \ + \\
&\sum_{(q,a) \,:\, \delta(q,a) = (\_,\_,\rightarrow,\_)}\sa{\bar{t}'}\sa{\bar{x}'}\sa{\bar{x}''}(\successor(\bar{t}',\bar{t}) \mult h(\bar{t}',\bar{x}') \mult T_a(\bar{t}',\bar{x}') \mult s_{q}(\bar{t}') \, \mult \\
&\hspace{250pt} \successor(\bar{x}',\bar{x}'') \mult (\bar{x} \neq \bar{x}'')).
\end{align*}
Notice that in the definition of $\alpha_{\hat{h}}$, $\fo$-formula $\bar{x} \neq \bar{x}''$ is used to indicate that tuples $\bar{x}$ and $\bar{x}''$ are distinct (that is, there exists $i \in \{1, \ldots, k\}$ such that the $i$-th components of $\bar{x}$ and $\bar{x}''$ are distinct). 
Formula $\alpha_{s_{q_0}}$ is defined as:
\begin{multline*}
\alpha_{s_{q_0}}(\bar{t},T_0,T_1,T_{\ttB},T_{\vdash},h,\hat{h},s_{q_0},\ldots,s_{q_{\ell}},\out) \ = \\ 
\first(\bar{t}) \ + \
\sum_{(q,a) \,:\, \delta(q,a) = (q_0,\_,\_,\_)}\sa{\bar{t}'}\sa{\bar{x}'}(\successor(\bar{t}',\bar{t}) \mult h(\bar{t}',\bar{x}') \mult T_a(\bar{t}',\bar{x}') \mult  s_{q}(\bar{t}')).
\end{multline*}
Moreover, for every $i \in \{1, \ldots, \ell\}$, formula $\alpha_{s_{q_i}}$ is defined analogously.
Finally, formula $\alpha_{\out}$ is defined as:
\begin{align*}
\alpha_{\out}(&\bar{t},T_0,T_1,T_{\ttB},T_{\vdash},h,\hat{h},s_{q_0},\ldots,s_{q_{\ell}},\out) =\\
&\sum_{(q,a) \,:\, \delta(q,a) = (\_,\_,\_,0)}\sa{\bar{t}'}\sa{\bar{x}'}(\successor(\bar{t}',\bar{t}) \mult h(\bar{t}',\bar{x}') \mult T_a(\bar{t}',\bar{x}') \mult s_{q}(\bar{t}') \mult 2\mult \out(\bar{t}')) \ + \\
&\sum_{(q,a) \,:\,\delta(q,a) = (\_,\_,\_,1)}\sa{\bar{t}'}\sa{\bar{x}'}(\successor(\bar{t}',\bar{t}) \mult h(\bar{t}',\bar{x}') \mult T_a(\bar{t}',\bar{x}') \mult s_{q}(\bar{t}') \mult (2\mult \out(\bar{t}')+1)) \ + \\ 
&\sum_{(q,a) \,:\,\delta(q,a) = (\_,\_,\_,\emptyset)}\sa{\bar{t}'}\sa{\bar{x}'}(\successor(\bar{t}',\bar{t}) \mult h(\bar{t}',\bar{x}') \mult T_a(\bar{t}',\bar{x}') \mult s_{q}(\bar{t}') \mult \out(\bar{t}')).
\end{align*}
Clearly, tuple $\bar{t}$ encodes the number of steps the machine has done and in each iteration of the fixed point operator, one timestep of the machine is executed.
Assume that $\bar{a}$, $\bar{a}'$ are tuples with $k$ elements such that $\bar{a}'$ is the successor of $\bar{a}$. 
From the construction of the formula, and since the machine is deterministic, it can be seen that for each function $f \in\{T_0,T_1,T_{\ttB},T_{\vdash},h,\hat{h}\}$, at the $\bar{a}$-th iteration of the fixed point operator, it holds that $f(\bar{a},\bar{b}) \leq 1$ and  $f(\bar{a}',\bar{b}) = 0$ for every $\bar{b}\in A^k$. In the same way, it can be seen that (1) for each function $g \in\{s_{q_1},\ldots,s_{q_{\ell}}\}$, at the $\bar{a}$-th iteration of the fixed point operator, it holds that $g(\bar{a}) \leq 1$ and $g(\bar{a}') = 0$, that (2) for each $\bar{b}$, only one of the functions $T_0$,$T_1$,$T_B$,$T_{\vdash}$ outputs 1 on input ($\bar{a}, \bar{b})$, (3) for each $\bar{b}$, only one of the functions $h,\hat{h}$ outputs 1 on input $(\bar{a}, \bar{b})$, (4) there is exactly one $\bar{b}$ with $h(\bar{a}, \bar{b}) = 1$, and (5) there is exactly one $i$ such that $s_{q_i}(\bar{a}) = 1$. From this, we have that at the iteration $\bar{a}'$ of the fixed point operator, $\out(\bar{a}')$ is equal to either $2\cdot \out(\bar{a})$, $2\cdot \out(\bar{a}) + 1$, or $\out(\bar{a})$, which represents precisely the value in the output tape at each step of $M$ running on the input $\enc(\A)$. From this argument, it can be seen that $\sem{\alpha}(\A) = f(\enc(\A))$, which concludes the proof of the theorem.

\end{proof}
Our last goal in this section is to use the new characterization of $\fp$ to explore classes below it.
It was shown in \cite{I86,I88} that $\fo$ extended with a transitive closure operator captures $\nlog$. 
Inspired by this work, we show that a restricted version of $\rqfo$ can be used to capture $\shl$, the counting version of $\nlog$. 
Specifically, we use $\rqfo$ to define an operator for counting the number of paths in a directed graph, which is what is needed to capture~$\shl$.

Given a relational signature $\R$, the set of transitive $\qfo$ formulae ($\tqfo$-formulae) is defined as the extension of $\qfo$ by the additional operator $[\pth \psi(\bar{x},\bar{y})]$, where $\psi(\x, \y)$ is an $\fo$-formula over $\R$, and $\bar{x} = (x_1, \ldots, x_k)$, $\bar{y} = (y_1, \ldots, y_k)$ are tuples of pairwise distinct first-order variables. The semantics of $[\pth \psi(\bar{x},\bar{y})]$ can easily be defined in terms of $\rqfo(\fo)$ as follows. 
Given an $\R$-structure $\A$ with domain $A$, define a (directed) graph $\cG_{\psi}(\A) = (N,E)$ such that $N = A^k$ and for every pair $\bar b, \bar c \in N$, it holds that $(\bar b, \bar c) \in E$ if, and only if, $\A \models \psi(\bar b, \bar c)$.
As was the case for Example~\ref{ex:count-path}, we can count the paths of length at most $|A^k|$ in $ \cG_{\psi}(\A)$ with the formula $\beta_{\psi(\bar{x},\bar{y})}(\x, \y, \t, g)$:
\begin{align*}
(\psi(\bar{x},\bar{y}) + \sa{\z} g(\x,\z,\t)\cdot \psi(\z,\y)) \cdot \varphi_{\text{\rm first-lex}}(\t) \ +
\sa{\t'} \varphi_{\text{succ-lex}}(\t',\t) \cdot \left(\sa{\x'} \sa{\y'} g(\x',\y',\t') \right),
\end{align*}
where $\varphi_{\text{\rm first-lex}}$ and $\varphi_{\text{succ-lex}}$ are $\fo$-formulae defining the first and successor predicates over tuples in $A^k$, following the lexicographic order induced by~$<$.
Then the semantics of the path operator can be defined by using the following definition of $[\pth \psi(\bar{x},\bar{y})]$ in $\rqfo$:
\begin{eqnarray*}
[\pth \psi(\bar{x}, \bar{y})] & := & \sa{\t} (\varphi_{\text{\rm first}}(\t) \cdot \clfp{\beta_{\psi(\bar{x},\bar{y})}(\x,\y,\t,g)}).
\end{eqnarray*}
In other words, $\sem{[\pth \psi(\bar{x}, \bar{y})]}(\A,v)$ counts the number of paths from $v(\bar x)$ to $v(\bar y)$ in the graph $\cG_{\psi}(\A)$ whose length is at most~$|A^k|$.
As mentioned before, the operator for counting paths is exactly what we need to capture $\shl$.
\begin{thm} \label{tqfo-shl}
	$\tqfo(\fo)$ captures $\shl$ over ordered structures.
\end{thm}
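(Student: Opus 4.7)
The plan is to verify the two conditions of Definition~\ref{def:cap}. For (1), one shows $\tqfo(\fo) \subseteq \shl$ by structural induction on $\tqfo(\fo)$-formulae, using well-known closure properties of $\shl$; for (2), one shows $\shl \subseteq \tqfo(\fo)$ by encoding the configuration graph of a logspace NTM with an $\fo$-formula and counting paths in it via $[\pth \cdot]$.

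For condition (1), the key observation is that $\shl$ is closed under the operations used to build $\tqfo(\fo)$-formulae. A constant $s\in\bbN$ is trivially in $\shl$; given an $\fo$-formula $\varphi$ over $\R$, deciding whether $(\A,v,V)\models\varphi$ is in $\text{L}$ and thus the characteristic function lies in $\shl$. If $f,g\in\shl$ are witnessed by logspace NTMs $M_f, M_g$, then $f+g$ is computed by a machine that nondeterministically branches and simulates either $M_f$ or $M_g$, while $f\cdot g$ is computed by sequentially simulating $M_f$ and then $M_g$, reusing the workspace and accepting only if both accept. Likewise, $\sa{x}\alpha$ and $\pa{x}\alpha$ are handled by iterating through the $n$ elements of the domain: for sum one nondeterministically chooses an element and simulates; for product one processes the elements one after another in sequence, reusing workspace, and accepts only if all sub-simulations accept. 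Finally, for $[\pth\psi(\bar x,\bar y)]$ with $\psi$ an $\fo$-formula and $|\bar x|=|\bar y|=k$, counting paths of length at most $|A|^k$ between two fixed $k$-tuples in the graph $\cG_\psi(\A)$ can be done in logspace: nondeterministically guess a sequence of at most $|A|^k$ intermediate $k$-tuples one node at a time (each requiring only $O(k\log n)$ bits of workspace to store the current node and a counter), verifying each transition using the fact that $\psi$ is checkable in $\text{L}$.

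For condition (2), let $f \in \shl$ be witnessed by a logspace NTM $M$ that halts on every branch, and consider $\R$-structures $\A$ with domain $A$ of size $n$. Since $M$ uses $O(\log n)$ workspace, each configuration of $M$ on input $\enc(\A)$ (comprising state, input head position, worktape contents, and worktape head position) can be encoded by a $k$-tuple of elements of $A$ for some constant $k$ depending only on $M$. Since $\A$ is ordered, using the standard techniques from descriptive complexity (see \cite{immerman1999descriptive,L04,I88}) one defines $\fo$-formulae $\init(\bar z)$, $\accept(\bar z)$, and a one-step transition formula $\psi_M(\bar x, \bar y)$ expressing that configuration $\bar y$ follows from configuration $\bar x$ in a single nondeterministic step of $M$ on input $\enc(\A)$. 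The number of accepting runs of $M$ is then equal to the number of paths in $\cG_{\psi_M}(\A)$ from the unique initial configuration to any accepting configuration; since $M$ halts in polynomial time, such paths have length at most $n^k=|A|^k$, exactly the bound provided by $[\pth\cdot]$. Hence $f$ is captured by the $\tqfo(\fo)$-sentence
\[
\sa{\bar x}\sa{\bar y}\,\big(\init(\bar x)\wedge \accept(\bar y)\big)\cdot [\pth\,\psi_M(\bar x,\bar y)].
\]

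The main obstacle I expect is condition (1) for the path operator: one must convince oneself that counting paths of length up to $|A|^k$ in an $\fo$-definable graph stays within $\shl$ even when the graph has cycles (so the path count can be exponential). This works because the $\shl$ machine guesses the path one vertex at a time using only $O(k\log n)$ bits for the current vertex and a polynomially bounded step counter, so the logspace budget is preserved regardless of the number of paths. The other routine concern is handling $\pa{x}$ with $\shl$-closure, which as indicated above works by sequentially composing simulations and reusing workspace, accepting iff all sub-simulations accept.
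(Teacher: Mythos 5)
Your proposal is correct and follows essentially the same route as the paper: condition (1) by structural induction using closure of $\shl$ under the QSO operations with $[\pth\cdot]$ handled by the standard logspace path-counting procedure, and condition (2) by encoding configurations of the logspace NTM as $k$-tuples and applying $[\pth\cdot]$ to an $\fo$-definable one-step transition relation on the configuration graph. The only cosmetic difference is that the paper obtains the initial/accepting/next formulae via the known $\tc$-captures-$\nlog$ construction and normalizes to a unique halting accepting configuration, while you build them directly and sum over accepting configurations, which is an inessential variation.
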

\begin{proof}

First, we show that every formula in $\tqfo(\fo)$ defines a function that is in $\shl$.
Let $\R$ be a relational signature and $\alpha$ a formula over $\R$ in $\tqfo(\fo)$. 
We construct a logarithmic-space nondeterministic Turing Machine $M_{\alpha}$ that on input $(\enc(\A),v)$, where $\A$ is an $\R$-structure and $v$ is a first-order assignment for $\A$, has $\sem{\alpha}(\A,v)$ accepting runs (so that we can conclude that the function defined by $\alpha$ is in $\shl$). 
Suppose that the domain of $\A$ is $A = \{1,\ldots,n\}$. The TM $M_{\alpha}$ needs $\ell \mult\log_2(n)$ bits of memory to store the first-order variables that appear in $\alpha$, where $\ell$ is the number of different variables in this formula (and also the size of the domain of $v$). 
If $\alpha = \varphi$, where $\varphi$ is an $\fo$-formula, then we check if $(\A,v)\models\varphi$ in deterministic logarithmic space, and accept if and only if this condition holds. If $\alpha = s$, where $s$ is a fixed natural number, then we generate $s$ branches and accept in all of them. 
If $\alpha = (\alpha_1 + \alpha_2)$, we simulate $M_{\alpha_1}$ and $M_{\alpha_2}$ on separate branches. If $\alpha = (\alpha_1\mult\alpha_2)$, we simulate $M_{\alpha_1}$ and if it accepts, then instead of accepting we simulate $M_{\alpha_2}$. If $\alpha = \sa{x}\beta$, for each $a\in A$ we generate a different branch where we simulate $M_{\beta}$ with input $v[a/x]$. If $\alpha = \pa{x}\beta$, we simulate $M_{\beta}$ with input $v[1/x]$, and on each accepting run, instead of accepting we simulate $M_{\beta}$ with input $v[2/x]$, and so on.
If $\alpha = [\pth \varphi(\bar{x},\bar{y})]$, where $\varphi$ is an $\fo$-formula, we simulate the $\shl$ procedure that counts the number of paths of a given length from a source to a target node in an input graph (where the length is at most the number of nodes in the graph).

Second, we show that every function in $\shl$ can be encoded by a formula in $\tqfo(\fo)$.
Let $f$ be a function in $\shl$ and $M$ a logarithmic-space nondeterministic  Turing Machine such that $\tma_M(\enc(\A)) = f(\enc(\A))$. We assume that $M$ has only one accepting state, and that no transition is defined for this state. Moreover, we assume that there exists only one accepting configuration. We make use of transitive closure logic ($\tc$) to simplify our proof~\cite{G07}. We have that $\tc$ captures $\nlog$\cite{I83}, so that there exists a formula $\varphi$ in $\tc$ such that $\A\models\varphi$ if and only if $M$ accepts $\enc(\A)$. This formula can be expressed as:
$$
\varphi = \exists\bar{u}\exists\bar{z}(\psi_{\text{initial}}(\bar{u})\wedge \psi_{\text{acc}}(\bar{z})\wedge[{\bf tc}_{\bar{x},\bar{y}}\,\psi_{\text{next}}(\bar{x},\bar{y})](\bar{u},\bar{z})),
$$
where $\psi_{\text{initial}}(\bar{u})$ is an $\fo$-formula that indicates that $\bar{u}$ is the initial configuration, $\psi_{\text{acc}}(\bar{z})$ is an $\fo$-formula that indicates that $\bar{z}$ is an accepting configuration, and $\psi_{\text{next}}(\bar{x},\bar{y})$ is an $\fo$-formula that indicates that $\bar{y}$ is a successor configuration of $\bar{x}$~\cite{G07}. Here, there is a one-to-one correspondence between configurations of $M$ and assignments to $\bar{z}$. As a consequence, given a structure $\A$ and a first-order assignment $v$ for $\A$, where $v(\bar{x})$ is the starting configuration and $v(\bar{y})$ is the sole accepting configuration, the value of $\sem{[\pth\psi_{\text{next}}(\bar{x},\bar{y})]}(\A,v)$ is equal to $\tma_M(\enc(\A))$.
Therefore, the $\tqfo(\fo)$-formula
$
\alpha = \sa{\bar{u}}\sa{\bar{z}}(\psi_{\text{initial}}(\bar{u})\mult\psi_{\text{acc}}(\bar{z})\mult[\pth \psi_{\text{next}}(\bar{u},\bar{z})])
$
satisfies that $\sem{\alpha}(\A) = f(\enc(\A))$. This concludes the proof of the theorem.
%

\end{proof}
This last result 
perfectly 
illustrates the benefits of our logical framework for the development of descriptive complexity for counting 
complexity classes.  
The distinction in the language between the Boolean and the quantitative level allows us to define operators at the latter level that cannot be defined at the former. 
As an example showing how fundamental this separation is, consider the issue of extending $\qfo(\fo)$ at the Boolean level in order to capture $\shl$. The natural alternative to do this is to use $\fo$ extended with a transitive closure operator, which is denoted by $\tc$. But then the problem is that for every language $L \in \nlog$, it holds that its characteristic function $\chi_L$ is in $\qfo(\tc)$, where $\chi_L(x) = 1$ if $x \in L$, and $\chi_L(x) = 0$ otherwise. Thus, if we assume that $\qfo(\tc)$ captures $\shl$ (over ordered structures), then we have that $\chi_L \in \shl$ for every $L \in \nlog$. This would imply that $\nlog = \ulog$,\footnote{A decision language $L$ is in $\ulog$ is there exists a logarithmic-space NTM $M$ accepting $L$ and satisfying that $\tma_M(x) = 1$ for every $x \in L$.} solving an outstanding open problem \cite{Reinhardt97}.


\section{Concluding remarks and future work}\label{sec:conclusions}

We proposed a framework based on Weighted Logics to develop a descriptive complexity theory for complexity classes of functions.
We consider the results of this paper as a first step in this direction.
Consequently, there are several directions for future research, some of which are mentioned here. 
$\totp$ is an interesting counting complexity class as it naturally defines a class of functions in $\shp$ with easy decision counterparts. However, we do not have a logical characterization of this class.
Similarly, we are missing logical characterizations of other fundamental complexity classes such as $\spanl$~\cite{AlvarezJ93}. We would also like to define a larger syntactic subclass of $\shp$ where each function admits an FPRAS; notice that $\cpm$ is an important problem admitting an FPRAS\cite{JSV04} that is not included in the classes defined in Section \ref{sec-clo}. Moreover, by following the approach proposed in
\cite{I83}, we would like to include second-order free variables in the operator for counting paths introduced in Section \ref{sec:beyond}, in order to have alternative ways to capture $\fpspace$ and even $\shp$. 

Finally, an open problem is to understand the connection of this framework with enumeration complexity classes. For example, in~\cite{durandS11} the enumeration of assignments for first-order formulas was studied following the $\sfo$-hierarchy, and in~\cite{AmarilliBJM17} the efficient enumeration (i.e.\ constant delay enumeration~\cite{Segoufin13}) was shown for a particular class of circuits. It would be interesting then to adapt the $\eqso(\fo)$-hierarchy for the context of enumeration and identify subclasses that also admit good properties in terms of enumeration.


\section{Acknowledgements}
The authors are grateful to Luis Alberto Croquevielle for providing the proof of Proposition \ref{prop:hsat-not-sigma2}.
This research has been supported by the Fondecyt grant 1161473 and the Millennium Institute for Foundational Research on Data.

\bibliographystyle{alpha}
\bibliography{biblio}

\end{document}